\keywords{Cut-rank, transductions, monadic second-order logic}
\begin{document}

\title{Rank-decreasing transductions}
 \author{Miko{\l}aj Boja\'nczyk and Pierre Ohlmann}
 \begin{abstract}
    We propose to study transformations on graphs, and more generally structures, by looking at how the cut-rank (as introduced by Oum) of subsets is affected when going from the  input structure to the output structure. We consider transformations in which the underlying sets are the same for both the input and output, and so the cut-ranks of subsets can be easily compared. The purpose of this paper is to give a characterisation of logically defined transductions that is expressed in purely structural terms, without referring to logic:  transformations which decrease the cut-rank, in the asymptotic sense, are exactly those that can be defined in monadic second-order logic. This characterisation assumes that the transduction has  inputs of bounded treewidth; we also show that the characterisation fails in the absence of any assumptions.
\end{abstract}

\maketitle

\section{Introduction}
One of the central themes of logic in computer science is the  equivalence of recognisability (by automata, or algebras) and definability in logic, see~\cite{Thomas97} and~\cite{bojanczyk_recobook} for surveys. This equivalence was proved originally for strings by B\"uchi~\cite[Thms 1 and 2]{Buchi60}, Elgot~\cite[Thm 5.3]{Elgot61} and Trakhtenbrot~\cite[Thms 1 and 2]{trakthenbrot1958}. It has been extended to many other structures, such as infinite words~\cite[Thm 1]{Buchi62}, or finite~\cite[Thm 17]{thatcher1968generalized} and infinite trees~\cite[Thm 1.7]{rabinDecidabilitySecondOrderTheories1969}. The logic that appears in such theorems is usually some variant of monadic second-order logic \mso. 

In words and trees, there are usually natural automata models, and the challenges are mainly about dealing with infinity. In graphs,  the challenge is different -- it is not clear how a graph should be parsed by an automaton.
The study of recognisability and logic for graphs started in the  late 1980's, mainly with the work of  Courcelle.    The resulting theory, see the book~\cite{courcelleGraphStructureMonadic2012} for a reference, is technically difficult, and is tightly connected to structural graph theory, in particular to treewidth. 
 (One can  consider more general structures, such as ternary relations, or even hypergraphs, but graphs contain already many of the important difficulties. We will talk about more general structures later in the paper. ) As was the case for words or trees, a central theme in this theory is 
\begin{align*}
\text{definable in \mso}
\quad \Leftrightarrow \quad
\text{recognisable}.
\end{align*}
The $\Rightarrow$ implication is known as  Courcelle's Theorem~\cite[Thm 4.4]{courcelleMonadicSecondorderLogic1990}. This is a  highly influential result, especially due to   its application to linear time   \mso model checking on graphs of bounded treewidth, which  initiated the field of algorithmic meta-theorems~\cite{grohe2011methods}. The implication $\Leftarrow$ is known to fail in general, but has been proved for various kinds of tree-decomposable graphs, such as graphs of bounded treewidth~\cite[Theorem 2.10]{bojanczykDefinabilityEqualsRecognizability2016a} or graphs of bounded linear cliquewidth~\cite[Theorem 3.5]{linearcliquewidth2021}. It is conjectured that the $\Leftarrow$ implication holds for all tree-decomposable graphs~\cite[Conjecture 5.3]{DBLP:journals/corr/abs-2305-18039}.

The equivalence of definability and recognisability  explains  the importance of \mso logic, by describing it in an alternative way that does not refer to logic. This way we know that \mso is not just some fluke of the syntax.  In this paper, we initiate a similar apologetic project, except  that instead of languages we consider transductions. 

Transductions, as considered in this paper, are transformations that input graphs and output other graphs, in a nondeterministic fashion (one input may yield several outputs).  The importance of transductions, and especially those  that can be defined in \mso, was observed independently by  all early authors in this topic, namely by Arnborg et al.~\cite[Sec 4]{arnborg1991easy}, Engelfriet~\cite[Def 6]{engelfriet1991} and Courcelle~\cite[Def 2.2]{courcelle1991}. 
The goal of this paper is to prove that \mso transductions are the canonical choice of graph transformation, by characterising them in a way that does not use logic, and refers only to the structural properties of the input and output  graphs. Informally speaking, our main result says that a transduction is definable in \mso if and only if simple subsets of the input graph are mapped to simple subsets of the output graph.  

Let us explain this informal statement in more detail.

We begin with the notion of  simple subsets in a graph.  The general idea is that to each subset of vertices in a graph we can assign a number, called its rank, which quantifies the amount of communication between the subset and its complement.  This rank can be formalised in a number of different ways, including  a Myhill-Nerode style definition as used in clique-width, or an algebraic definition proposed by Oum~\cite[Sec 2]{oum2005rank} which computes the rank of a certain matrix.  Although the various formalisations of rank give  different numbers, these numbers are  asymptotically equivalent, which means that one is bounded if and only if the other is bounded. Such asymptotic equivalence is all we need in this paper. 

We now explain what we mean by mapping simple subsets to simple subsets. In order to relate subsets of the input graph with subsets of the output graph, we consider transductions in which the input and output graphs use the same vertex set. This allows us to identify subsets of the input graph with subsets of the output graph, as in the following picture: 
    \mypic{25}
(The usual notion of transduction allows to remove some vertices and copy some others, the restriction on having the same vertex set will not incur any real loss of generality, while simplifying notation.) 
We say that a transduction is  \emph{rank-decreasing} if for every $k$ there is some $\ell$ such that if a subset of an input graph has rank at most $k$, then the same subset in every corresponding output graph has rank at most $\ell$. Intuitively speaking, such a transduction can only forget information from the input graph. 
To the best of our knowledge, the notion of rank-decreasing transductions is new.  It is a simple notion, and yet it seems to be important. 

We are now ready to state the main result of this paper, Theorem~\ref{thm:main-graphs}:  a transduction is rank-decreasing if and only if it is contained in an \mso transduction. This result is proved under the assumption that the input graphs of the transduction have bounded treewidth, and we give counterexamples which show that the result can fail without this assumption. We conjecture that the result can be extended to graphs of bounded cliquewidth, and structures beyond graphs, such as structures with ternary relations. 

As mentioned before, one of the reasons to characterise \mso transductions in non-logical terms is to understand what features of the logic are essential. One such feature is modulo counting. 
In our main theorem, we use a variant of \mso that allows modulo counting, and we present evidence that this variant is necessary, i.e.~transductions without modulo counting are not sufficient to capture all rank-decreasing transductions. This  situation is broadly reminiscent of the  Seese conjecture~\cite[Sec 4]{seese1991structure}, which in its present state boils down to the importance of counting (the  conjecture has been proved with modulo counting~\cite[Thm 5.6]{courcelle2007vertex}, but remains open without modulo counting).

The paper is organised as follows. Section~\ref{sec:overview} contains an overview, with the minimal definitions  needed to understand the main result of the paper. The overview is phrased in terms of graphs, but later in the paper we develop the framework in terms of general structures (although the main result is proved only for graphs). The main result is proved in  Sections~\ref{sec:measure} and~\ref{sec:rank-decreasinng}. In Section~\ref{sec:measure}, we show the implication 
\begin{align*}
\text{definable in \mso} \quad \Rightarrow \quad \text{rank-decreasing}.
\end{align*}
 We do this by giving several descriptions of rank, which have an increasingly logical character; with the last description the implication $\Rightarrow$ becomes self-evident. In Section~\ref{sec:rank-decreasinng}, we prove the converse implication $\Leftarrow$ (under the assumptions that the input graphs have bounded treewidth; this assumption is not needed for $\Rightarrow$). We first prove 
$\Leftarrow$ for transductions that input trees, and then we extended to inputs of bounded treewidth by using the definable tree decompositions from~\cite[Thm 2.4]{bojanczykDefinabilityEqualsRecognizability2016a}. In Section~\ref{sec:further}, we present some conjectures about how our main theorem could be extended beyond graphs of bounded treewidth, and we discuss some consequences of these conjectures. Finally, in Appendix~\ref{sec:bounded-branching} we make some limited progress on these conjectures. Although Appendix~\ref{sec:bounded-branching} is not the main conceptual contribution of this paper, it is the most technically involved part.
\section{Overview}
\label{sec:overview}
In this section, we describe the minimal definitions that are needed to understand the main result of the paper. In later sections, these definitions are generalized, e.g.~from graphs to structures that allow relations of arity bigger than two. These  generalizations will be used both to prove the main result and to state conjectures.

\subsection*{Cut-rank} We begin by describing the cut-rank parameter of Oum, which is a measure that associates a number to every subset of vertices in a graph. 
Consider an undirected graph. For a subset $X$ of the vertices, consider the following matrix which has values in $\set{0,1}$. The rows are vertices $x \in X$, the columns are vertices $y  \not \in X$, and the value of the matrix corresponding to a pair $(x \in X, y \not \in X)$ is $0$ or $1$ depending on whether there is an edge connecting the two vertices. The \emph{cut-rank} of the subset $X$ is defined to be the rank of this matrix (i.e.~number of linearly independent rows, or equivalently, columns) when viewed as a matrix over the two-element field. 

\begin{example}\label{ex:rank-and-size}
    The cut-rank of a subset is at most its size, because a matrix with $n$ rows can have rank at most $n$.
\end{example}

\begin{example}
 Consider a graph that is a path together with a connected subset $X$:
 \mypic{16}
The corresponding matrix has at most two rows that contain exactly one $1$, and the remaining rows contain only $0$. Hence, the cut-rank is at most $2$. More generally, the cut-rank of a subset in the path is proportional to the number of edges that are cut. 
\end{example}

\begin{example}[Cut-rank in a grid]\label{ex:cut-rank-in-grid}
 Consider a graph that is a square $n \times n$ grid, as in the following picture: 
 \mypic{14}
 If a subset $X$ of vertices in this grid contains at most half  of all the vertices, then its cut-rank is approximately its size, as expressed in the following formula: 
 \begin{align*}
 \sqrt{|X|} 
 \quad \le \quad 
 \text{cut-rank of $X$} 
 \quad \le \quad 
 |X|.
 \end{align*}
Since the cut-rank of a set is the same as the cut-rank of its complement, it follows that for every subset in the square grid, its cut-rank is approximately the same as minimum of $\{$size of the subset, size of its complement$\}$. 
\end{example}

\subsection*{Transductions} The other main topic of this paper is transductions, which are transformations that input and output graphs, or more general structures.  We begin with a very general notion of transduction.

\begin{definition} 
 A \emph{graph-to-graph transduction} is any binary relation between graphs, such that for every pair in the transduction, both graphs have the same vertices.
\end{definition}

The assumption that the two graphs have the same vertices will allow us to relate vertices in two graphs\footnote{The ability to relate vertices in the input and output graphs (which is implemented here by simply using the same vertices in both graphs) is similar to the origin information in~\cite{bojanczykTransducersOriginInformation2014}.
}. We will be mainly interested in comparing  the cut-rank of the same subset of vertices in both graphs, see Definition~\ref{def:rank-decreasing-graph-to-graph}.
In the paper, we use the following terminology: if a transduction contains a pair of graphs $(G,H)$, then $G$ is called the input graph and $H$ is called the output graph. 
We do not even make any assumptions about the transduction being closed under isomorphism, although all examples considered in this paper will be closed under isomorphism in the following way: if we take a pair of graphs that is related by a transduction, and apply the same bijection to the vertices of both graphs, then the new pair is also in the transduction.

\begin{example}[Examples of transductions]\label{ex:examples-of-transductions}
 Here are some examples of graph-to-graph transductions.
 \begin{enumerate}
 \item \label{it:remove-edges-transduction} the output graph is obtained from the input graph by removing any subset of edges;
 \item \label{it:remove-edges-transduction-btw} as in~\ref{it:remove-edges-transduction}, but also the input graph has treewidth $\le 7$.
 \end{enumerate} 
 For each of the above examples, there is also a converse transduction, where the input and output graphs are swapped. Taking the converse will not preserve some properties of graph-to-graph transductions that will be considered in this paper, such as being definable in \mso or being rank-decreasing.
\end{example}

\subsection*{Rank-decreasing transductions} We now present the main definition of this paper, which is rank-decreasing transductions. The idea is that the cut-rank of a subset in the output graph is bounded by a function of the cut-rank for the   same subset in the input graph.

\begin{definition}
 [Rank-decreasing transduction] \label{def:rank-decreasing-graph-to-graph} A graph-to-graph transduction is called \emph{rank-decreasing} if there is some function $f : \Nat \to \Nat$ such that 
 \begin{align*}
 \text{cut-rank of $X$ in $H$} \le f(\text{cut-rank $X$ in $G$}) 
 \end{align*}
 for every $(G,H)$ in the transduction and subset of vertices $X$.
\end{definition}

In the above definition, we do not need to specify if we refer to vertices of the input or output graph, since both graphs have the same vertices. The rank does not need to strictly decrease when applying the transduction, so it might be more formal to call it rank-non-increasing, but we use the name rank-decreasing for brevity.

\begin{example}\label{ex:grid-counterexample}
 Consider the transduction from item~\ref{it:remove-edges-transduction} in Example~\ref{ex:examples-of-transductions}, which removes edges from a graph. If the input graph is a clique, then every subset has cut-rank $1$. However, after removing some edges, the same subset can have an arbitrarily large cut-rank. Therefore, this transduction is not rank-decreasing. The same kind of  argument applies to the converse transduction, which adds edges. 

 Consider the transduction from item~\ref{it:remove-edges-transduction-btw}, which removes edges, but where the input graphs are restricted to treewidth $\le 7$. As we will explain later in this paper, this transduction is definable\footnote{The transduction is clearly definable in the model which allows quantification over subsets of the edges, but it is also definable with quantification over subsets of the vertices when the inputs have bounded treewidth, which follows from a result called the Sparseness Theorem~\cite[Section 9.4]{courcelleGraphStructureMonadic2012}. } in monadic second-order logic, and therefore it is rank-decreasing. Consider now the converse of this transduction, i.e. the transduction that adds edges, but subject to the assumption that output graph has treewidth $\le 7$. This transduction is not rank-decreasing. Indeed, suppose that the input graph has no edges. Then every subset of vertices has cut-rank $0$. However, if the output graph is a path, which has treewidth 1, then there can be subsets of vertices in the output graph that have arbitrarily large cut rank. 
\end{example}

\begin{example}\label{ex:input-grids-rank-decreasing}
    Consider any transduction where every input graph is a grid. Then, by Examples~\ref{ex:rank-and-size} and~\ref{ex:cut-rank-in-grid}, this transduction is necessarily rank-decreasing.
\end{example}

\subsection*{Definable transductions} In this paper, we will be mainly interested in transductions that can be defined in logic. The logic that we use is \cmso, which is  monadic second-order logic equipped with modulo counting (i.e.~one can check if the size of a set is divisible by two, or by three, etc.). We assume that the reader is familiar with the syntax and semantics of monadic second-order logic, see~\cite[Section 2]{Thomas97}. In this paper, we model a graph as a structure where the universe is the vertices, and the edges are represented by a binary relation (which is symmetric for undirected graphs).

\begin{definition}[Definable graph-to-graph transductions]
 The syntax of a \emph{definable graph-to-graph transduction} is given by:
 \begin{itemize}
 \item A set of \emph{colours}, which is a finite set.
 \item An \emph{edge formula}, which is a formula of \cmso that has two free variables, and uses the vocabulary of graphs (a single binary relation), extended with a family of unary predicates that has one unary predicate per colour. 
 \end{itemize}
 \end{definition}

 The semantics of the transduction is defined as follows. Let $G$ be a graph, which will be treated as an input graph. A \emph{colouring} is a function that maps vertices of the graph to colours from the transduction. Every colouring induces a structure, in which apart from the edge relation there is a unary predicate for each colour. The output graph is then defined over the same vertices as the input graph, with edges defined by the edge formula. (The output graph should be undirected, i.e. the edge formula should define a symmetric relation for every input graph and colouring; otherwise, the transduction is ill-formed.) Note that there is one output graph per colouring, some of which may coincide, and therefore if the input graph has $n$ vertices and there are $k$ colours, then there are at most $k^n$ output graphs. 

In the literature, transduction usually have additional features, namely, copying, removing vertices, and filtering.
Copying and removing vertices are irrelevant here because we will only consider transductions which preserve the set of vertices.
Likewise, filtering can be omitted for our purposes because we will work with sub-definable transductions (see below).
For these reasons, we prefer to work with the slightly simpler notion of transductions presented above.

\subsection*{Rank-decreasing equivalent to sub-definable}
We now present the main result of the paper, which is that for graph-to-graph transductions with inputs of bounded treewidth, being rank-decreasing is the same as being definable. There is a small caveat: since being rank-decreasing is closed under removing (input, output) pairs from a transduction, a similar closure must be applied to definable transductions. 

\begin{definition}[Sub-definable]
 A graph-to-graph transduction is called \emph{sub-definable} if it is contained, as a binary relation on graphs, in some definable transduction.
\end{definition}

For example, since the identity transduction is definable, it follows that every subset of the identity is sub-definable, e.g.~the identity on graphs restricted to graphs with a prime number of vertices.
Here is the main theorem of this paper. 

\begin{theorem}\label{thm:main-graphs} Let $\alpha$ be a graph-to-graph transduction with inputs of bounded treewidth.
 Then $\alpha$ is rank-decreasing if and only if it is sub-definable.
\end{theorem}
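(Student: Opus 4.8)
The plan is to prove the two implications by rather different means: the "if" direction needs no structural assumption and follows from the logical description of cut-rank, whereas the "only if" direction is where bounded treewidth is used.

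\textbf{Sub-definable implies rank-decreasing.} Here no treewidth assumption is needed, and it suffices to show that a single \emph{definable} transduction $\beta$ is rank-decreasing, since the rank-decreasing property is inherited by every subrelation of $\beta$. So let the output edges of $\beta$ be given by a fixed \cmso edge formula $\varphi(x,y)$ over the input graph $G$ enriched with a colouring from a fixed finite palette. The point, which is exactly what the logical description of cut-rank in Section~\ref{sec:measure} is designed to deliver, is that whether $\varphi$ links a vertex of $X$ to a vertex outside $X$ is determined by a bounded amount of information about the cut $(X,\bar X)$: if $X$ has cut-rank $k$ in $G$, the vertices on each side fall into boundedly many \cmso-types relative to the cut, so the rows of the output cut-matrix take boundedly many values and the output cut-rank is bounded by some $f(k)$. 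I would simply invoke the precise statement proved in Section~\ref{sec:measure}.

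\textbf{Rank-decreasing implies sub-definable: reduction to trees.} Assume now that $\alpha$ is rank-decreasing with inputs of treewidth $\le t$. First I would reduce to tree inputs using the definable tree decompositions of~\cite[Thm 2.4]{bojanczykDefinabilityEqualsRecognizability2016a}: a bounded colouring of $V(G)$ from which an \mso formula recovers a width-$\le t$ decomposition, presented as an ancestor order on $V(G)$, so that the whole construction stays on the single vertex set $V = V(G) = V(H)$. The decisive observation is that the set $X_v$ of vertices hosted in the subtree below a node $v$ is separated from its complement by one bag of size $\le t+1$, hence has cut-rank $\le t+1$ in $G$; since $\alpha$ is rank-decreasing, $X_v$ has cut-rank $\le f(t+1) =: r$ in every corresponding output $H$. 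Thus every $(G,H) \in \alpha$ yields a decomposition tree $T$ on $V$ along which \emph{every subtree has cut-rank $\le r$ in $H$}. It then remains to reconstruct $H$ from $T$ definably and to compose, using that \cmso transductions compose and that the composite lives on $V$.

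\textbf{The core lemma on trees, and the main obstacle.} The heart of the proof is: if $T$ is a tree on $V$ and $H$ is a graph on $V$ in which every subtree has cut-rank $\le r$, then $H$ is an output of a fixed \cmso transduction applied to $T$, with a palette whose size depends only on $r$. I would prove this by a bottom-up, linear-algebraic automaton over $T$ whose state at $v$ records the classification of the vertices of $X_v$ into the $\le 2^r$ classes given by their rows in the cut-matrix of $X_v$, together with the bounded form describing adjacencies between these classes and the outside. Bounded cut-rank makes this state space finite, so the run of the automaton is a bounded colouring of $T$, and since $r = f(t+1)$ is fixed this yields a \emph{single} definable $\beta$ with $\alpha \subseteq \beta$. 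The edge formula decides $x \sim_H y$ by navigating to $w = \mathrm{lca}(x,y)$ and reading off, from the colours, the class of $x$ in its child branch, the class of $y$ in its child branch, and the form stored at $w$; reconstructing this adjacency amounts to evaluating an $\mathbf{F}_2$-bilinear form, i.e.\ a parity of the relevant connections, which is precisely why plain \mso is insufficient and the modulo-two counting of \cmso is essential. I expect this core lemma to be the crux, for two reasons. First, I must show not only that the automaton has finitely many states but that the outward type of a fixed vertex, as viewed from higher and higher ancestors, is itself recoverable by a \cmso formula from the colouring; reconciling this propagation with the $\mathbf{F}_2$-linear algebra of cut-rank is the real work. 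Second, branching must be handled, since over a node with many children the interaction data can only be summarised through parities; binarizing the decomposition tree (legitimate at bounded treewidth) removes unbounded branching from the reduction, but the bilinear bookkeeping along long paths still needs care, and the most delicate verification will be \emph{correctness} of the edge formula — that under the intended colouring it outputs exactly $H$ and not a perturbation of it.
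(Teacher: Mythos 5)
Your top-level architecture coincides with the paper's: the easy direction via the quantifier-depth invariance of cut-rank (Theorem~\ref{thm:rank-invariant-under-quantifier-rank} and Corollary~\ref{cor:definable-is-rank-decreasing}), and the hard direction by composing the definable tree decompositions of~\cite{bojanczykDefinabilityEqualsRecognizability2016a} (Theorem~\ref{thm:encode-decode-treewidth}) with a core lemma about transductions whose inputs are trees. Your observation that subtrees of the decomposition have bounded cut-rank in $G$, hence in $H$, is exactly how the paper verifies the hypothesis of its criterion (Theorem~\ref{thm:rank-decreasing-criterion}). The gap is in the core tree lemma, and it is twofold. First, you have misplaced the role of modulo counting. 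Reconstructing adjacency from the run is \emph{not} where \cmso\ enters: once the classes are known, the edge relation is a finite table lookup, and the paper explicitly notes that guessing and verifying the run of the bottom-up automaton works ``even without modulo counting.'' The actual obstruction is that in the representation used here a tree's universe consists only of its leaves, so the internal nodes --- where the automaton's states live --- are not elements one can colour with set quantifiers. The paper resolves this with Lemma~\ref{lem:node-labelled-trees}: a \cmso-definable orientation (via a $\Int_3$-colouring of leaves) yields an injective ``chosen leaf'' map from internal nodes to leaves, and \emph{this} is the (conjecturally unavoidable) use of counting. Your proposal never addresses how the run is encoded as monadic parameters, and the ``$\mathbf{F}_2$-bilinear form'' explanation does not substitute for it.

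Second, your acknowledged crux --- propagating a vertex's outward class through unboundedly many ancestors, and summarising the interaction among unboundedly many siblings --- is precisely what the paper's most substantial machinery is for, and binarization does not make it disappear. The paper handles it by working with quantifier-free types rather than $\mathbf{F}_2$ ranks: the Compositionality Lemma and Lemma~\ref{lem:transition-function-on-types} give, at each node, functions $\lambda_X$ and $\gamma_X$ into a finite colour set $Q$ whose finiteness rests on Lemma~\ref{lem:informative-coloring} (Section~\ref{sec:independent-sets}), proved by induction on arity using twins, twin dominating sets and Ramsey's theorem. A per-tuple run over these colours is then guessed locally inside the edge formula, which is what makes the class of a vertex relative to a distant ancestor recoverable without storing it. Your sketch names this as ``the real work'' but supplies no mechanism for it; as written, the core lemma is asserted rather than proved, and the one concrete technical claim made about it (that plain \mso\ fails because of a parity evaluation) points in the wrong direction.
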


The right-to-left implication is true without the assumption on bounded treewidth, see Corollary~\ref{cor:definable-is-rank-decreasing}. The assumption is used for the left-to-right implication.
The proof of the theorem is described in the rest of this paper. 
Our proof does not rely on the outputs being graphs, and it would work for other classes of output structures, such as hypergraphs or matroids. It would  also work for transductions that input more general structures, e.g.~with ternary relations, subject to the bounded treewidth restriction (using the usual generalisation of treewidth to structures, in terms of  the Gaifman graph).
We conjecture that the result can be extended to transductions with inputs of bounded cliquewidth (or, equivalently, bounded rankwidth), and also inputs more general than graphs, see Section~\ref{sec:further} for more information. However, some restrictions on the input are necessary, as explained in the following example.

\begin{example} Recall that in Example~\ref{ex:input-grids-rank-decreasing} we have shown that every transduction that inputs grids is rank-decreasing. We will now show that there is a transduction that inputs grids and which is not sub-definable, and therefore this transduction will be a witness for the failure of the implication 
    \begin{align*}
    \text{rank-decreasing}
    \quad 
    \not \Rightarrow
    \quad 
    \text{sub-definable.}
    \end{align*}

Consider the graph-to-graph transduction that consists of all pairs $(G,H)$ where $G$ is a square grid and  $H$ is a graph on the same set of vertices. If the input grid has $n$ vertices (this $n$ is necessarily a square number), then the number of possible outputs is $2^{n \choose  2}$. In the limit, this number is bigger than $k^n$ for every $k$, which is the maximal number of outputs for a transduction that uses $k$ colours. 
\end{example}

The main theorem is proved in the following two sections.


\section{Sub-definable implies rank-decreasing}
\label{sec:measure}
In this section, we prove the easier implication in the theorem, which says that every definable transduction is necessarily rank-decreasing. (This implies that every sub-definable transduction is rank-decreasing since rank-decreasing transductions are downward closed.) This implication does not need the assumption that the inputs have bounded treewidth. 

\subsection{Structures and logic}
\label{sec:structures-and-logic}
Although the overview in Section~\ref{sec:overview} discussed undirected graphs, we will consider other structures in this paper, e.g.~directed graphs or structures with a ternary relation. 
We begin by fixing some logical terminology. 
A \emph{vocabulary} is defined to be  a set of relation names, each one with a distinguished \emph{arity} in $\set{1,2,\ldots}$, which says how many arguments the relation takes. A \emph{structure} over such a vocabulary consists of a set, called the \emph{universe of the structure}, together with an interpretation which maps each relation name in the vocabulary to a relation on the universe of corresponding arity. All structures in this paper are finite, which means that the universe is finite.  The vocabularies are also finite. When we talk about a \emph{class of structures}, we assume that all structures in the class have  the same vocabulary, and that the class is closed under isomorphism of structures. Here are some classes of structures that will appear in this paper. 

\begin{example}[Orders]\label{ex:linear-orders} In the class of \emph{linear orders}, the vocabulary has one binary relation $x \le y$, and the class consists of all structures where this binary relation is a linear order, i.e.~antisymmetric, total, reflexive and transitive. 
\end{example}

\begin{example}[Directed graphs]\label{ex:graph-representation}
    The class of \emph{directed graphs} is the class of all structures over a vocabulary with one binary relation, without any restrictions such as symmetry.
\end{example}

To define properties of structures, we use monadic second-order logic with counting, denoted by \cmso.



\subsubsection*{Transductions} In Section~\ref{sec:overview}, we have defined the  special case of graph-to-graph transductions. The extension to arbitrary structures is defined in the obvious way. If $\Cc$ and $\Dd$ are classes of structures, then a $\Cc$-to-$\Dd$ transduction is any set of pairs of structures, where the inputs are from the class $\Cc$ and the outputs are from the class $\Dd$, and furthermore the input structure has the same universe as the output structure.  Recall that a class of structures is required to use a single vocabulary, and therefore there will be two vocabularies involved in a transduction: the input vocabulary and the output vocabulary. 

The notion of definable transductions is lifted to structures in the obvious way. The formulas in the transduction use the input vocabulary, and instead of having a binary edge formula, there is one formula with $n$ free variables for every relation name in the output vocabulary that has $n$ arguments.

\subsection{Cut-rank for general structures}
We begin by generalizing cut-rank from undirected graphs to general structures. This will be useful later on in the paper, where we measure the cut-rank in structures that do not have a clear graph representation.

Consider a structure over some vocabulary. The \emph{quantifier-free type} of a $k$-tuple of elements in the universe is defined to be the set of quantifier-free formulas with $k$ free variables that are true for the tuple. If the vocabulary is fixed and the number of variables $k$ is fixed, then there are finitely many possible quantifier-free types. This is because we use finite vocabularies.

\begin{definition}[Cut-rank and type matrix]\label{def:cut-rank-structures-and-types}
 Let $\structa$ be a structure whose vocabulary has maximal arity $m$. Define the \emph{type matrix} of a subset $X$ of its universe to be the following matrix: 
 \begin{enumerate}
 \item {\bf Rows.} Tuples $(x_1,\ldots,x_m) \in X^m$.
 \item {\bf Columns.} Tuples $(y_1,\ldots,y_m) \in (\structa \setminus X)^m$.
 \item {\bf Values.} Quantifier-free type $\bar x \bar y$.
 \end{enumerate}
 The \emph{cut-rank} of a subset of the universe $X$ is defined to be the number of distinct rows in this matrix.
\end{definition}

In Lemma~\ref{lem:rank-row-column}, we will justify that essentially the same notion of cut-rank will arise if count the number of distinct columns, or if we compute the rank under a suitable representation of quantifier-free types as elements of a finite field. Furthermore, in the case of structures which are graphs, all of these notions are essentially the same as the version of cut-rank from Section~\ref{sec:overview}.  First, however, we explain what we mean by ``essentially the same notion''.

\subsection{Asymptotic equivalence}
We will consider two notions of rank to be equivalent if they are bounded by each other. In this section, we introduce a notation for boundedness that will be used in the paper. 

Consider a number-valued function, i.e.~a function that assigns natural numbers to inputs from some domain $X$. The example that will be relevant here is when the domain $X$ is the set of pairs (structure, subset of its universe) and the function is the cut-rank. We say that this function is bounded on a subset of the domain if its outputs on that subset have a finite upper bound. Consider two number-valued functions $\mu_1,\mu_2$ with the same domain $X$. We say that $\mu_1$ is \emph{bounded} by $\mu_2$ if every subset $X \subseteq Y$ satisfies
\begin{align*}
\text{$\mu_2$ bounded on $Y$} \quad \Rightarrow \quad \text{$\mu_1$ is bounded on $Y$}.
\end{align*}
We say that the functions are \emph{asymptotically equivalent} if they are bounded by each other.


\subsection{Equivalent variants of cut-rank}

Define a \emph{rank function} over a class of structures to be a function that assigns a natural number in $\set{0,1,\ldots}$ to every pair consisting of  a structure from the class and a subset  of its universe. Cut-rank is one example. The following lemma and its proof justify the choice made in Definition~\ref{def:cut-rank-structures-and-types} by showing that other choices are equivalent. 

\begin{lemma}\label{lem:rank-row-column}
 For every class of structures, the following rank functions are asymptotically equivalent:
 \begin{enumerate}
 \item the number of distinct rows in the type matrix;
 \item the number of distinct columns in the type matrix;
 \item the (algebraic) rank of the type matrix, when its values are embedded arbitrarily into a finite field.
 \end{enumerate}
\end{lemma}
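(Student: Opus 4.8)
The plan is to reduce the asymptotic equivalence to a handful of \emph{pointwise} inequalities, one for each ordered pair among the three candidate rank functions. By the definition of asymptotic equivalence, it suffices to exhibit, for each such pair, a monotone function $g\colon\Nat\to\Nat$ with one quantity at most $g$ applied to the other, uniformly over all pairs (structure $\structa$, subset $X$) of the class: from such a pointwise bound, boundedness of the second quantity by a constant $B$ on a set $Y$ immediately gives boundedness of the first by $g(B)$ on $Y$. The single finite constant driving everything is $t$, the number of quantifier-free types of $2m$-tuples over the (finite) input vocabulary, where $m$ is the maximal arity; this $t$ depends only on the class, and every entry of the type matrix takes one of these $t$ values.

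First I would relate the number of distinct rows $r$ and the number of distinct columns $c$. Fixing $\structa$ and $X$, pass to the submatrix that keeps one representative of each distinct row; it has exactly $r$ rows, and two columns of the full matrix are equal if and only if they agree on these representatives, since every row coincides as a vector with some representative. Each column of this submatrix is a map from an $r$-element set into the $t$-element set of types, so there are at most $t^{r}$ of them, giving $c\le t^{r}$. The symmetric argument, exchanging rows and columns, yields $r\le t^{c}$, which establishes mutual boundedness of items (1) and (2).

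Next I would compare $r$ with the algebraic rank $\rho$ after an injective embedding of the $t$ types into a finite field $F$ (so necessarily $|F|\ge t$). On one hand, the distinct rows span the row space, hence $\rho\le r$. On the other hand, every row is an element of the row space, which has exactly $|F|^{\rho}$ elements; since the embedding is injective, distinct type-rows map to distinct vectors over $F$, so $r\le |F|^{\rho}$. Chaining these with the previous paragraph and invoking transitivity shows that all three functions are asymptotically equivalent, and in particular that the choice of field and of injective embedding is immaterial up to asymptotic equivalence. The argument is elementary; the only step requiring care is the bound $r\le |F|^{\rho}$, where injectivity of the embedding is exactly what lets us identify ``distinct type-rows'' with ``distinct vectors in the row space'', so this is where the word \emph{embedded} does the real work.
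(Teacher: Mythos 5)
Your proof is correct and follows essentially the same route as the paper's: bounding distinct rows and columns against each other via $t^{(\cdot)}$ where $t$ is the number of quantifier-free types, and sandwiching the algebraic rank $\rho$ between $\log_{|F|} r$ and $r$. You simply spell out the details (the representative-row argument and the role of injectivity of the embedding) that the paper's terser proof leaves implicit.
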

\begin{proof}
    Recall that the type matrix takes values in the set of quantifier-free types with $2m$ variables; let $t$ denote the number of such types and $c$ and $r$ respectively denote the number of columns and rows of the matrix.
    Then each row is a sequence of $c$ quantifier-free types with $2m$ variables, therefore there are at most $t^c$ different rows, which is asymptotically equivalent to $c$.
    Likewise, $t \leq r^c$.

    Finally, it is a well known fact that the algebraic rank is comprised between $\Omega(\log r)$ and $r$.
 \end{proof}

 Note that the type matrices of a set and its complement are transpositions of each other; therefore it follows from the lemma that both ranks are asymptotically equivalent.

\subsubsection*{Formulas with quantifiers}
When defining cut-rank, we use the type matrix which stores quantifier-free types. One could also consider a version of the type matrix which stores types for formulas with quantifiers. For $d \in \set{0,1,\ldots}$, define the \emph{\cmso theory of quantifier depth $d$} of a $k$-tuple of elements in a structure to be the set of \cmso formulas that are true for the tuple, and which use at most $d$ nested quantifiers. The \emph{$d$-type matrix} is defined in the same way as the type matrix, except that the values are \cmso theories of quantifier depth $d$. In particular, the 0-type matrix  is the original notion of type matrix.

\begin{theorem}\label{thm:rank-invariant-under-quantifier-rank}
 For every class of structures and $d \in \set{0,1,\ldots}$, cut-rank as defined in Definition~\ref{def:cut-rank-structures-and-types} is asymptotically equivalent to the function that counts the number of distinct rows in the $d$-type matrix. 
 \end{theorem}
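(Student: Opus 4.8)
The plan is to prove the two directions of the asymptotic equivalence separately. I write $\rho_d(\bar x)$ for the row of an $m$-tuple $\bar x \in X^m$ in the $d$-type matrix, so that the cut-rank of $X$ is the number of distinct values of $\rho_0$, and the quantity to be compared is the number of distinct values of $\rho_d$. One inequality is immediate: the depth-$d$ \cmso theory of a tuple determines its quantifier-free type, so $\rho_d(\bar x) = \rho_d(\bar x')$ implies $\rho_0(\bar x) = \rho_0(\bar x')$, and therefore the number of distinct $d$-rows is pointwise at least the cut-rank. Hence the cut-rank is bounded on every family on which the number of $d$-rows is bounded, and it remains to prove the converse: that a bound on the cut-rank forces a bound on the number of $d$-rows.

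I would prove the converse by induction on $d$, carried out simultaneously for all tuples from $X$ of length at most $m+d$ and all columns from $\structa \setminus X$ of length at most $m+d$; write $\rho_e(\bar u)$ for the row of such an extended matrix recording the depth-$e$ \cmso theory of $\bar u\bar y$. The base case $e = 0$ is a statement about quantifier-free types: since the vocabulary has maximal arity $m$, the quantifier-free type of a combined tuple is determined by the quantifier-free types of its subtuples of length at most $m$, and the crossing behaviour of a length-$j$ subtuple of $\bar u$ (for $j \le m$) is read off from a row of the original type matrix of Definition~\ref{def:cut-rank-structures-and-types} after padding it to length $m$ by repetition. Consequently the number of distinct values of $\rho_0$ on tuples of length at most $m+d$ is bounded by a fixed function of the cut-rank.

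The inductive step is a Feferman--Vaught style composition across the partition of $\structa$ into $X$ and $\structa \setminus X$. The depth-$e$ theory of $\bar u\bar y$ is determined by its depth-$(e-1)$ theory together with the effect of the outermost quantifier, which adds one more variable that is split into its part inside $X$ and its part inside $\structa \setminus X$. The contribution of the complement part is already recorded by $\rho_{e-1}(\bar u)$ on the longer columns $\bar y z$, so it depends on $\bar u$ only through $\rho_{e-1}(\bar u)$; the contribution of the $X$-part depends on $\bar u$ only through a bounded profile of the $\rho_{e-1}$-values attainable by extending $\bar u$ inside $X$ --- a count up to a fixed threshold and modulus for an element quantifier, and the set of realizable $\rho_{e-1}$-classes together with their modular cardinalities for a set quantifier, modular cardinalities being exactly what composes additively across the cut. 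Since a quantified set only introduces unary predicates on $X$, the crossing structure, and hence the cut-rank, is essentially unchanged, and by the induction hypothesis $\rho_{e-1}$ takes boundedly many values; therefore so does $\rho_e$. Taking $e = d$ and restricting to length-$m$ tuples gives the theorem.

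The step I expect to be the main obstacle is this composition, and in particular keeping the interface information bounded by the cut-rank for vocabularies of arity larger than two. For graphs the interface is just the bipartite adjacency across the cut and can be encoded by boundedly many unary colours on $X$; for higher arity a single crossing relation may involve several elements on each side, so no unary encoding suffices. The reason the argument can still go through is that the type matrix of Definition~\ref{def:cut-rank-structures-and-types} stores quantifier-free types of full $2m$-tuples, so that all such multi-element crossings are already visible in $\rho_0$; the delicate part will be to route this information correctly through the quantifier step, with the bookkeeping for \cmso set quantifiers and modular counting, without the bound degrading beyond a function of the cut-rank.
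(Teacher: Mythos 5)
Your overall strategy is the same as the paper's: the easy direction via ``$d$-types refine quantifier-free types'', and the hard direction by an induction that trades one unit of quantifier depth for one extra coordinate in the row/column tuples, with a base case showing that the depth-$0$ matrix on $(m+d)$-tuples has boundedly many rows in terms of the cut-rank, and an inductive step that composes types across the cut (the paper phrases this as an Ehrenfeucht--Fra\"iss\'e game; your Feferman--Vaught composition is the same mechanism). The easy direction and the base case are fine.

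The gap is exactly where you predicted it: the set quantifiers. Your extended matrix indexes rows by \emph{element} tuples $\bar u \in X^{\le m+d}$ and columns by element tuples in the complement, so when the outermost quantifier is $\exists Z$ the induction hypothesis gives you, for each fixed $Z \subseteq \structa$, a bound on the number of $\rho_{e-1}$-rows of the expanded structure $(\structa,Z)$ --- but what you need is a bounded profile attached to $\bar u$ alone that determines, uniformly over all columns $\bar y$, the aggregate over all $2^{|\structa|}$ choices of $Z$ of the depth-$(e-1)$ types of $\bar u\bar y$ in $(\structa,Z)$. Since the restriction of $Z$ to the complement changes the meaning of the column, the per-$Z$ rows do not factor through the cut as you have set things up, and ``the set of realizable $\rho_{e-1}$-classes together with their modular cardinalities'' is not yet a well-defined bounded invariant of $\bar u$. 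The paper resolves this by making set arguments first-class citizens \emph{before} starting the induction: it passes to the singleton lifting (a monadic second-order structure) and works with matrices $M_{d,m}$ whose rows are $m$-tuples of \emph{subsets} of $X$ and whose columns are $m$-tuples of subsets of $\structa\setminus X$, the entry being the $d$-type of the coordinatewise unions. A quantified set $Z$ then splits as $Z_X\cup Z_Y$ with $Z_X$ a new row coordinate and $Z_Y$ a new column coordinate, exactly parallel to your element case, and the key step becomes the clean statement that the row content of $M_{d+1,m}$ at $(X_1,\dots,X_m)$ is determined by the set of row contents of $M_{d,m+1}$ at $(X_1,\dots,X_m,X_{m+1})$ over all $X_{m+1}\subseteq X$, so that the number of rows of $M_{d+1,m}$ is at most $2$ to the number of rows of $M_{d,m+1}$; this is proved by having Duplicator answer Spoiler's $X_{m+1}\cup Y_{m+1}$ with $X'_{m+1}\cup Y_{m+1}$ for any $X'_{m+1}$ realizing the same row, and the modular cardinalities are absorbed automatically because they are atomic in the type entries. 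If you redo your induction with subset-tuples instead of element-tuples (elements being recovered as singletons, via the paper's Fact~\ref{fact:monadic-lifting}), your argument closes.
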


The theorem can be derived as an application of Ehrenfeucht-Fraïssé games; this is presented in Appendix~\ref{app:ef-games}.

We have now collected all the ingredients needed to prove the easier  implication of our main theorem, even without any assumption on inputs having bounded treewidth. 
\begin{corollary}\label{cor:definable-is-rank-decreasing}
 Every sub-definable transduction is  rank-decreasing.
\end{corollary}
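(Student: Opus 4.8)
The plan is to prove Corollary~\ref{cor:definable-is-rank-decreasing} by reducing to the single definable transduction that contains the given sub-definable one, and then showing that a definable transduction is rank-decreasing using Theorem~\ref{thm:rank-invariant-under-quantifier-rank}. Since rank-decreasing transductions are downward closed as binary relations (if the inequality defining rank-decreasing holds for all pairs in a transduction, it holds for all pairs in any subset), it suffices to prove the statement for an arbitrary definable transduction $\alpha$.

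So fix a definable transduction $\alpha$, given by a finite colour set and a family of \cmso formulas $\varphi_R$ (one for each output relation name $R$), all using the input vocabulary plus one unary predicate per colour. Let $d$ be the maximum quantifier depth occurring among these formulas. The key observation is that whether a tuple $\bar z$ satisfies the relation $R$ in the output structure $H$ is determined by the formula $\varphi_R(\bar z)$ evaluated in the \emph{coloured input structure}. The plan is to show that the quantifier-free type of a tuple in $H$ is a function of the $d$-type of that same tuple in the coloured input. More precisely, I would let $G^+$ denote the input graph $G$ enriched with the colouring. The coloured vocabulary has the same maximal arity as the original (the colour predicates are unary), so a $d$-type matrix for $G^+$ is well-defined. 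I would then argue that, for any subset $X$, the quantifier-free type matrix of $X$ in $H$ is obtained entrywise from the $d$-type matrix of $X$ in $G^+$ by applying a fixed function: the entry of the output type matrix at position $(\bar x, \bar y)$ records, for each output relation $R$ and each way of distributing the coordinates of $\bar x \bar y$ among the arguments of $R$, whether $\varphi_R$ holds, and each such truth value is part of the depth-$d$ theory of $\bar x \bar y$ in $G^+$.

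It follows immediately that two rows of the $d$-type matrix of $X$ in $G^+$ that are equal give rise to equal rows of the quantifier-free type matrix of $X$ in $H$. Hence the number of distinct rows of the output type matrix is at most the number of distinct rows of the $d$-type matrix in $G^+$, which is to say
\begin{align*}
\text{cut-rank of $X$ in $H$} \quad \le \quad \text{(number of distinct rows of the $d$-type matrix of $X$ in $G^+$).}
\end{align*}
By Theorem~\ref{thm:rank-invariant-under-quantifier-rank} applied to the class of coloured input structures, the right-hand side is asymptotically equivalent to the cut-rank of $X$ in $G^+$; and since the colour predicates are unary, adding them changes the cut-rank only by a bounded amount (each unary predicate can split a row into at most a constant number of refined rows, so the cut-rank of $X$ in $G^+$ is bounded by a function of the cut-rank of $X$ in $G$). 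Chaining these bounds yields a single function $f$ with $\text{cut-rank of $X$ in $H$} \le f(\text{cut-rank of $X$ in $G$})$, uniformly over all pairs $(G,H) \in \alpha$ and all $X$, which is exactly the rank-decreasing condition.

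The main obstacle I anticipate is the bookkeeping in the claim that the output quantifier-free type factors through the input $d$-type: one must be careful that \emph{all} the information needed to evaluate the output quantifier-free type (across every output relation and every substitution of the $2m$ coordinates into its argument slots) is genuinely captured by a single depth-$d$ theory of the concatenated tuple $\bar x \bar y$ in the coloured structure, rather than requiring separate evaluations. This is true because there are only finitely many output relations and finitely many substitutions, so one can take a finite conjunction/disjunction of the $\varphi_R$'s (with variables renamed appropriately) and its quantifier depth is still at most $d$; but stating it cleanly requires fixing notation for how a $2m$-tuple is parsed into the arguments of the various output formulas. The asymptotic-equivalence step and the unary-colour step are routine by comparison.
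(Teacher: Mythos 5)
Your proof is correct and follows essentially the same route as the paper: the paper factors the definable transduction into (1) adding a colouring, (2) adding all relations definable at quantifier depth $d$, and (3) projecting onto the output vocabulary, which are exactly your three chained bounds (unary colours perturb rank boundedly, Theorem~\ref{thm:rank-invariant-under-quantifier-rank} handles the depth-$d$ step, and row-merging handles the projection). The only difference is presentational: the paper phrases it as closure of rank-decreasing transductions under composition, while you chain the inequalities directly.
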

\begin{proof}
 Clearly, transductions which are rank-decreasing are closed under composition. Every definable transduction is a composition of three definable transductions: (1) add some colouring to the universe; (2) extend the vocabulary with all relations that can be defined using definable formulas with at most $k$ variables and quantifier depth at most $d$; (3) remove some relations from the vocabulary. It is easy to see that step (1) is rank-decreasing. Step (2) is rank-decreasing thanks to Theorem~\ref{thm:rank-invariant-under-quantifier-rank}. Finally, step (3) is rank-decreasing. Let us point out that a strict decrease of the ranks can only occur in step (3).
 \end{proof}

\subsubsection*{Rank functions in selected classes of structures.} So far, we have discussed asymptotic equivalence for rank functions over all structures for a given vocabulary. We finish this section with  some examples that use restricted structures, such as linear orders or equivalence relations.

\begin{example}[Rank in undirected graphs and grids]\label{ex:graph-rank} For undirected graphs, cut-rank as in Definition~\ref{def:cut-rank-structures-and-types} is asymptotically equivalent to the original notion of cut-rank that was  described in Section~\ref{sec:overview}. As discussed in Example~\ref{ex:cut-rank-in-grid}, in the class of grids, cut-rank is equivalent to the minimum of (size of the set, size of its complement).
 \end{example}

\begin{example}[Rank in linear orders]\label{ex:linear-orders-rank}
 Consider the class of linear orders from Example~\ref{ex:linear-orders}. Over this class, cut-rank is asymptotically equivalent to the function that maps a subset $X$ of the universe  to the minimal number $k$ such that $X$ is a union of $k$ intervals. 
\end{example}

\begin{example}[Rank in equivalence relations]\label{ex:equivalence-relations}
 Consider the class of equivalence relations, i.e.~undirected graphs that are disjoint unions of cliques. Over this class, cut-rank is asymptotically equivalent to the measure that maps a set to the number of equivalence classes that are cut by it (i.e.~equivalence classes that contain elements in the set and also outside the set).\end{example}

The following lemma is often convenient.

\begin{lemma}\label{lem:rank_of_union}
The cut-rank of $X \cup Y$ is asymptotically bounded by the maximal cut-rank of $X$ and $Y$.
\end{lemma}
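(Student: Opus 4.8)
The plan is to work directly with the row-counting definition of cut-rank from Definition~\ref{def:cut-rank-structures-and-types} and to bound the number of distinct rows of the type matrix of $Z := X \cup Y$ by a fixed function of $r_X$ and $r_Y$, the cut-ranks of $X$ and $Y$. Write $m$ for the maximal arity and $\bar X = \structa \setminus X$, $\bar Y = \structa \setminus Y$, $\bar Z = \structa \setminus Z = \bar X \cap \bar Y$ for the relevant complements. A row of the $Z$-matrix is indexed by a tuple $\bar z \in Z^m$ and records, for every column $\bar c \in \bar Z^{\,m}$, the quantifier-free type of $\bar z\bar c$. The key idea is that this whole row should be reconstructible from a bounded amount of \emph{internal} information about $\bar z$, together with the behaviour of the $X$-coordinates of $\bar z$ as a row of the $X$-matrix and of the $Y$-coordinates of $\bar z$ as a row of the $Y$-matrix.

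Concretely, I would split the coordinates of $\bar z$ into an $X$-side and a $Y$-side, \emph{breaking ties asymmetrically}: a coordinate lying in $X$ goes to the $X$-side (even if it also lies in $Y$), so that only coordinates in $Y \setminus X$ land on the $Y$-side. This gives subtuples $\bar z_X$ over $X$ and $\bar z_Y$ over $Y \setminus X$, which I pad by a fixed scheme to $\hat z_X \in X^m$ and $\hat z_Y \in Y^m$. Now classify each relational atom occurring in the type of $\bar z\bar c$ by its support. Atoms internal to $\bar z$ are determined by the quantifier-free type of $\bar z$ alone (boundedly many possibilities). Atoms using only $\bar z_X$ and $\bar c$ appear as an entry of the $X$-matrix in the row of $\hat z_X$, since $\bar c \subseteq \bar Z \subseteq \bar X$; symmetrically, atoms using only $\bar z_Y$ and $\bar c$ appear in the row of $\hat z_Y$ of the $Y$-matrix, using $\bar c \subseteq \bar Z \subseteq \bar Y$.

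The hard part is the remaining case, which only arises once the arity is at least three: a single atom may simultaneously involve a coordinate of $\bar z_X$, a coordinate of $\bar z_Y$, and a coordinate of $\bar c$. Such a ``triple-crossing'' atom is a priori not an entry of either matrix, and it is exactly where the asymmetric tie-break pays off. Since the $Y$-side consists only of elements of $Y \setminus X \subseteq \bar X$, and $\bar c \subseteq \bar Z \subseteq \bar X$ as well, the entire non-$\bar z_X$ part of the atom is column material for the $X$-matrix; hence the atom is recovered as an entry of the $X$-matrix in the row of $\hat z_X$, after placing the relevant coordinates of $\bar z_Y$ and $\bar c$ into a suitable column of $\bar X^{\,m}$. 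The symmetric assignment would fail here, because a coordinate of $\bar z_X$ lying in $X \cap Y$ is not in $\bar Y$. Thus every atom of the type of $\bar z\bar c$ is determined, uniformly in $\bar c$, by the quantifier-free type of $\bar z$, the side pattern, the row of $\hat z_X$ in the $X$-matrix, and the row of $\hat z_Y$ in the $Y$-matrix.

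It then follows that the row of $\bar z$ in the $Z$-matrix depends only on this tuple of data. There are at most $2^m$ side patterns, a constant number $q_m$ of quantifier-free types of $m$-tuples, at most $r_X$ rows of the $X$-matrix and at most $r_Y$ rows of the $Y$-matrix, so the number of distinct rows of the $Z$-matrix is at most $2^m q_m\, r_X r_Y \le 2^m q_m \max(r_X,r_Y)^2$. As the multiplicative constant depends only on the vocabulary, this produces a fixed function $f$ with $\text{cut-rank}(X\cup Y) \le f(\max(\text{cut-rank}(X),\text{cut-rank}(Y)))$, which is exactly the claimed asymptotic bound. The only routine points to verify are the padding conventions, so that atoms on the original coordinates remain readable off the padded tuples, and the observation that membership in $X$ and $Y$ is not part of the vocabulary and so must be tracked through the side pattern rather than inferred from the quantifier-free type.
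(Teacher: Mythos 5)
Your overall strategy is genuinely different from the paper's: you aim for a direct quantitative bound, showing that the row of $\bar z$ in the type matrix of $Z=X\cup Y$ is a function of a bounded package of data, which would give $r_Z\le C\cdot r_X r_Y$. The paper instead argues contrapositively: it takes many pairwise distinct rows of the $Z$-matrix, forms the hybrid tuples $\bar x_i\bar y_j\bar c$, and applies Ramsey's theorem to extract many distinct rows in either the $X$-matrix or the $Y$-matrix. Unfortunately, your argument has a gap at exactly the place you call the hard part. For a triple-crossing atom you say it ``is recovered as an entry of the $X$-matrix in the row of $\hat z_X$, after placing the relevant coordinates of $\bar z_Y$ and $\bar c$ into a suitable column''. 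That makes the atom an \emph{entry} of that row, but not a \emph{function} of the row content together with your other data: the column at which the entry must be read contains coordinates of $\bar z_Y$, and two tuples $\bar z,\bar z'$ with equal data have different $Y$-sides, so you are reading the two (equal) rows at two \emph{different} columns. Equality of rows as functions of the column index says nothing about how a single row varies between two columns.

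Concretely, the determination claim is false. Take one ternary relation $R$ on the universe $\{a_1,\dots,a_n\}\cup\{b_1,\dots,b_n\}\cup\{c\}$, with $R(a_i,b_j,c)$ holding iff $i=j$ and no other atoms, and let $X=\{a_1,\dots,a_n\}$, $Y=X\cup\{b_1,\dots,b_n\}$. The tuples $\bar z=(a_1,b_1,b_1)$ and $\bar z'=(a_1,b_2,b_2)$ have the same side pattern, the same quantifier-free type, the same $\hat z_X=(a_1,a_1,a_1)$, and equal rows for $\hat z_Y=(b_1,b_1,b_1)$ and $\hat z'_Y=(b_2,b_2,b_2)$ in the $Y$-matrix (whose only column is $(c,c,c)$, and no $R$-atom holds without an $a$); yet their rows in the $Z$-matrix differ, since $R(a_1,b_1,c)$ holds while $R(a_1,b_2,c)$ does not. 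When $X$ and $Y$ are disjoint, your argument can be repaired by a two-step exchange: first swap the $X$-side using the $X$-row equality with $\bar z_Y\bar c$ as column material, then swap the $Y$-side using the $Y$-row equality with $\bar z'_X\bar c$ as column material, which requires $\bar z'_X$ to lie outside $Y$. That requirement is exactly what fails when $X\cap Y\neq\emptyset$, as the example shows, and the lemma is applied in the paper to arbitrary, possibly overlapping sets. The paper's hybrid-tuple-plus-Ramsey device is precisely what absorbs these cross atoms, at the cost of a far worse bound than your claimed $O(r_Xr_Y)$.
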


\begin{proof}
Assume that there are $\ell$ different rows in the type matrix of $X \cup Y$, we should prove that there are at least $f(\ell)$ different rows in the type matrix of either $X$ or $Y$, where $f$ goes to infinity.
Without loss of generality, we assume that the different rows in the matrix are labelled with tuples $\bar z_i = \bar x_i \bar y_i$, where $\bar x_i$ is a tuple of elements of $X$ and $\bar y_i$ is a tuple of elements of $Y$.
By definition, for each $i<j$, there is an tuple $\bar c$ of elements outside $X \cup Y$ such that the quantifier-free types of $\bar x_i\bar y_i \bar c$ and $\bar x_j \bar y_j \bar c$ differ.

This implies that the quantifier-free type of $\bar x_i \bar y_j \bar c$ differs with either the one of $\bar x_i \bar y_i \bar c$ or of $\bar x_j \bar y_j \bar c$.
By Ramsey's theorem, we may assume that the same case holds for all $i<j$, say, the first one (they are symmetrical), up to reducing the number of rows to $f(\ell)$.
Then the fact that the quantifier-free types of $\bar x_i \bar y_j \bar c$ and $\bar x_j \bar y_j \bar c$ implies that the type matrix of $X$ has $f(\ell)$ different rows.
\end{proof}

A consequence of the lemma is that the cut-rank of a boolean combination of $n$ subsets of bounded cut-rank is asymptotically bounded by $n$.

\section{Rank-decreasing implies sub-definable}
\label{sec:rank-decreasinng}
We now turn to the more difficult implication of Theorem~\ref{thm:main-graphs}. Here,  we use the assumption on inputs having bounded treewidth. We prove this implication in three steps: first for inputs that have no structure except for equality, then for tree inputs, and finally for inputs of bounded treewidth.

\subsection{Inputs with equality only}
\label{sec:independent-sets}
Define a \emph{structure with equality only} to be a structure over the empty vocabulary. In this structure, logic can only use equality on the elements, since we assume that equality is built-in to the logic. We use the name set-to-$\classc$ transduction for any transduction where the input vocabulary is empty, and the output class is contained in $\classc$. The following lemma shows that for such transductions, rank-decreasing is the same as sub-definable. 

\begin{lemma}\label{lem:from-independent-sets} If a set-to-$\classc$ transduction is rank-decreasing, then it is sub-definable.  
\end{lemma}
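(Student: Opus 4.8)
The plan is to first unpack what the rank-decreasing hypothesis says for pure-set inputs, and then reduce the statement to a purely structural fact about output structures of uniformly bounded cut-rank. In a structure with equality only, the cut-rank of every subset is bounded by an absolute constant $c_0$ depending only on the maximal arity used in the type matrix: since the vocabulary is empty, the quantifier-free type of a tuple $\bar x \bar y$ is determined by the equality pattern of $\bar x \bar y$, and as $\bar x \in X^m$ and $\bar y \in (\structa \setminus X)^m$ are drawn from disjoint sets, the row indexed by $\bar x$ depends only on the equality pattern of $\bar x$; there are boundedly many such patterns. Hence, for a set-to-$\classc$ transduction, every subset of every input has cut-rank at most $c_0$, so the hypothesis (with witnessing function $f$) collapses to a single uniform statement: there is a bound $\ell := f(c_0)$ such that every subset of every output structure has cut-rank at most $\ell$. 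It therefore suffices to prove that any family of output structures of uniformly bounded cut-rank is sub-definable from pure sets.

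Next I would record what sub-definability from sets can express. On a set equipped with a colouring, every colour-preserving bijection is an automorphism, and since \cmso is automorphism-invariant, any relation defined by a relation formula over a coloured set can only depend on the colours and the equality pattern of its arguments (global cardinalities are constants for a fixed input and play no role). Conversely, to \emph{contain} the given transduction it is enough to exhibit, for each output $H$, a colouring of its universe by a bounded number of colours such that the quantifier-free type of every tuple is determined by the colours and equalities of its coordinates. This is the key structural lemma I would isolate: a structure all of whose subsets have cut-rank at most $\ell$ admits such a bounded colouring. Concretely, colour each element by its class under the twin equivalence, where $u \sim v$ when swapping $u$ and $v$ preserves the quantifier-free type of every tuple; the claim then becomes that $\sim$ has index bounded by a function of $\ell$ and the vocabulary, the general-structure analogue of ``bounded cut-rank implies bounded neighbourhood diversity''.

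The main obstacle is proving this index bound, and I would argue by contraposition. From a large family of pairwise non-twin elements, each pair separated by some witness tuple elsewhere in the universe, I would use Ramsey's theorem to homogenise the position of the witnesses, and then build a single bipartition $X \uplus (\structa \setminus X)$ that places the representatives on one side and their separating witnesses on the other, so that the type matrix of $X$ acquires a number of distinct rows growing with the size of the family; this contradicts the uniform bound $\ell$ once the family is large enough. Here the asymptotic equivalence of the row-, column- and rank-variants from Lemma~\ref{lem:rank-row-column} and the subadditivity of Lemma~\ref{lem:rank_of_union} keep the bookkeeping under control. The delicate point, exactly as in the graph case, is that a priori the separating witnesses may be forced onto the wrong side of the cut, and the Ramsey step is precisely what lets me route them into the complement.

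Finally I would assemble the definable transduction. Its colours are the twin classes, and each output relation is defined by a quantifier-free formula that is a disjunction over the colour-and-equality patterns for which the relation holds; by the structural lemma each output $H$ is produced by its own colouring. Since different outputs may attach different relational behaviour to the same pattern, but there are only boundedly many such ``rules'', I would additionally encode the choice of rule into one extra global colour shared by all vertices, so that a single definable transduction contains all of them at once. As we only need containment, this generosity costs nothing, and in particular no modulo counting is required at this stage, consistent with modulo counting entering only in the later tree and bounded-treewidth cases.
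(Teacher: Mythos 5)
Your outer reduction is exactly the paper's: inputs over the empty vocabulary have cut-rank bounded by a constant depending only on the arity, so rank-decreasing collapses to ``all subsets of all outputs have cut-rank at most some uniform $\ell$'', and it then suffices to show that such a class admits colourings with a bounded number of colours such that the quantifier-free type of a tuple is determined by the colours and the equality pattern (the paper's \emph{informative colourings}, condition~\ref{it:universally-bounded-rank} $\Rightarrow$ \ref{it:bounded-informative-colouring} of Lemma~\ref{lem:informative-coloring}). The final assembly via a disjunction over colour-and-equality patterns plus one global ``rule'' colour is also fine. The gap is in the one step you flag as the main obstacle, and it is larger than you estimate.

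Your plan is to bound the index of the twin relation by a single Ramsey-plus-contradiction argument: take many pairwise non-twin elements, route their separating witnesses to the other side of a cut, and read off many distinct rows in a type matrix. This only works when the witness for ``$u$ and $v$ are not twins'' can be taken disjoint from the elements being distinguished. If you define twins via transpositions being automorphisms (so that ``twin classes'' are well-defined), the witness may be forced to contain $u$ or $v$ themselves (e.g.\ a tuple $(u,v)$ on which a binary relation is asymmetric), and no bipartition can move it into the complement; your contradiction does not materialise. If instead you weaken the definition to require only \emph{external} witnesses, as the paper does, then your Ramsey argument does go through and bounds the size of twin-independent sets --- but this relation is not transitive, ``its classes'' are not a colouring, and two elements that are both twins of the same representative $c$ need not be twins of each other, because the tuples distinguishing them may contain $c$. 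The paper closes exactly this hole with an induction on the maximal arity of the vocabulary: a maximal twin-independent set is a bounded twin-dominating set $\bar c$; one forms the arity-$(m-1)$ structures $\structa_{\bar c}$ by currying one argument through the elements of $\bar c$; the induction hypothesis supplies an informative colouring of $\structa_{\bar c}$, and that extra colour information is what handles tuples $\bar b$ containing the representative $c$. Without this arity-reduction step (or a substitute for it), your argument establishes only the bounded-antichain statement, not the informative colouring, so the proof as proposed is incomplete.
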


We actually prove and use a slightly stronger result, which describes the structures that can arise from rank-decreasing transductions that input structures with equality only.  Define an \emph{informative colouring} of a structure $\structa$ to be a function that assigns a colour to each element of the universe, such that for every tuple of distinct elements in the universe, their quantifier-free type depends only on the corresponding tuple of colours.  In the following lemma, a set-to-$\classc$ transduction is called surjective if every structure from $\classc$ arises as an output for at least one input.

\begin{lemma}\label{lem:informative-coloring}
    The following conditions are equivalent for every class of structures $\classc$: 
    \begin{enumerate}
        \item \label{it:rank-decreasing-from-set} there is a surjective  rank-decreasing  sets-to-$\classc$ transduction;
    \item \label{it:universally-bounded-rank} there is some $k$ such that for every structure in $\classc$, all subsets of its universe have  rank at most $k$; 
        \item \label{it:bounded-informative-colouring} there is some $k$ such that every structure in $\classc$ has an informative colouring with at most $k$ colours. 
    \end{enumerate}
\end{lemma}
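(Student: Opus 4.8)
The plan is to establish the two equivalences (1)$\Leftrightarrow$(2) and (2)$\Leftrightarrow$(3), with all the substance concentrated in the implication (2)$\Rightarrow$(3). I begin with the observation that underlies (1)$\Leftrightarrow$(2): in a structure with equality only, the quantifier-free type of a tuple records nothing but its equality pattern, and since an element of $X$ is never equal to an element outside $X$, the row of a tuple $\bar x \in X^m$ in the type matrix depends only on the equality pattern of $\bar x$. Consequently every subset of an equality-only structure has cut-rank bounded by a universal constant $c$. For (1)$\Rightarrow$(2): by surjectivity every $\structa \in \classc$ is an output of some equality-only input, so rank-decreasingness (with witnessing function $f$) bounds every cut-rank in $\structa$ by $f(c)$, and $k=f(c)$ works. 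For (2)$\Rightarrow$(1) I take the transduction consisting of all pairs $(G,\structa)$ with $G$ an equality-only structure and $\structa\in\classc$ on the same universe; it is surjective, and rank-decreasing via the constant function $f\equiv k$, since (2) already caps every output cut-rank at $k$.

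The implication (3)$\Rightarrow$(2) is a direct computation with the type matrix. Fixing an informative colouring with $k$ colours, the quantifier-free type of $\bar x\bar y$ is determined by the equality pattern together with the colours of $\bar x\bar y$; as cross-equalities are impossible, this data splits into a part depending only on $\bar x$ and a part depending only on $\bar y$. Hence the row of $\bar x$ is determined by its equality-and-colour pattern, of which there are boundedly many (at most a function of $m$ and $k$), so the cut-rank is bounded by that constant.

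The heart of the lemma is (2)$\Rightarrow$(3). I will work with the relation $a\approx b$ defined to hold when the transposition swapping $a$ and $b$ and fixing all other elements is an automorphism of $\structa$. This is an equivalence relation (transitivity follows from $(a\,c)=(a\,b)(b\,c)(a\,b)$ and closure of automorphisms under composition), its classes carry a constant quantifier-free $1$-type, and colouring by $\approx$-classes is informative: if two distinct tuples agree colourwise, one is obtained from the other by swapping one coordinate at a time along transposition-automorphisms, and each swap preserves quantifier-free types. It therefore suffices to bound the number of $\approx$-classes by a function of $k$.

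To do so, I assume there are $N$ pairwise non-$\approx$ elements and, using Ramsey's theorem for $m$-tuples, extract from them a long subsequence that is order-indiscernible for quantifier-free types. If this sequence is order-dependent---some quantifier-free type of a tuple depends on the order of its elements and not merely on the underlying set---then an interleaving cut turns this dependence into a staircase-shaped type matrix with a number of distinct rows proportional to the length of the sequence, contradicting (2) as soon as the sequence is longer than roughly $2k$. Otherwise the sequence is totally indiscernible, so every transposition of two of its elements preserves all quantifier-free types internal to the sequence; since the elements are nevertheless pairwise non-$\approx$, each such transposition must fail on some tuple involving an element outside the sequence, which means the elements have pairwise distinct external profiles. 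These profiles are, after a fixed padding, distinct rows of the type matrix of the indiscernible set, forcing its cut-rank to grow with the length of the sequence, and again bounding that length by roughly $k$. Either way the extracted length, and hence $N$ through the Ramsey estimate, is bounded by a function of $k$ (and of the fixed maximal arity $m$), which is exactly (3). I expect the main obstacle to be the bookkeeping for arities larger than two: making the staircase argument precise for higher-arity relations, and converting ``distinct external profiles of single elements'' into quantitatively many distinct rows of a type matrix whose rows are indexed by $m$-tuples. The reduction to equality-only inputs and the passage to an order-indiscernible sequence are precisely the devices that keep this bookkeeping under control.
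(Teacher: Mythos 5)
Your treatment of (1)$\Leftrightarrow$(2) and (3)$\Rightarrow$(2) is fine and matches what the paper leaves as an exercise. Your framing of (2)$\Rightarrow$(3) via the relation ``the transposition $(a\,b)$ is an automorphism'' is also a legitimate reduction: that relation is indeed an equivalence, its quotient colouring is informative, and (since same-colour elements in any informative colouring are swapped by an automorphism) it is in fact the coarsest informative colouring, so bounding its index is exactly equivalent to (3). The problem is in the argument that bounded rank bounds the number of classes, specifically in your ``totally indiscernible'' case. The inference ``each transposition fails on a tuple involving an element outside the sequence, hence the elements have pairwise distinct external profiles, hence they give distinct rows of the type matrix of the indiscernible set'' is a non sequitur: the failing tuple, while forced to contain an element outside the sequence, may also contain \emph{both} swapped elements. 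Concretely, take a single ternary relation with $R(e_i,e_j,c)$ holding iff $i<j$ (and $R$ false otherwise, $c$ the unique element outside the sequence). The sequence is totally indiscernible for internal tuples, all pairs $e_i,e_j$ fail to be transposition-automorphic, and yet every $e_i$ has the same quantifier-free type over every tuple disjoint from the sequence; the type matrix of the indiscernible set has a bounded number of distinct rows, so its cut-rank is bounded and your argument produces no contradiction. (The large-rank cut in this structure is an interleaved one that splits the sequence and keeps $c$ on one side --- a cut your case analysis never considers.)

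This is precisely the difficulty the paper's proof is built around, and it is handled with two devices you are missing. First, the paper uses the weaker \emph{twin} relation, in which the witnessing tuple is required by definition to avoid both elements; this discards the problematic ``the witness contains both elements'' failures (in the example above, consecutive $e_i,e_{i+1}$ are twins even though they are not transposition-equivalent), at the cost that twinness is no longer transitive, which the paper compensates for with twin independent/dominating sets and an induction on the arity of the vocabulary. Second, even when two elements are non-twins, their witness may contain other elements of the chosen set, so it does not directly yield a column of the relevant type matrix; the paper establishes an \emph{external witness property} via a Ramsey argument over quantifier-depth-$m$ types in a structure enriched with a predicate for the chosen set, followed by an index-shifting step. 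Neither the distinction between internal and external failures nor the externalisation of witnesses appears in your proposal, and without them the totally-indiscernible case does not go through; your order-dependent case is plausible but, as you acknowledge, only worked out for arity two.
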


Lemma~\ref{lem:informative-coloring} immediately implies Lemma~\ref{lem:from-independent-sets}. This is because for every $k$ there is a definable transduction which inputs a set with equality only, and outputs all possible structures that have this set as the universe, and admit an informative colouring with at most $k$ colors.

The rest of Section~\ref{sec:independent-sets} is devoted to proving Lemma~\ref{lem:informative-coloring}. The equivalence \ref{it:rank-decreasing-from-set} $\Leftrightarrow$ \ref{it:universally-bounded-rank} and the implication \ref{it:universally-bounded-rank} $\Leftarrow$ \ref{it:bounded-informative-colouring}  are straightforward exercises left to the reader. We concentrate on the implication  \ref{it:universally-bounded-rank} $\Rightarrow$ \ref{it:bounded-informative-colouring}. 
 We prove this implication   by induction on the maximal arity of relations in the vocabulary of $\classc$. The induction base is when the maximal arity is one, i.e.~all relations have one argument. In this case, there is a straightforward informative colouring: map each element to the unary predicates that it satisfies.  Consider now the induction step: suppose that have proved the lemma for vocabularies with maximal arity $m-1$, and we want to prove it for a vocabulary of maximal arity $m$. 

In the proof, we use the following notion of \emph{twin}: 
we say that two elements $a,a' $ in a structure $\structa$ are \emph{twins}  if
\begin{align*}
\structa \models \varphi(a,b_1,\ldots,b_n) 
\quad \iff \quad 
\structa \models \varphi(a',b_1,\ldots,b_n) 
\end{align*}
for every quantifier-free formula $\varphi$ and elements
\begin{align*}
b_1,\ldots,b_n \in \structa \setminus \set{a,a'}.
\end{align*}
Twins will be useful to get an informative colouring, as explained in the following claim. 
\begin{claim}\label{claim:twin-informative}
    Consider a colouring of a structure such that every two distinct elements of the same colour are twins. Then by, possibly multiplying the number of colours by at most $m$, we can get an informative colouring.
\end{claim}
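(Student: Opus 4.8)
The idea is to repair the given colouring differently on ``small'' and ``large'' colour classes, where a colour class is \emph{small} if it has at most $m$ elements and \emph{large} otherwise (recall $m$ is the maximal arity). For each small class I would give every element its own fresh colour, and for each large class I would keep it as a single colour. A small class then contributes at most $m$ new colours and a large class contributes $1$, so the number of colours grows by a factor of at most $m$, as required. Write $c'$ for the resulting colouring. Note the two key features it inherits: any two distinct elements of the same $c'$-colour still lie in a common class of the original colouring, hence are twins; and moreover that common class is necessarily \emph{large} (small classes became singletons). It remains to prove that $c'$ is informative.

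First I would reduce the informativeness condition to short tuples. We must show that any two tuples $\bar x=(x_1,\dots,x_k)$ and $\bar x'=(x_1',\dots,x_k')$ of distinct elements with the same $c'$-colour sequence have the same quantifier-free type. Since every relation has arity at most $m$, any atomic formula over $\bar x$ mentions at most $m$ of its coordinates, so the quantifier-free type of a tuple is determined by the quantifier-free types of its sub-tuples of length at most $m$; and any such sub-tuple of $\bar x$ has a matching sub-tuple of $\bar x'$ with the same colour sequence. Hence it suffices to treat the case $k \le m$.

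Assume now $k \le m$ and that $\bar x,\bar x'$ have the same $c'$-colour sequence. For each position $i$ there are two cases. If $x_i$ lies in a small class, its $c'$-colour is a singleton, so $x_i'=x_i$. If $x_i$ lies in a large class $S$, then $x_i'$ lies in the same $S$, so $x_i$ and $x_i'$ are twins. The plan is to transform $\bar x$ into $\bar x'$ by modifying only the large positions, keeping the small positions fixed, and processing each large class $S$ in turn. The coordinates to be modified for $S$ are exactly those whose value lies in $S$; every other coordinate lies outside $S$ throughout the process, so collisions can occur only among the at most $p\le k\le m$ coordinates of class $S$. Since $S$ is large we have $|S|\ge m+1\ge p+1$, so at every stage there is a spare element of $S$ not used by the current tuple. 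A single \emph{swap} replaces the value $u$ of one such coordinate by an unused element $v\in S$; then every other coordinate avoids $\{u,v\}$, so $u\sim v$ (they are twins) yields that the swap preserves the full quantifier-free type of the whole tuple. Using the spare element, these swaps connect any injective placement of the $S$-coordinates to any other, by a standard token-sliding argument (to install the target value at a coordinate, first evict to a spare slot whichever coordinate currently holds that value). Running this for each large class carries $\bar x$ to $\bar x'$ through type-preserving steps, so $\bar x$ and $\bar x'$ have the same quantifier-free type, proving $c'$ informative.

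\textbf{Main obstacle.} The crux is the collision analysis in the swapping step: the twin property only justifies exchanging $u$ for $v$ when all remaining coordinates avoid both, so I must guarantee a free element of $S$ at every stage. Keeping such a spare available for tuples of length $\le m$ is exactly what forces the small/large threshold to be $m$, and hence yields the blow-up factor $m$; the reduction to tuples of length $\le m$ is essential here, since otherwise a long tuple could exhaust a large class and leave no spare. The rest is bookkeeping.
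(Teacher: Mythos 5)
Your proposal is correct and follows essentially the same route as the paper: split every colour class of size at most $m$ into singletons (a factor-$m$ blow-up), then transform one tuple into another one coordinate at a time using the twin relation, with property that each intermediate tuple stays non-repeating. Your write-up merely spells out in more detail the token-sliding/spare-element bookkeeping that the paper leaves implicit.
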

\begin{proof}
    We increase the number of colours to guarantee the following property (*): there is no colour class whose size is in $\set{2,\ldots,m}$. This can be done by splitting every colour class of bad size into singleton colour classes; the number of colours is at most multiplied by $m$. 
    Consider two non-repeating tuples of elements  $\bar a, \bar b \in \structa^m$ that induce the same  colours coordinate-wise. Thanks to the assumption (*),  one can gradually transform the tuple $\bar a$ into the tuple $\bar b$, in each step replacing one coordinate by another element of the same colour, and getting a non-repeating tuple.  Such a step does not affect the quantifier-free type. Therefore, the quantifier-free types of $\bar a$ and $\bar b$ are the same. 
\end{proof}

Define the \emph{twin graph} of a structure to be the undirected graph where the set of vertices is the universe, and the edge relation is the twin relation. A \emph{twin independent set} is an independent set in this graph, i.e.~a set of elements in the universe that are pairwise unrelated by the twin relation. The following claim uses the assumption from item~\ref{it:universally-bounded-rank} from Lemma~\ref{lem:informative-coloring}, i.e.~that the rank is universally bounded, i.e.~there is some fixed upper bound on the rank of all subsets in all structures from the class. 

\begin{claim}\label{claim:twin-antichains} If a class of structures has universally bounded rank, then   twin independent sets  from $\classc$ have bounded size.
\end{claim}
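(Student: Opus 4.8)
The plan is to prove the contrapositive: if $\classc$ contains twin independent sets of unbounded size, then cut-rank is unbounded over $\classc$, contradicting the universal bound assumed in item~\ref{it:universally-bounded-rank}. So I would fix a structure $\structa \in \classc$ together with a twin independent set $A = \{a_1,\ldots,a_N\}$ with $N$ large, and aim to produce a subset $X$ of the universe whose cut-rank grows with $N$. The first reduction is to replace the distinguishing quantifier-free formulas by atomic ones: since every relation has arity at most $m$, two elements are non-twins exactly when some relation $R$, some argument position $p$, and some tuple of parameters (filling the remaining positions of $R$ and avoiding the two elements) witness a difference in truth value when $a_i$ rather than $a_j$ is placed in position $p$.

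Next I would homogenize the witnesses using Ramsey's theorem. Colouring each pair $\{i,j\}$ by the pair (relation $R$, position $p$) that witnesses their non-twinness, and then by the side on which the truth value is positive, a first application of Ramsey yields a large subset of $A$ on which a single relation $R$, a single position $p$, and a fixed orientation work for every pair. To additionally control the parameters, I would pass to an order-indiscernible subsequence: after fixing an order on $A$ and colouring each increasing tuple of length up to $m+1$ by its quantifier-free type, Ramsey's theorem for hypergraphs extracts a subsequence $a_{i_1},\ldots,a_{i_L}$, with $L \to \infty$ as $N \to \infty$, all of whose increasing tuples of a given length share the same quantifier-free type. This subsequence is still twin independent, being a subset of $A$.

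In this homogeneous regime I would split into cases according to where the distinguishing parameters live. If non-twinness can be witnessed using only the order pattern inside the indiscernible subsequence, then some relation has a truth value on a tuple that genuinely depends on the order of its arguments; interleaving the subsequence and taking $X$ to be the elements in odd positions then exhibits a half-graph between $X$ and its complement, whose type matrix (reading off the first coordinate of the $m$-tuples) has about $L/2$ distinct rows, so the cut-rank is at least roughly $L/2$. If instead the witnesses must use parameters outside the subsequence, a pigeonhole on these external parameters likewise yields a growing triangular, identity-like submatrix of the type matrix. Either way the cut-rank grows with $N$, contradicting the universal bound. The main obstacle is precisely this last dichotomy: the distinguishing parameters may themselves lie inside $A$, so the Ramsey homogenization must be strong enough --- carried out on tuples rather than merely pairs, and recording the full quantifier-free type of the whole witness configuration --- to guarantee that, once the cut $X$ has been chosen, a genuinely high-rank pattern such as a half-graph survives rather than being collapsed. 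By Lemma~\ref{lem:rank-row-column} it suffices to count distinct rows, so no algebraic rank computation is needed.
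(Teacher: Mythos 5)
Your overall strategy matches the paper's: argue the contrapositive, note that non-twins are separated by a witness tuple of length at most $m-1$ (bounded arity), homogenize a large twin independent set with Ramsey's theorem, and then exhibit many pairwise distinct rows in a type matrix. However, there is a genuine gap at exactly the point you flag as ``the main obstacle,'' and your dichotomy does not close it. A witness $\bar c_{i,j}$ for the pair $(a_i,a_j)$ may contain \emph{both} elements of the independent set $A$ and elements external to it; neither of your two cases covers this mixed situation. Moreover, the homogenization you propose --- colouring increasing tuples \emph{from $A$} by their quantifier-free type --- is too weak to control it: quantifier-free indiscernibility of the subsequence says nothing about how elements outside the subsequence relate to it, and the external part of a witness varies with the pair $(i,j)$, so it cannot be absorbed into a colouring of tuples of $A$. (Your case 2 as stated is also shaky: the external parameters range over an unbounded universe, so ``pigeonhole on these external parameters'' has no finite target; though in the purely-external case no pigeonhole is needed --- one simply takes $X=A$ and uses the witnesses directly as distinguishing columns.)

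The paper's device, which your argument is missing, is the following. Split the homogenized independent set into its even-indexed half $A_0$ and its odd-indexed half, and take the cut to be $X=A_0$. Perform the Ramsey extraction not with respect to quantifier-free types but with respect to \cmso types of quantifier depth $m$, in the structure expanded by a unary predicate marking $A_0$. The quantifiers let the type record statements of the form ``there exists a tuple $\bar y$ disjoint from $A_0$ such that $\phi(\bar z\bar y)$ holds,'' which is exactly what is needed to existentially quantify away the external part $\bar e$ of a witness $\bar c_{i,j}=\bar d\bar e$ (with $\bar d\subseteq A_0$). Homogeneity then allows the internal part $\bar d$ to be shifted to the corresponding odd-indexed elements $\bar d'$, which lie outside $A_0$, so that every pair in $A_0$ acquires a witness entirely disjoint from $A_0$. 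This ``external witness property'' is what makes all rows of the type matrix of $X=A_0$ distinct and yields the contradiction. Without this shifting step (or an equivalent one), your construction does not guarantee that the distinguishing columns lie on the correct side of the cut, so the half-graph or triangular pattern you aim for need not survive.
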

\begin{proof}
    Consider two elements $a,a' \in \structa$ that are not twins. This means that there is some tuple of elements $\bar b$, in which the elements $a$ and $a'$ do not appear, such that $a \bar b$ and $a'\bar  b$ have different quantifier-free types. Such a tuple is called a \emph{witness} for $a$ and $a'$.
    Observe that the size of a witness can be bounded by the maximal arity $m$ of relations  in the vocabulary. This is because two different quantifier-free types must differ on some relation from the vocabulary. Therefore, if two elements have a witness, then they have a witness that uses $m-1$ elements.

    We prove the claim by contradiction -- if there are arbitrarily large twin independent sets, then the rank is not universally bounded. We will show that if there are arbitrarily large twin independent sets, then there are arbitrarily large twin independent sets with the following \emph{external witness property}:  for every pair of elements in the   set, the corresponding witness is disjoint with the set. A twin independent set  of size $n$ with the external witness property gives a subset of the universe that has  rank $n$, because in the corresponding type matrix all rows are different, with the columns being the elements that appear in the witness. Therefore, by the assumption that all subsets in the class $\classc$ have bounded rank, it will follow that there cannot be arbitrarily large twin independent sets with the external witness property. 
    
    To ensure the external witness property, we will use the Ramsey Theorem.
    Suppose that there are twin independent sets 
\begin{align*}
        a_1,\ldots,a_n \in \structa \in \classc
\end{align*}
of arbitrarily large size $n$.
Formally, for all indices $i<j$, there exists a witness $\bar c_{i,j}$ such that the quantifier-free types of $x_i \bar c_{i,j}$ and $x_j \bar c_{i,j}$ differ.
Let $A_0$ denote the set of elements $a_i$ with $i$ even.
By the Ramsey Theorem, we may extract an arbitrarily large twin independent set which has the following homogeneity property: every subset of size $2m$ has the same type of quantifier-depth $m$, if the elements in the subset are listed according to the order inherited from $a_1,\ldots,a_n$, in the structure augmented by a unary predicate colouring $A_0$.
Thus, we now assuming without loss of generality that the homogeneity property is granted.

We claim that the set $A_0$ has the external witness property.
Let $i<j$ be two even indices.
Without loss of generality, we may write $\bar c_{i,j}=\bar d \bar e$ where $\bar d$ contains only elements of the form $a_i$ with $i$ even, and $\bar e$ contains elements that are not of this form.
By definition of $\bar c_{i,j}$, the quantifier-free types of $x_i \bar d \bar e$ and of $x_j \bar d \bar e$ differ.
Therefore there exists a quantifier free formula $\phi$ such that the formula $\psi = \exists \bar y \phi(\bar z \bar y) \text{ and coordinates of $\bar y$ do not belong to } A_0$ holds for $\bar z = x_i \bar d$ but not for $\bar z = x_j \bar d$.

Now let $\bar d'$ be obtained from $\bar d$ by shifting all indices of coordinates $x_k$ to $x_{k+1}$; note that elements of $\bar d'$ do not belong to $A_0$.
By the homogeneity property, the quantifier-depth $m$ types of $x_i x_j \bar d$ and of $x_i x_j \bar d'$ are the same, therefore $\psi$ holds for $x_i \bar d'$ and not for $x_j \bar d'$.
We conclude that $A_0$ has the external witness property.
\end{proof}

Let $d$ be the maximal size of twin independent sets. If we take a twin independent set in the structure that has maximal size, then it will also be a \emph{twin dominating set}, which is defined to be a set with the following property: every element of the universe is either in the set, or a twin of some element in the set. Therefore, every structure in $\classc$ has a twin dominating set of size at most $d$.

Consider a structure  $\structa \in \classc$ and a twin dominating set of size at most $d$ that is listed  in some order  $\bar c = c_1,\ldots,c_d$. We can use repetitions in the sequence if there are fewer than $d$ elements  in the structure.  Define a new structure, call it $\structa_{\bar c}$ as follows.   For every pair $(i,\varphi)$ where  $i \in \set{1,\ldots,d}$ and $\varphi$ is a  quantifier-free formula  with $m$ variables  over the vocabulary of $\classc$, the structure has a relation of arity $m-1$ which is   interpreted as 
\begin{align*}
\setbuild{ (a_2,\ldots,a_m) \in \structa^{m-1}}{$\structa \models \varphi(c_i,a_2,\ldots,a_m)$}.
\end{align*}
Let $\classd$ be the class of all structures $\structa_{\bar c}$ that arise this way. This class can be obtained from the original class $\classc$ by applying a definable transduction, and therefore it must also have universally bounded rank.  Since the vocabulary of $\classd$ has smaller maximal arity than $\classc$, we can use the induction assumption to conclude that $\classd$ has informative colourings that use a bounded number of colours, say at most $\ell$ colours. 

Take some structure $\structa \in \classc$ and choose a sequence $\bar c = c_1,\ldots,c_d$ that describes  some twin dominating set. Consider a colouring that assigns the following information to an element $a \in \structa$: 
\begin{enumerate}
    \item which of the elements in $\bar c$ are twins of $a$;
    \item the colour in the informative colouring for $\structa_{\bar c} \in \classd$.
\end{enumerate}
The number of colours in this colouring is bounded, since it depends on the constant $d$ from Claim~\ref{claim:twin-antichains} and the bound from the induction assumption on $\classd$. 
To complete the proof of Lemma~\ref{lem:informative-coloring}, we will show that this colouring is informative. By Claim~\ref{claim:twin-informative}, it is enough to show that elements $a$ and $a'$ with the same colour are twins.  We need to show that for every tuple $\bar b$ of length $m-1$, the tuples $a \bar b$ and $a' \bar b$ have the same quantifier-free type. Let $c$ be some element in the twin dominating set $\bar c$ that is a twin of both $a$ and $a'$; such an element must exist since $a$ and $a'$ have the same colour. If $c$ does not appear in the tuple $\bar b$, then we can go from $a \bar b$ to $a' \bar b$ in two steps: first replace $a$ with $c$ and then replace $c$ with $a'$. Both steps keep the quantifier-free type unchanged. If $c$ does appear in the tuple $\bar b$, then use the colouring from the induction assumption on $\classd$ to show that $a \bar b$ and $a' \bar b$ have the same quantifier-free type. This completes the proof of Lemma~\ref{lem:informative-coloring}.
\subsection{Inputs are trees}
In this section, we prove a special case of the main theorem, in which the inputs are trees. We begin by 
\subsubsection{Trees and their representation}
\label{sec:trees}
One way to think about trees is that these are undirected graphs that are acyclic and connected. This will \emph{not} be the kind of trees that we use in this paper. In the kind that  we do  use, the universe of the structure is only the  leaves of the tree, and the remaining nodes of the tree are represented as subsets of the leaves, as in the following picture: 
\mypic{15}

\begin{definition}[Tree]
    A \emph{tree} over a set   is a family of subsets, called \emph{subtrees}, subject to the following axioms:  (a) every singleton is a subtree; (b) the entire set is a subtree; and (c) if two subtrees are not disjoint, then  one is contained in the other.     
\end{definition}

In the picture before the definition, we did not draw the circles corresponding to singleton subtrees, we follow this convention when drawing trees in the paper to avoid clutter.

    We use the usual tree terminology, such as leaf, parent, child, root, descendant and ancestor. 
    The trees as defined above are unordered, which means that there is no left-to-right ordering on the children of a given node. Also, there are no unary nodes, i.e.~every subtree is either a leaf, or it has at least two children. 

    We will be using transductions that input or output trees, so we need to say how trees are represented using structures. Arguably the most natural way would be to give a monadic  second-order relation $\text{subtree}(X)$. Although such predicates would not break the theory developed here, we choose to stick with first-order relations, and thus we will represent the subtrees using a ternary relation on leaves. The idea is that a pair of leaves $x$ and $y$ specifies a subtree, namely the least subtree that contains both leaves. Since every subtree arises this way, we can use the pairs to represent subtrees, as described in the following definition.  
    
    \begin{definition}[Ternary representation of a tree] For a tree, its \emph{ternary representation} is the structure in which the universe is the leaves and there is a ternary relation defined by 
        \begin{center}
            $z$ is contained in the least subtree containing $x$ and $y$.
        \end{center}
    \end{definition}

    When using transductions that input or output trees, we use the above representation.
    Define a \emph{sub-forest} in a tree to be any set that is obtained by taking some union of sibling subtrees. Sub-forests are easily seen to have bounded cut-rank; in fact, they characterise the cut-rank in trees.
    
    \begin{lemma}\label{lem:ranks_in_trees}
    In a tree $\structt$, the cut-rank of a subset $X$ is asymptotically equivalent to the minimal number of sub-forests of which $X$ is a boolean combination.
    \end{lemma}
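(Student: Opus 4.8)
The plan is to prove the two asymptotic bounds separately. Write $\mathrm{sf}(X)$ informally for the minimal number of sub-forests of which $X$ is a boolean combination. The easy inequality is that small $\mathrm{sf}(X)$ forces small cut-rank. For this it suffices to show that a single sub-forest has cut-rank bounded by an absolute constant, and then to invoke the consequence of Lemma~\ref{lem:rank_of_union}, which bounds the cut-rank of a boolean combination of $n$ sets of bounded cut-rank by a function of $n$. To see that a sub-forest $F$, say the union of some of the children subtrees of a node $v$, has bounded cut-rank, I would inspect the quantifier-free type of a triple $\bar x \in F^3$ against a triple $\bar y$ taken from the complement. Using that the least subtree containing two leaves is governed by their lowest common ancestor, a short case analysis shows that this type is determined by a bounded amount of data about $\bar x$: namely the partition recording which coordinates of $\bar x$ lie in a common child of $v$, together with the quantifier-free type of each same-child group inside that child; all cross terms and pure-$\bar y$ terms are either constant or depend only on $\bar y$ and the position of $v$. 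Since this data ranges over a set whose size depends only on the vocabulary, the number of distinct rows of the type matrix is bounded, as required.

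For the converse — that small cut-rank forces small $\mathrm{sf}(X)$ — the first point to appreciate is that the obvious decomposition of $X$ into the maximal subtrees it contains, grouped by their parents, is far from optimal: if $X$ is everything except one leaf, this decomposition uses linearly many sub-forests, whereas $X = \structt \setminus \set{\textrm{leaf}}$ is a boolean combination of two. Thus genuine boolean combinations, and in particular complementation, must be exploited. I would organise $X$ by a top-down recursion: at a node whose subtree meets both $X$ and its complement, the children split into those fully inside $X$, those disjoint from $X$, and those cut by $X$, and one may describe $X$ inside this subtree either \emph{positively} (take the sub-forest of the fully-inside children and recurse on the cut children) or \emph{negatively} (take the sub-forest of the fully-outside children and recurse on the complement). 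Since boolean-combination count is invariant under complementation, this gives a recursion for $\mathrm{sf}$ in which an unavoidable extra sub-forest is incurred only at a node that simultaneously has a fully-inside child and a fully-outside child. Calling these the \emph{branching nodes}, the recursion shows that $\mathrm{sf}(X)$ is bounded by a function of their number, so it is enough to bound the number of branching nodes by a function of the cut-rank.

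The heart of the argument, and the step I expect to be hardest, is to turn many branching nodes into many distinct rows of the type matrix, thereby lower-bounding the cut-rank via Lemma~\ref{lem:rank-row-column}. At each branching node $v$ I can select a leaf $x_v \in X$ and a leaf $y_v \notin X$ lying in distinct children of $v$, so that their lowest common ancestor is exactly $v$; the ternary relation then records the relative position of $v$, which is what should separate different branching nodes. The difficulty is that the branching nodes may be deeply nested and that the tree may have unbounded degree, so the separating triples chosen for one node can interfere with those of another. To control this I would first pass, by Dilworth's (or Mirsky's) theorem, to a large \emph{antichain or chain} of branching nodes, and then, exactly as in the proofs of Lemma~\ref{lem:rank_of_union} and Claim~\ref{claim:twin-antichains}, apply the Ramsey theorem to make this family homogeneous for the relevant quantifier-free patterns, so that the chosen witnesses interact uniformly. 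Along such a homogeneous family the leaves $x_v$ realise pairwise distinct rows, so the cut-rank grows with its size.

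Putting the pieces together: a large $\mathrm{sf}(X)$ yields, by the recursion of the second paragraph, many branching nodes, hence a large homogeneous antichain or chain, hence large cut-rank; conversely small $\mathrm{sf}(X)$ gives small cut-rank by the first paragraph. This establishes the two bounds and therefore the asymptotic equivalence. The main obstacle is squarely the extraction in the third paragraph, where the combination of unbounded branching and nesting has to be tamed by the Ramsey/Dilworth reduction before the witnessing triples can be shown to produce genuinely distinct rows.
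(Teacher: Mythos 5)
Your first direction is correct and is essentially what the paper does: a single sub-forest has bounded cut-rank by direct inspection of quantifier-free types of triples, and Lemma~\ref{lem:rank_of_union} together with closure of cut-rank under complementation handles boolean combinations. The gap is in the converse, at the step where you claim that $\mathrm{sf}(X)$ is bounded by a function of the number of branching nodes. This is false. Take a caterpillar tree on leaves $1,\dots,n$: the subtrees are the singletons and the tails $Y_i=\{i,i+1,\dots,n\}$, so each $Y_i$ has one leaf child $\{i\}$ and one internal child $Y_{i+1}$; let $X$ be the set of odd-indexed leaves. Every spine node except the bottommost has exactly one non-cut child (its leaf child), so there is exactly one branching node in your sense; yet every sub-forest of this tree is an interval in the order $1<\dots<n$ (a singleton, a tail, or their union at one level), a boolean combination of $k$ intervals is a union of at most $2k+1$ maximal intervals, and $X$ is a union of $n/2$ of them, so $\mathrm{sf}(X)=\Omega(n)$. (The cut-rank of $X$ is also $\Omega(n)$ here, so the lemma itself survives; it is your intermediate reduction that breaks.) The error sits in the accounting of the ``negative'' step of your recursion: writing $X\cap Y$ as $Y$ minus something is only free when $Y$ is the whole universe; for a proper subtree $Y$ the relative complement costs one extra sub-forest (namely $Y$ itself), and these costs accumulate along a chain in which $X$ keeps flipping between containing and avoiding the sibling leaves without any node ever acquiring both a fully-in and a fully-out child.

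The paper measures something finer than branching: a node $Y$ is called interesting if for \emph{every} child $Y'$ the set $X\cap(Y\setminus Y')$ is neither all of $Y\setminus Y'$ nor empty, and the proof tracks two parameters, the longest chain of interesting nodes and the largest family of interesting siblings, arguing that one of them must be large when $\mathrm{sf}(X)$ is. In each of the two cases it then exhibits explicit witnesses $x_i\in X$ and $z_i\notin X$ drawn from pairwise disjoint regions (distinct children in the sibling case, the successive differences $Y_i\setminus Y_{i-1}$ in the chain case), for which the ternary least-subtree relation separates the rows directly; because the witnesses do not interfere, no Ramsey or Dilworth extraction of the kind you anticipate is needed. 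To repair your argument you would have to charge a second parameter of this chain-alternation type alongside the branching-node count, at which point you essentially recover the paper's case distinction.
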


    \begin{proof}
        It follows from Lemma~\ref{lem:rank_of_union} together with the observation (Lemma~\ref{lem:rank-row-column}) that cut-ranks are (asymptotically) preserved under complements, that the cut-rank of $X$ is asymptotically bounded by the minimal number $m$ of subforest of which $X$ is a boolean combination.
    
        We now focus on the converse: we should show that if $m$ is large, then the rank of $X$ is large.
        Say that a node $Y$ in $\structt$ is dull if it has a child $Y'$ such that $X \cap (Y \setminus Y')$ is either $Y \setminus Y'$ or the empty set.
        Otherwise, say that $Y$ is interesting.
        
        Let $\ell$ denote the maximal size of a chain of interesting nodes, and $d$ denote the maximal number of interesting siblings.
        It is easy to see that if both $\ell$ and $d$ are small, then $m$ is small; therefore either $\ell$ or $d$ is large.
        
        If $d$ is large then consider a node $Y$ with $d$ interesting children $Y_1,\dots,Y_d$, and for each $i \in \{1,\dots,d\}$, pick an element $x_i \in X \cap Y_i$ and $z_i \in Y_i \setminus X_i$.
        Then for each $i<j$, $z_i$ belongs to the least subtree containing $x_i$ and $z_j$ (which is $Y$), but does not belong to the least subtree containing $x_j$ and $z_j$ (which is $Y_j$), and therefore the quantifier-free types of $x_i z_i z_j$ and $x_j z_i z_j$ differ.
        We conclude that the $x_i$'s define different row in the type matrix, and therefore the cut-rank of $X$ is $\geq d$.
        
        If $\ell$ is large then consider a chain of interesting nodes $Y_1 \subseteq \dots \subseteq Y_m$.
        Note that for each $i\in \{2,\dots,m\}$, since $Y_i$ is interesting and $Y_{i-1}$ is contained in a subtree of $Y_i$, the intersection of $X$ with $Y_i \setminus Y_{i-1}$ is neither full nor empty.
        Pick $x_i \in (Y_i \setminus Y_{i-1}) \cap X$ and $z_i \in (Y_i \setminus Y_{i-1}) \setminus X$.
        As previously, it is easy to see that for $i<j$ the quantifier-free types of $x_i z_i z_j$ and $x_j z_i z_j$ differ, which concludes.        
    \end{proof}

    \paragraph*{Coloured trees.} We will also be using trees where the subtrees are coloured by some fixed set of colours.  One way to represent this colouring is to have, for every colour $c$, a second-order relation $c(X)$ which selects subtrees of colour $c$. If we want to use the ternary representation, then  colouring can be represented by giving for every colour $c$ a binary predicate which selects the pairs of nodes $(x,y)$ such that the least subtree containing $x$ and $y$ has colour $c$. In some constructions of this paper, we will want a transduction or formula to be able to guess a colouring for a tree. It is not immediately clear how this can be done, since the subtrees are not part of the universe of the structure under our chosen representation, and therefore one cannot simply use subsets of the universe to represent the colouring. However, this will be possible, as stated in the following lemma. 

\begin{lemma}\label{lem:node-labelled-trees}
    For every finite set of colours $C$, the following  transduction is definable. 
    \begin{align*}
    \setbuild{
        (\structt, \structs) 
    }
    {$\structt$ is a tree, and $\structs$ is a  $C$-coloured tree \\ that is obtained by some colouring of $\structt$}
    \end{align*}
\end{lemma}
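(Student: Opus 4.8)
The plan is to unwind the definition of a definable transduction and reduce the whole statement to a single combinatorial step. Recall that a definable transduction guesses a colouring of the universe (here, the leaves) with a finite palette and then defines each output relation by a formula. For the transduction in question the output must carry the same tree, so I would copy the input ternary relation verbatim, and then, for each colour $c\in C$, define the binary predicate $\mathrm{col}_c(x,y)$ asserting that the least subtree containing $x$ and $y$ has colour $c$. The key reformulation is this: it suffices to \emph{define}, from a guessed bounded leaf-colouring, a value $\mathrm{colour}(v)\in C$ for every node $v$, as a genuine function of $v$ and the guess. If $\mathrm{colour}(v)$ is a function of the node, then automatically $\mathrm{col}_c$ is consistent (pairs with the same least common ancestor receive the same colour) and exhaustive (exactly one colour per pair), so \emph{soundness} — every output being a legitimate $C$-colouring of the same tree — is free. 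The entire content is then \emph{completeness}: every target node-colouring must be produced by some guess. Since singleton subtrees are actual elements of the universe, their colours are guessed directly; the real work is colouring the internal nodes.

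The core idea is to store the colour of each internal node on a distinct descendant leaf. Concretely, I aim to give a \cmso-definable injection $\rho$ from internal nodes to leaves, using the guessed colouring, with $\rho(v)$ a descendant of $v$; then I set $\mathrm{colour}(v)$ to be the colour stored (in a dedicated slot of the guessed palette) at $\rho(v)$. Given this, completeness is immediate: for a target colouring $\gamma$, one fixes the auxiliary guess realising $\rho$ and stores $\gamma(v)$ at $\rho(v)$, which is well-defined precisely because $\rho$ is injective. A leaf-count argument ($\#\text{internal nodes}\le \#\text{leaves}-1$) shows such an injection can exist.

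To build $\rho$, I would guess, at each internal node, one child designated as the \emph{spine child} and one other child designated as the \emph{representative child}; both choices are available because every internal node has at least two children. Let $\mathrm{principal}(w)$ denote the leaf reached from $w$ by always descending into spine children, and set $\rho(v)=\mathrm{principal}(\text{rep-child}(v))$. Injectivity is clean: the maximal spine chains partition the nodes, the top of each such chain is either the root or a non-spine child, and each chain has a distinct bottom leaf, so $\mathrm{principal}$ is injective on non-spine children; since distinct internal nodes have disjoint child-sets, the representative children are distinct non-spine children and hence land on distinct leaves. I stress that I deliberately avoid defining the spine via ``the child with the most leaves'', because \cmso cannot compare cardinalities — the spine is \emph{guessed}, not computed.

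The main obstacle, and where the real work lies, is to encode the per-node spine and representative choices as a \emph{bounded} leaf-colouring and to make ``descend into the spine child'' and the resulting map $v\mapsto\rho(v)$ \cmso-definable over trees that are unordered and of unbounded degree. The difficulty is exactly that one cannot canonically single out a child, nor compare subtree sizes; the guess must supply these selections, to be recovered through the ternary relation together with modulo-counting. In particular, symmetry between isomorphic sibling subtrees must be handled — it is broken by the guessed data precisely when the target colouring distinguishes the siblings, and is otherwise harmless since swapping such siblings does not change the output. This encoding-and-recovery step is in the spirit of the Ramsey/counting arguments used earlier (as in Lemma~\ref{lem:rank_of_union} and Claim~\ref{claim:twin-antichains}); once it is carried out, both soundness and completeness follow immediately from the structure above.
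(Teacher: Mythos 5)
Your second half coincides with the paper's: the injection from internal nodes to leaves obtained by designating at each internal node one ``representative'' child and one ``spine'' child and following the spine downward to a leaf is exactly the paper's ``chosen leaf'' map, and your injectivity argument via disjoint spine chains is the same one. But what you defer as ``the main obstacle, where the real work lies'' --- realising the per-node choice of two distinguished children as a bounded, \cmso-decodable leaf-colouring over unordered trees of unbounded degree --- is the entire content of the lemma, and the only point where modulo counting is actually used (which matters, since the paper conjectures the lemma fails without it). Your pointer to ``Ramsey/counting arguments in the spirit of Lemma~\ref{lem:rank_of_union} and Claim~\ref{claim:twin-antichains}'' is the wrong tool: no Ramsey argument is involved, and you give no mechanism by which a bounded leaf palette can single out one child among unboundedly many, possibly isomorphic, siblings.

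The missing construction is the following. Guess a colouring of the leaves by elements of $\Int_3$ and define the colour of a subtree to be the sum, modulo three, of the colours of its leaves; this is \cmso-definable from the ternary relation using modulo counting. Say a child has a \emph{unique sum} if no sibling has the same sum. One then proves, by induction on the number of leaves, that every tree admits such a colouring in which every internal node has at least two children with unique sums (the induction carries a prescribed target sum for the root, which is distributed among the children in the induction step). The two distinguished children of a node are then the first two unique-sum children, ordered by their sums --- a \cmso-definable selection, and this is precisely your spine/representative pair. You would also need a word on soundness: an arbitrary guess need not satisfy the ``two unique sums everywhere'' property, so the defining formulas must check this \cmso-expressible property and fall back to some fixed default colouring on bad guesses. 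As written, your proof is a correct reduction of the lemma to an unproved claim that is the lemma's actual difficulty.
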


\begin{proof} The proof has two steps.  In the first step, we produce certain coloured trees, which we call orientations. These are used in the second step to produce arbitrary colourings. 

\paragraph*{Step 1. Orientations.}
    Define an \emph{orientation} of a tree $\structt$ to be any colouring $\structs$ of this tree with colours, called ``left'', ``right'' and ``anonymous'',  such that every non-leaf subtree has exactly one child with label ``left'', and exactly one child with label ``right''. We first show that  a definable transduction can map an input tree to at least  one orientation. 

    Consider a colouring of the leaves in a tree with elements from the cyclic group $\Int_3$\footnote{\label{footnote:order-two} To implement the transduction, we will need to use counting modulo three. The same proof works for all orders at least three, and a slightly more elaborate proof works for order two. Therefore, counting modulo any number $k > 1$ would be enough to make the proof work.}. Extend this colouring to subtrees of  $\structt$, by defining the colour of a subtree to be the sum (modulo three) of the colours of the leaves in that subtree. We say that a subtree has a \emph{unique sum} if none of its siblings have the same sum. We will show:
    \begin{itemize}
        \item[(*)] For every tree $\structt$ and  $c \in \Int_3$, one can colour the leaves of $\structt$  with colours from $\Int_3$ so that: (1) the sum of the root subtree is $c$; and (2) every non-leaf has at least two children with unique sum.
    \end{itemize}
    From (*) we easily get an orientation, by defining  the ``left'' and ``right'' children to be the first two children with unique sum, ordered according to their sums. To prove (*), we use a simple induction on the number of leaves. For the induction base, when $\structt$ has one leaf, there is nothing to do. For the induction step, choose two children of the root subtree. To these two children, apply the induction assumption with colours $c$ and $-c$, and for the remaining children of the root apply the induction assumption with colour 0. 

    \paragraph*{Step 2. Arbitrary colourings.} We now use the orientation produced in the previous step  to define an arbitrary colouring. Consider some  orientation $\structs$ of a tree $\structt$. For a subtree $X$, define its \emph{chosen leaf} to be the leaf that is obtained using the following procedure which uses the colours from the orientation: start in the ``left'' child of $X$, and then continue taking ``right'' children until a leaf is reached\footnote{This idea is taken from~\cite[Proposition 3.1]{Thomas97}.}. The \emph{chosen leaf} function is an injective map from (non-leaf) subtrees to leaves, which can be defined using \mso. Therefore, a colouring of the subtrees can be represented using a colouring of the leaves, which can be implemented by a definable transduction since the leaves are the universe. 
\end{proof}

As mentioned in Footnote~\ref{footnote:order-two}, the construction in the above lemma uses modulo counting (although the modulus can be any number $m > 1$).  In fact, we conjecture  modulo counting is necessary. 
\begin{conjecture} The transduction from Lemma~\ref{lem:node-labelled-trees}  cannot be defined without modulo counting, even when restricted to binary trees, i.e.~trees where every subtree has zero or two children.
\end{conjecture}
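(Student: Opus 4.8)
Since this is an inexpressibility statement, the plan is to prove a lower bound ruling out \emph{every} \mso transduction without modulo counting, quantified over all finite palettes and all fixed quantifier ranks. The first thing to observe is that the naive counting argument—bounding the number of outputs by $k^{n}$—does not help here: on the complete binary tree with $2^{d}$ leaves there are $2^{2^{d}-1}$ colourings of the internal subtrees, and this stays below $k^{2^{d}}$ already for $k=2$, so there is (barely) enough information-theoretic room to address all subtree-colourings by leaf-colourings. Hence the obstruction must be genuinely logical. I would therefore first reduce the conjecture to the non-definability of a \emph{section}: realising the full relation of Lemma~\ref{lem:node-labelled-trees} forces the transduction, after guessing its leaf-colouring, to assign to every internal subtree a colour in an \mso-definable and \emph{consistent} way, meaning that the colour of the least subtree containing $x$ and $y$ must not depend on the choice of $x$ and $y$. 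This is exactly the role played by the orientation and the ``chosen leaf'' injection in the proof above, and it is where $\Int_{3}$ counting was used.

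Next I would isolate a single hard family of target outputs on the complete binary trees $\structt_{d}$. The natural candidate is a colouring that encodes a modulo-$2$ quantity along root-to-node paths, for instance the colouring that assigns to each internal subtree the parity of its depth. Realising this colouring means that the colour formula $\varphi(x,y)$ must return the parity of the length of the common prefix of $x$ and $y$ (equivalently, the parity of the number of proper ancestors of the least subtree containing them), computed in $\structt_{d}$ augmented by the guessed leaf-colouring. Because $\structt_{d}$ is self-similar and all its leaves sit at the same depth, a leaf-colouring cannot ``stripe'' the levels by their depth-parity in a locally readable way; intuitively, recovering the parity of the depth requires counting up the ancestor chain, which is precisely the modular test that \mso lacks.

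To turn this intuition into a proof I would run a composition (Feferman--Vaught / Ehrenfeucht--Fra\"iss\'e) argument for \mso without counting over coloured complete binary trees. The goal is a pumping lemma: for every palette and every quantifier rank $q$ there is a period $p$ such that, in any complete binary tree coloured by that palette, the \mso-type of quantifier rank $q$ computed along an ancestor chain is periodic with period $p$, so the formula cannot distinguish subtrees whose depths differ by a multiple of $p$. This would contradict the requirement that the colour flip exactly with depth-parity once $d$ is large. The final contradiction can be packaged as a reduction to the classical fact that \mso is strictly weaker than \cmso (parity of the cardinality of a pure set is not definable without counting): the levels of $\structt_{d}$, together with their sibling subtrees, play the role of the elements of a set whose parity must be decided.

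The main obstacle is the existential guess of the leaf-colouring. Because the transduction may choose its colouring depending on the target, an automorphism argument on the bare tree is useless—the guessed colouring can break every symmetry and can encode a large amount of information. The crux is therefore to prove the pumping lemma \emph{uniformly over all bounded leaf-colourings}, i.e.\ to show that no bounded colouring of the leaves can simulate a modular count that \mso reads off in a counting-free manner. Equivalently, one must rule out that a clever colouring smuggles the depth-parity information into a form that fixed-rank \mso can decode. I expect this uniform, guess-resistant pumping argument to be the hard technical core; the reductions in the earlier steps are comparatively routine.
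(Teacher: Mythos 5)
The first thing to say is that the paper offers no proof of this statement: it is stated, and deliberately left, as an open conjecture, so there is no official argument to compare yours against. Your proposal must therefore stand on its own, and as it stands it is a research plan rather than a proof. The parts you do carry out are sensible: the observation that the crude output-counting bound ($|C|^{n-1}$ colourings of internal subtrees versus $k^{n}$ possible outputs) gives no contradiction is correct, and the reduction to a single hard family of outputs --- producing the depth-parity colouring of the complete binary tree $\structt_d$ from some guessed leaf-colouring via a fixed counting-free formula --- is a legitimate and natural line of attack; it also correctly identifies the mechanism (the $\Int_3$ orientation and the chosen-leaf injection) that Lemma~\ref{lem:node-labelled-trees} uses and that the conjecture asserts cannot be eliminated.

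The gap is the pumping lemma, which you yourself flag as ``the hard technical core'' and do not prove; worse, in the form you state it, it is false. You claim that for every palette and quantifier rank $q$ there is a period $p$ such that in \emph{any} coloured complete binary tree the rank-$q$ types along an ancestor chain are $p$-periodic. An adversarial leaf-colouring defeats this: mark a single leaf $\ell_0$ red, and for each node $v_i$ on the root-to-$\ell_0$ path colour the leaves of the sibling subtree hanging off $v_i$ uniformly blue or green according to an arbitrary bit $b_i$. Already at small quantifier rank the type of $v_i$ relative to these parameters determines $b_i$, and the sequence $(b_i)_i$ can be chosen aperiodic. So the periodicity you need cannot be a property of arbitrary colourings; it would have to be extracted from the assumption that the formula answers depth-parity \emph{correctly for all pairs of leaves simultaneously}, which constrains the colouring globally --- and deriving periodicity from that constraint is essentially the conjecture itself, restated. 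Until that step is supplied (and it must somehow cope with the fact that the existentially guessed colouring breaks every automorphism of $\structt_d$ and can encode on the order of $2^{d}$ bits), the proposal does not establish the statement; it reduces one open problem to another of comparable difficulty.
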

Since the transduction in the conjecture is easily seen to be  rank-decreasing (for example, because it is sub-definable with modulo counting), the conjecture would imply that modulo counting is necessary to define all rank-decreasing transductions, even for inputs that are binary trees. The conjecture could be seen as evidence in favour of viewing modulo counting as a necessary feature of monadic second-order logic. (A mentioned in the introduction, a related phenomenon appears in the Seese conjecture.)

\subsubsection{Transductions that input trees}
\label{sec:finish-proof-of-from-trees}
Having defined trees and their representation, in  this section, we prove that the implication 
\begin{align*}
\text{rank-decreasing} \quad \Rightarrow \quad \text{sub-definable}
\end{align*}
holds for all transductions that input trees, with no assumptions on the output structures. In the following, a tree-to-$\classc$ transduction is a transduction that inputs trees, with the representation discussed in the previous section, and outputs structures from a class $\classc$. The output class $\classc$ need not be a class of graphs; it can use relations of arity more than two.  

\begin{theorem}\label{thm:from-trees}
 Let $\classc$ be a class of structures. If a tree-to-$\classc$ transduction is rank-decreasing, then it is sub-definable. 
\end{theorem}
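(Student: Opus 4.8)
The plan is to reduce the theorem to a single structural statement about the output structures, and then prove that statement by a bottom-up induction over the tree, using the set-to-$\classc$ case (Lemma~\ref{lem:informative-coloring}) as the engine at each node. First I would record the consequence of the hypothesis that will actually be used: by Lemma~\ref{lem:ranks_in_trees}, every sub-forest of the input tree has cut-rank bounded by an absolute constant in the tree, so since the transduction is rank-decreasing there is a single $k$ such that in every output structure $\structa$ occurring in the transduction, every sub-forest has cut-rank at most $k$. Consequently it suffices to prove the following claim: there is a single definable transduction $\tau_k$ such that whenever $\structa$ is a structure on the leaves of a tree $\structt$ in which every sub-forest has cut-rank at most $k$, the pair $(\structt,\structa)$ belongs to $\tau_k$. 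This claim immediately gives that the original transduction is contained in $\tau_k$, hence sub-definable.

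For the claim I would argue by induction on the tree, maintaining a bounded \emph{informative colouring} of the leaves below each node. The key point at a node $v$ with children subtrees $C_1,\dots,C_m$ is that the unions of children of $v$ are among the sub-forests, and hence have cut-rank at most $k$ in $\structa$; equivalently, the ``children set'' $\{C_1,\dots,C_m\}$, equipped with the structure that records how its members interact in $\structa$ (read off through the bounded colours already produced inside each child by the induction hypothesis), has universally bounded cut-rank. This is precisely the hypothesis of the implication \ref{it:universally-bounded-rank} $\Rightarrow$ \ref{it:bounded-informative-colouring} of Lemma~\ref{lem:informative-coloring}, applied to the class of such children-structures, which therefore supplies an informative colouring of the children with a bounded number of colours. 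Combining this colouring of the children with the colourings produced inside the children yields a bounded informative colouring of the leaves below $v$.

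Finally I would turn the inductively defined colouring into an actual definable transduction. The colouring lives on the subtrees (the colour of a node is the colour it receives as a child of its parent), so it can be guessed by a definable transduction using Lemma~\ref{lem:node-labelled-trees}. Given this colouring, to decide whether an output relation $R$ holds on a tuple $\bar a$ of leaves, a \cmso formula navigates the tree to the least common ancestors of the coordinates of $\bar a$, reads the colours of the relevant nodes and leaves, and uses informativeness to determine the quantifier-free type of $\bar a$ in $\structa$; Theorem~\ref{thm:rank-invariant-under-quantifier-rank} guarantees that passing to bounded-quantifier-depth information along the way costs nothing. This exhibits $(\structt,\structa)$ as an output of a definable transduction, establishing the claim.

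I expect the main obstacle to be the per-node reduction to Lemma~\ref{lem:informative-coloring} together with its global assembly: one must define, with a vocabulary of bounded arity, the ``interaction structure'' on the children set so that its cut-rank faithfully reflects the cut-rank of the corresponding sub-forest in $\structa$, and then prove that the quantifier-free type in $\structa$ of a tuple whose coordinates are scattered across several children is determined by the local leaf colours together with the children colours. Establishing this factorisation of types through the tree, in particular handling tuples that straddle many siblings, will require a Ramsey-style homogenisation argument in the spirit of the proofs of Lemma~\ref{lem:rank_of_union} and Claim~\ref{claim:twin-antichains}.
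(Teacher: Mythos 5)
Your overall architecture matches the paper's: extract from rank-decreasingness a uniform bound $k$ on the cut-rank of sub-forests in the outputs (via Lemma~\ref{lem:ranks_in_trees}), process the tree bottom-up, invoke the set-to-$\classc$ machinery of Lemma~\ref{lem:informative-coloring} to handle the interaction of the children at each node, and guess the resulting node colouring with Lemma~\ref{lem:node-labelled-trees}. However, there is a genuine gap in the invariant you propose to carry up the induction: a ``bounded informative colouring of the leaves below each node'' cannot exist in general. By the implication \ref{it:bounded-informative-colouring} $\Rightarrow$ \ref{it:universally-bounded-rank} of Lemma~\ref{lem:informative-coloring} --- the very lemma you invoke --- a structure admits an informative colouring with boundedly many colours only if \emph{all} subsets of its universe have bounded cut-rank, whereas the hypothesis only controls sub-forests. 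Already for a linear order produced from a caterpillar tree (every sub-forest is a prefix or a singleton, of rank at most $2$, so this transduction is rank-decreasing), any informative colouring must give all elements distinct colours: two same-coloured elements $x<y$ would force the quantifier-free types of $(x,y)$ and $(y,x)$ to coincide. So the object you plan to maintain does not exist, and the final read-off step fails for the same underlying reason: how a leaf $a$ relates to the exterior of an ancestor subtree $C$ is information that accumulates along the entire path from $a$ to $C$, and is not a function of boundedly many colours read at the leaf and near the least common ancestors.

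The paper repairs exactly this point by making the bounded information \emph{per tuple} rather than per leaf. For a sub-forest $X$ of rank at most $k$, the structure $\structa|X$ (which retains relations parametrised by external elements) realises only boundedly many quantifier-free types of $m$-tuples; the ``state'' of a tuple $\bar a$ at a node $X$ is the compressed value $\lambda_X(\text{type of }\bar a\text{ in }\structa|X) \in Q$, and Lemma~\ref{lem:transition-function-on-types} packages the transition data $\lambda_X,\gamma_X$ into a bounded node colouring. The \cmso formula defining an output relation then existentially guesses and verifies the \emph{run} of $\bar a$ --- a $Q$-labelling of the nodes computed bottom-up --- instead of reading fixed colours at fixed positions. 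Your per-node idea of applying Lemma~\ref{lem:informative-coloring} to the children-structure is essentially how the local update is justified in the paper as well; the fix is to keep that step but replace the global informative leaf colouring by the per-tuple guessed run.
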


This theorem is proved using a combination of the results from Sections~\ref{sec:independent-sets} and~\ref{sec:trees}. The general idea is that a pair $(\structt, \structa)$ in a rank-decreasing transduction can be seen as a tree decomposition of $\structa$, where every subtree of the tree $\structt$ describes a part of the structure $\structa$, and this part is further partitioned by the children of the subtree\footnote{Some parts of our proof are inspired by ideas dating back to~\cite{courcelle1995logical}.}.

We will use a tree automaton to reconstruct the 
structure $\structa$ by a bottom-up pass through this tree. The run of the automaton will be represented using the coloured trees from Section~\ref{sec:trees}, while the local update which uses the states in the child subtrees to compute the state in their parent will be handled using the results from Section~\ref{sec:independent-sets}. 

\subsubsection*{Induced sub-structures.} We begin by explaining how a part of a structure is viewed as a structure. The idea is that we want to retain information about how the sub-structure communicates with its exterior. This is formalized in the following definition.

\begin{definition}
 Let $\structa$ be a structure with maximal arity $m$ and let $X$ be a subset of its universe. The structure $\structa | X$ is defined as follows. The universe is $X$. For every $\ell \leq m$, every $c_1,\ldots,c_\ell \in \structa \setminus X$ and every quantifier-free formula 
 \begin{align*}
 \varphi(x_1,\ldots,x_k, y_1,\ldots,y_{\ell} )
 \end{align*}
 with up to $m$ variables over the vocabulary of $\structa$, the structure $\structa | X$ has a relation of arity $k$ which is interpreted as 
 \begin{align*}
 \setbuild{(a_1,\ldots,a_k) \in X^k} { $\structa \models \varphi(a_1,\ldots,a_k, c_1,\ldots,c_\ell)$}.
 \end{align*}
\end{definition}

The vocabulary of the structure in the above definition is unbounded and depends on the elements of the structure that are not in $X$. This will not be an issue, since we will be interested in induced sub-structures of bounded rank, and for such sub-structures the vocabulary will be essentially bounded, as explained in the following lemma. 

\begin{lemma}
    Let $\classc$ be a class of structures with maximal arity $m$. For $\structa \in \classc$ and $X$ is a subset of its universe, the number of quantifier-free types with $m$ variables in $\structa | X$ is asymptotically equivalent to the rank of $X$ in $\structa$. 
\end{lemma}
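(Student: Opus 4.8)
The plan is to compare the partition of $X^m$ induced by equality of rows in the type matrix with the partition induced by equality of quantifier-free $m$-types in $\structa | X$, and to show that the numbers of blocks of these two partitions agree up to an additive constant depending only on the (fixed, finite) vocabulary of $\structa$ and on $m$. Since the cut-rank of $X$ is by definition the number of distinct rows of the type matrix (and is asymptotically equivalent to the algebraic rank by Lemma~\ref{lem:rank-row-column}), this will yield exactly the asymptotic equivalence claimed.

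The heart of the argument is a bookkeeping on arities. Fix two tuples $\bar x, \bar x' \in X^m$. A row entry is the quantifier-free type of a $2m$-tuple $\bar x \bar y$ with $\bar y \in (\structa \setminus X)^m$, and every atom occurring in such a type falls into one of three kinds: atoms all of whose arguments lie outside $X$, atoms all of whose arguments lie in $\bar x$, and mixed atoms. An atom of the last two kinds uses some $k \geq 1$ coordinates of $\bar x$ and some $\ell \geq 0$ coordinates of $\bar y$; since the maximal arity is $m$, the number of distinct variables it uses is at most $m$, so $k + \ell \leq m$. Writing $\bar c$ for the length-$\ell$ tuple of outside elements involved, such an atom is exactly $\varphi(x_{i_1},\ldots,x_{i_k},\bar c)$ for a quantifier-free formula $\varphi$ with at most $m$ variables, hence — by the definition of $\structa | X$ — it is the relation $R_{(\ell,\bar c,\varphi)}$ evaluated at the subtuple $(x_{i_1},\ldots,x_{i_k})$, whose truth value is recorded in the quantifier-free $m$-type of $\bar x$ in $\structa | X$. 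The equality atoms cause no difficulty: $x_i = x_j$ is pure in $X$, $x_i = y_j$ is always false, and $y_i = y_j$ lies wholly outside $X$.

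This gives one inequality at once. If $\bar x$ and $\bar x'$ have the same quantifier-free $m$-type in $\structa | X$, then they agree on all atoms touching $X$, and trivially on atoms lying wholly outside $X$ (which depend only on $\bar y$); hence for every column $\bar y$ the types of $\bar x \bar y$ and $\bar x' \bar y$ coincide, so $\bar x$ and $\bar x'$ have the same row. Thus the number of distinct rows is always at most the number of realised quantifier-free $m$-types. For the reverse, I would first assume $\structa \setminus X$ nonempty: then every relation $R_{(\ell,\bar c,\varphi)}$ is witnessed by a column, obtained by padding $\bar c$ (of length $\ell \leq m$) to a tuple of $(\structa \setminus X)^m$ with repetitions of an arbitrary outside element. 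Consequently two tuples with the same row agree on every atom of $\structa | X$, so they realise the same quantifier-free $m$-type, and here the two counts are in fact equal.

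The only remaining case, and the reason the statement is phrased asymptotically rather than as an equality, is $X$ equal to the whole universe, where there are no columns at all: the type matrix then collapses to a single empty row, whereas $\structa | X$ retains the relations with $\ell = 0$, namely the quantifier-free formulas over the original vocabulary of $\structa$. But the number of quantifier-free $m$-types over a fixed finite vocabulary is a constant $K$ depending only on that vocabulary and on $m$, so here the number of realised types is at most $K$. Combining the cases, the number of realised quantifier-free $m$-types is bounded by the maximum of the number of rows and $K$, while the number of rows is always at most the number of realised types; hence each rank function is bounded by the other, which is precisely asymptotic equivalence. I expect the main obstacle to be exactly the arity accounting of the second paragraph — ensuring that a mixed atom never spends more than its budget of $m$ variables, so that it fits one of the relations added to $\structa | X$ — together with the need to isolate the degenerate no-column case, which is what blocks an exact equality.
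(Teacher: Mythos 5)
The paper states this lemma without any proof, so there is nothing to compare against; your argument is a correct and careful unwinding of the two definitions, and it is surely the intended one. Both directions check out: the arity bookkeeping showing that every atom of the $2m$-type touching $X$ is captured by one of the relations of $\structa | X$ gives ``rows $\le$ types,'' the padding of $\bar c$ to a full column gives the converse when $\structa \setminus X$ is nonempty, and you correctly isolate the only degenerate case ($X$ the whole universe), where the bounded count of types over the original vocabulary still yields asymptotic equivalence in the paper's sense. No gaps.
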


Induced sub-structures are well behaved with respect to cut-rank as witnessed by the following lemma.

\begin{lemma}\label{lem:rank_in_substructure}
    For any structure $\structa$ and $X$ and $Y$ two subsets of its universe such that $Y \subseteq X$, the cut-ranks of $Y$ in $\structa$ or in $\structa | X$ are the same.
\end{lemma}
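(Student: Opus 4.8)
The plan is to prove that the two cut-ranks are literally equal, by showing that the type matrix of $Y$ computed in $\structa$ and the type matrix of $Y$ computed in $\structa | X$ induce the \emph{same} partition of the set of row indices $Y^m$ into classes of equal rows. Since cut-rank is by definition the number of distinct rows, equal partitions give equal cut-ranks. Here $m$ is the maximal arity of $\structa$; note that $\structa | X$ also has maximal arity $m$, because its relations come from formulas $\varphi$ with $k+\ell\le m$ variables, so the internal arity $k$ is at most $m$. Thus both type matrices are indexed by $2m$-tuples. The heart of the argument is a dictionary translating quantifier-free types over $\structa | X$ of tuples inside $X$ into quantifier-free types over $\structa$ of those tuples enriched with external parameters.

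First I would set up this dictionary. An atomic relation of $\structa | X$ is indexed by a triple $(\ell,\bar c,\varphi)$ with $\ell\le m$, $\bar c\in(\structa\setminus X)^\ell$, and $\varphi$ a quantifier-free $\structa$-formula whose variables split as $k+\ell\le m$; it holds of a tuple $\bar b\in X^k$ exactly when $\structa\models\varphi(\bar b,\bar c)$. Consequently, for any tuple $\bar u\in X^{2m}$, the quantifier-free type of $\bar u$ over $\structa | X$ is determined by, and determines, the family of $\structa$-quantifier-free types of the tuples $\bar b\,\bar c$, where $\bar b$ ranges over sub-selections of at most $m$ coordinates of $\bar u$ and $\bar c$ over external tuples from $\structa\setminus X$ with $|\bar b|+|\bar c|\le m$. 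This is just unfolding the definition of $\structa | X$, together with the standard fact that a quantifier-free type over a vocabulary of maximal arity $m$ is the record of all atomic facts on the $\le m$-element sub-selections of the tuple.

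Next I would compare the two row-equivalences on $Y^m$, writing $\bar a\sim\bar a'$ for ``$\bar a$ and $\bar a'$ are the same row''. On both sides the column tuple is common to the two rows and can be fixed. In $\structa$ (columns ranging over $(\structa\setminus Y)^m$) one unfolds $\bar a\sim\bar a'$ to: for every sub-selection $\bar a_I$ with $|I|\le m$ and every $\bar e\in(\structa\setminus Y)^{\le m-|I|}$, the $\structa$-types of $\bar a_I\bar e$ and $\bar a'_I\bar e$ agree (length-$m$ columns determine and are determined by all shorter external tuples via padding with repetitions). In $\structa | X$ (columns ranging over $(X\setminus Y)^m$) one unfolds, via the dictionary above, to: for every sub-selection $\bar a_I$ and every $\bar e=\bar d\,\bar c$ with $\bar d$ from $X\setminus Y$, $\bar c$ from $\structa\setminus X$, and $|I|+|\bar e|\le m$, the $\structa$-types of $\bar a_I\bar e$ and $\bar a'_I\bar e$ agree. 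The key observation is that the external parameters $\bar c$ of the $\structa | X$-relations exactly compensate for the external columns dropped when the columns are cut down from $\structa\setminus Y$ to $X\setminus Y$: since $\structa\setminus Y=(X\setminus Y)\sqcup(\structa\setminus X)$ and quantifier-free types are insensitive to the order in which external elements are listed, the tuples $\bar e=\bar d\,\bar c$ realise exactly all of $(\structa\setminus Y)^{\le m-|I|}$. Hence the two conditions quantify over identical data, the two equivalences coincide, and the cut-ranks are equal.

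The only delicate point, and the place I would take care over, is this arity-and-parameter bookkeeping: one must verify that splitting an arbitrary external tuple $\bar e\in(\structa\setminus Y)^{\le m}$ into its $X\setminus Y$-part (which becomes a column of $\structa | X$) and its $\structa\setminus X$-part (which becomes a parameter $\bar c$ of an $\structa | X$-relation) is lossless, and that the constraint $k+\ell\le m$ is precisely what makes the $2m$-tuple indexing match on both sides. The correspondence is exact as soon as there is at least one internal column available to carry the external parameters, i.e. whenever $X\setminus Y\neq\emptyset$; this is the essential content of the statement and makes the two cut-ranks agree on the nose.
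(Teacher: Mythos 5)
The paper states this lemma without any proof, so there is nothing to compare against; judged on its own, your argument is correct and is surely the definition-unfolding the authors had in mind. Your dictionary is the right mechanism: an atomic fact of $\structa|X$ about a sub-selection of $\bar a\,\bar d$ is exactly an $\structa$-quantifier-free fact about that sub-selection together with an external parameter tuple $\bar c\in(\structa\setminus X)^{\le m}$, and since $\structa\setminus Y=(X\setminus Y)\sqcup(\structa\setminus X)$ and quantifier-free types are permutation-invariant, quantifying over columns $\bar d\in(X\setminus Y)^m$ plus parameters $\bar c$ sweeps out exactly the same external data as quantifying over columns $\bar e\in(\structa\setminus Y)^m$. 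Hence the two row-equivalences on $Y^m$ coincide and the numbers of distinct rows agree. Two small points deserve emphasis. First, the degenerate case you flag is not just a convenience hypothesis: when $Y=X$ and $\structa\setminus X\neq\emptyset$, the type matrix of $Y$ in $\structa|X$ has no columns at all under Definition~\ref{def:cut-rank-structures-and-types}, so its cut-rank collapses to at most $1$ while the cut-rank of $Y$ in $\structa$ can be arbitrary; the lemma as literally stated is false there, and your restriction to $X\setminus Y\neq\emptyset$ is the honest fix (all uses of the lemma in the paper are unaffected, since in the degenerate case the rank in $\structa|X$ only drops). Second, your claim that $\structa|X$ has maximal arity $m$ tacitly assumes the definition of $\structa|X$ admits $\ell=0$ (no external parameters), so that internal arity $k=m$ is realised; if only $\ell\ge 1$ were allowed, the row index sets would be $Y^m$ versus $Y^{m-1}$ and only asymptotic equivalence would survive. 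This is a reading of the paper's (slightly ambiguous) definition rather than a gap in your reasoning, but it is worth stating explicitly.
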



Consider a structure $\structa$, and a partition of its universe into subsets $X_1,\ldots,X_n$. 
In the proof below, we will be interested in how the type of a tuple $\bar a$ in  $\structa$ can be deduced from the type of this tuple in the induced sub-structures $\structa|X_1,\ldots,\structa|X_n$. Since the induced sub-structures might not contain some elements of the tuple, we will want to talk about partial tuples, defined as follows. 
A \emph{partial $n$-tuple} in a structure is a tuple of length $n$, where every component is either an element of the structure, or an ``undefined'' value. Such partial tuples arise when restricting a structure to a subset: if $\bar a$ is a (complete) $n$-tuple in $\structa$, and $X$ is a subset of the universe in $\structa$, then $\bar a$ can be viewed as a partial $n$-tuple in $\structa | X$, which is undefined on those coordinates that use elements not in $X$. The type of a partial $n$-tuple is defined in the same way as the type of a complete $n$-tuple, except that it stores information about which coordinates are defined/undefined.  The following lemma is an immediate corollary of the definitions. 

\begin{lemma}
    [Compositionality] Let $\structa$ be a structure, and let $\Xx$ be a family of subsets that partition its universe. The type of a tuple $\bar a$ in $\structa$ depends only on the types of this tuple projected into the structures $\structa|X$ for $X \in \Xx$.
\end{lemma}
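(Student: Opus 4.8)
The plan is to reduce the claim to the level of individual atomic formulas. Since the quantifier-free type of $\bar a$ in $\structa$ is nothing but the Boolean-closed record of which atoms $R(x_{i_1},\dots,x_{i_r})$ and equalities $x_i = x_j$ hold of $\bar a$, it suffices to show that the truth of each such atom is determined by the family of projected types $\big(\text{type of }\bar a\text{ in }\structa|X\big)_{X\in\Xx}$. Two kinds of atoms have to be treated: equalities and genuine relational atoms.

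First I would recover the block structure. Because $\Xx$ partitions the universe, every coordinate $a_i$ lies in exactly one block, and the projected type of $\bar a$ in $\structa|X$ records precisely which coordinates are defined (those in $X$) and which are undefined; hence the assignment of coordinates to blocks can be read off from the defined/undefined patterns. An equality $a_i = a_j$ is then decided as follows: if $a_i$ and $a_j$ lie in different blocks they are distinct, which is visible from the patterns, while if they lie in a common block $X$ the equality is part of the projected type in $\structa|X$.

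The heart of the argument is the relational atoms, and here the key observation is that every atom $R(a_{i_1},\dots,a_{i_r})$ of $\structa$ uses at most $m$ arguments, where $m$ is the maximal arity. I would fix a block $X$ containing at least one of these arguments, regard the arguments lying in $X$ as the genuine variables, and treat the remaining arguments --- there are at most $m$ of them, all lying outside $X$ --- as an external parameter tuple $\bar c$. By the very definition of $\structa|X$, there is a relation $R_{\varphi,\bar c}$ (with $\varphi = R$ and the variables suitably arranged) interpreted so that $R_{\varphi,\bar c}$ holds of the in-$X$ arguments of $\bar a$ exactly when $\structa \models R(a_{i_1},\dots,a_{i_r})$. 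Since $R_{\varphi,\bar c}$ is one of the relation symbols of $\structa|X$, whether the corresponding atomic formula holds of $\bar a$ is recorded in the projected type of $\bar a$ in $\structa|X$, and the truth of the original atom is thereby determined.

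This is exactly why the definition of $\structa|X$ ranges over all external tuples of length at most $m$ and all quantifier-free formulas with up to $m$ variables: it guarantees that every atom of $\structa$, after splitting its arguments into those inside $X$ and those outside $X$, is literally one of the relations of $\structa|X$. I do not expect a genuine obstacle here; the proof is the promised immediate consequence of the definitions. The only points requiring minor care are ensuring the external parameter tuple never exceeds length $m$ (guaranteed by the arity bound) and handling cross-block equalities through the partition, as above. Collecting the atomic facts and closing under Boolean combinations then yields the quantifier-free type of $\bar a$ in $\structa$, as required.
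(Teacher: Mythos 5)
The paper offers no proof of this lemma at all---it is declared an immediate corollary of the definitions---so the only question is whether your unpacking is airtight. It almost is, but there is one genuine gap, at the cross-block relational atoms. Your argument shows that the truth value of an atom $R(a_{i_1},\ldots,a_{i_r})$ is \emph{stored somewhere} in the projected type of $\bar a$ in $\structa|X$: namely, in the entry for the relation symbol indexed by the pair $(\varphi,\bar c)$, where $\bar c$ is the tuple of arguments lying outside $X$. But that is not yet what the lemma asserts. ``Depends only on'' means---and is used in Lemma~\ref{lem:transition-function-on-types} to mean---that two tuples with the same projected types have the same type in $\structa$. The subscript $\bar c$ consists of concrete elements of $\structa\setminus X$, and these are not recoverable from the abstract projected types: for a second tuple $\bar b$ with identical projected types but different out-of-$X$ coordinates $\bar c'$, the corresponding atom is stored at a different address $(\varphi,\bar c')$, and equality of the two types does not by itself tell you that the values at these two different addresses agree. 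So ``the truth of the original atom is thereby determined'' is determined by the projected type \emph{together with} the identity of the external coordinates, which is weaker than the functional dependence being claimed.

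The missing step is a coordinate-replacement (hybrid) argument. Given $\bar a$ and $\bar b$ with equal projected types in every part, pass from $\bar a$ to $\bar b$ by swapping, one block at a time, the coordinates of the atom lying in that block. At the step for block $X$, the not-yet-swapped and the already-swapped remaining coordinates together form a single external tuple $\bar c$ that is \emph{common to both hybrids} (they all lie in blocks other than $X$, and there are at most $m$ of them by the arity bound), so the two hybrids' atoms are read off from the \emph{same} relation of $\structa|X$, applied once to the $X$-coordinates of $\bar a$ and once to those of $\bar b$; equality of the projected types in $\structa|X$ then forces agreement, and chaining over the blocks gives the claim. Your treatment of equalities via the defined/undefined patterns and your arity bookkeeping are fine and need no change.
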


We now rephrase the Compositionality Lemma in more precise terms, preparing the ground for the proof below.

\begin{lemma}\label{lem:transition-function-on-types}
 For every fixed vocabulary and $k,\ell, m \in \set{1,2,\ldots}$ 
 one can find a finite set of colors $Q$ with the following property. Let $\structa$ be a structure over the fixed vocabulary together with a family of subsets $\Xx$ that partitions its universe, such that every subfamily $\Yy \subseteq \Xx$ has union of  rank  at most $k$. Then there is a family of functions 
 \begin{align*}
\lambda_X : \text{types of partial $m$-tuples in $\structa|X$} \to Q, 
 \end{align*}
 with one function for every part $X \in \Xx$, and a function
 \begin{align*}
 \gamma :\ & Q^\ell \to \text{types of partial $m$-tuples in $\structa$}
 \end{align*}
 such that for every distinct parts $X_1,\ldots,X_\ell \in \Xx$ and every partial $m$-tuple $\bar a$ in $\structa$ that is contained in $X_1 \cup \cdots \cup X_\ell$, the type of $\bar a$ in $\structa$ is obtained by applying $\gamma$ to 
 \begin{align*}
 (\lambda_{X_1}(\text{type of $\bar a$ in $\structa | X_1$}), \ldots, 
 \lambda_{X_\ell}(\text{type of $\bar a$ in $\structa | X_\ell$})).
 \end{align*}
\end{lemma}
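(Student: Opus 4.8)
The plan is to read this lemma as a finitary, uniform version of the Compositionality Lemma. That lemma already tells us that the quantifier-free type of $\bar a$ in $\structa$ is determined by the tuple of projected types $(t_1,\dots,t_\ell)$, where $t_i$ is the type of $\bar a$ in $\structa|X_i$; what remains is to (a) replace each \emph{a priori} unbounded type $t_i$ by a colour drawn from a single finite set $Q$ depending only on the vocabulary and on $k,\ell,m$, and (b) exhibit one reconstruction function $\gamma$ that works simultaneously for every choice of the $\ell$ distinct parts. The per-part maps $\lambda_X$ may depend on $\structa$ and on $X$, but $Q$ and $\gamma$ must not depend on which parts are selected, and this is the source of the only real difficulty.

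Bounding $Q$ is immediate from the rank hypothesis. Since $\{X\}$ is itself a subfamily of $\Xx$, each part $X$ has rank at most $k$, so by the preceding lemma equating the number of quantifier-free $m$-types in $\structa|X$ with the rank of $X$, the structure $\structa|X$ realises at most $N=N(k,m)$ such types; taking into account the finitely many defined/undefined patterns, the set of types of partial $m$-tuples in $\structa|X$ has size bounded by a constant depending only on $k,m$ and the vocabulary. Thus each $\lambda_X$ has a domain of uniformly bounded size. The \emph{within-part} part of the reconstruction is then free: the relations of $\structa|X$ that use no external parameters are exactly the original relations restricted to $X$, so $t_i$ records, over the fixed vocabulary, the quantifier-free type of those coordinates of $\bar a$ that lie in $X_i$, which $\gamma$ can read off directly from the colour at position $i$.

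The heart of the matter is reconstructing the \emph{cross-part} atomic facts $R(a_{j_1},\dots,a_{j_r})$ whose arguments lie in several parts, using colours that cannot refer to the other selected parts. The key observation is that such a fact is visible in every projection containing one of its arguments: in $\structa|X_p$ the remaining arguments appear as the external parameters of a relation of the form considered in the definition of $\structa|X_p$, so $t_p$ records the interaction of its own coordinates with \emph{all} outside elements, in particular with the coordinates living in the other parts. Because each part has rank at most $k$, Lemma~\ref{lem:rank-row-column} shows that every element has one of only boundedly many row/column behaviours towards the rest of the universe, and a cross-part fact between two coordinates is a function of the pair of these behaviours. To make this usable by a single $\gamma$, I would exploit the full hypothesis that \emph{every} union of parts has rank at most $k$ to assign, by the twin / informative-colouring method of Lemma~\ref{lem:informative-coloring}, a part-independent boundary colour to each element, so that every cross-part relation factors through the pair of boundary colours; folding these boundary colours into $Q$ then lets $\gamma$ recover each cross-part fact from the two colours at the relevant positions, the positions themselves identifying which coordinates are being compared.

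I expect this last step --- producing element colours that are simultaneously bounded, part-independent, and fine enough to determine all cross-part relations --- to be the main obstacle, precisely because the naive per-part summaries (the row/column types of $\structa|X_i$) are defined relative to a single part and are therefore not directly comparable across the parts that $\gamma$ must combine; it is the all-unions-bounded-rank assumption, rather than the per-part bound alone, that I would lean on to overcome this.
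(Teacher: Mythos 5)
The paper offers no actual proof of this lemma: it is introduced as a ``more precise rephrasing'' of the Compositionality Lemma, with the finiteness of $Q$ implicitly supplied by the preceding (also unproved) lemma that bounds the number of quantifier-free types in $\structa|X$ by a function of the rank of $X$. Your analysis is in fact more careful than the paper's presentation, and the difficulty you isolate is real: $\gamma$ is a single function that must work for every choice of the $\ell$ distinct parts, while the type of $\bar a$ in $\structa|X_i$ lives in a vocabulary whose relation names mention concrete elements outside $X_i$, so the colours $\lambda_{X_i}(t_i)$ are part-relative labels, and the ``connecting matrix'' telling $\gamma$ how a row-class of $X_i$ interacts with a row-class of $X_j$ a priori depends on the pair $(X_i,X_j)$. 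A sanity check confirms this is not cosmetic: when $\Xx$ consists of singletons, every part trivially has rank $\le 1$, the local types carry almost no transferable information, and the lemma (for graphs, say) becomes precisely the statement that a structure all of whose subsets have rank $\le k$ admits a bounded informative colouring, i.e.\ the implication \ref{it:universally-bounded-rank}~$\Rightarrow$~\ref{it:bounded-informative-colouring} of Lemma~\ref{lem:informative-coloring}. So the lemma is genuinely stronger than ``compositionality plus a count of types,'' and it genuinely needs the hypothesis on unions of subfamilies rather than the per-part rank bound alone --- both points you correctly recognise.

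The gap is that you stop exactly where the work begins: you say you ``would lean on'' the all-unions hypothesis and the twin/informative-colouring method to produce part-independent boundary colours, but you do not construct them, and that construction is the entire content of the lemma beyond compositionality. To close it, one has to apply Lemma~\ref{lem:informative-coloring} (or rerun its twin argument) to a derived structure in which the row-classes of the individual parts become elements and the connecting data become bounded-arity relations, and one must verify that the hypothesis ``every union of a subfamily of $\Xx$ has rank at most $k$ in $\structa$'' transfers to universally bounded rank for that derived class; the resulting informative colouring of parts is then folded into $Q$ alongside the internal type and the row-class data. As written, your argument establishes the bound on $|Q|$ and the within-part reconstruction, and correctly reduces the cross-part reconstruction to the existence of such part-independent colours, but it does not prove that they exist.
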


We now have all the ingredients that are needed to prove Theorem~\ref{thm:from-trees}. 
Consider a pair $(\structt, \structa)$ that belongs to a rank-decreasing tree-to-$\classc$ transduction. For every node $X$ of the tree $\structt$, we will be interested in the structure $\structa | X$ along with the partition $\Xx_X$ that corresponds to the partition of $X$ into its children subtrees. The following lemma shows that this equivalence relation satisfies the assumptions of Lemma~\ref{lem:transition-function-on-types}.

\begin{lemma}
 There is some $k$ such that for every $(\structt, \structa)$ in the transduction, every node $X$ of $\structt$, and every subfamily $\Yy \subseteq \Xx_X$, the rank of the union of $\Yy$  in $\structa|X$ is at most $k$. 
\end{lemma}

\begin{proof}
 By the assumption that the transduction is rank-decreasing. (to-do: expand this justification) 
\end{proof}

Apply Lemma~\ref{lem:transition-function-on-types} to the $k$ from the above lemma, and $m$ being the maximal arity of relations in the vocabulary of the class $\classc$. For every node $X$ in the tree, apply Lemma~\ref{lem:transition-function-on-types} to the structure $\structa | X$ and the partition of $X$ into its children, yielding some functions 
\begin{align}
\label{eq:transition-functions}
 \set{\lambda_Y}_{\text{$Y$ is a child of $X$}} 
 \quad \text{and} \quad 
 \gamma_X.
\end{align}
The family of functions described above can be viewed as a node-coloured tree $\structs$, where the colour of each node $X$ indicates the functions $\lambda_X$ and $\gamma_X$. We say that $\structs$ follows from applying Lemma~\ref{lem:transition-function-on-types} to $(\structt, \structa)$ if it arises as a result of the above construction. The following lemma shows that the structure $\structa$ can be recovered from the node-labelled tree $\structs$ using a sub-definable transduction.

\begin{lemma}
 The following transduction is sub-definable 
 \begin{align*}
 \setbuild{(\structs,\structa)}{$\structs$ follows from applying Lemma~\ref{lem:transition-function-on-types} to $(\structt,\structa)$ for \\ 
 some $\structt$ such that $(\structt, \structa)$ is in the transduction }.
 \end{align*}
\end{lemma}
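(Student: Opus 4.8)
The plan is to build a definable transduction $\beta$ that contains the transduction in the statement; sub-definability then follows by definition. The starting point is that an output structure $\structa$ is completely determined by the quantifier-free types of its $m$-tuples, where $m$ is the maximal arity of the output vocabulary, since the atomic relations can be read off these types. Consequently, to define $\structa$ from $\structs$ it suffices, for each output relation $R$ of arity $k \le m$, to give a \cmso formula $\psi_R(x_1,\ldots,x_k)$ over $\structs$ that holds exactly when $R(x_1,\ldots,x_k)$ lies in the quantifier-free type of $(x_1,\ldots,x_k)$ in $\structa$. By the transition property of Lemma~\ref{lem:transition-function-on-types}, this type is computed by a bottom-up pass through the underlying tree: viewing $\structa$ as $\structa\,|\,U$ for the root $U$, the type of a tuple at a node $X$ is obtained by applying $\gamma_X$ to the $\lambda$-values coming from the children of $X$ that meet the tuple. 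This is exactly a run of a deterministic bottom-up tree automaton whose transition at a node $X$ is read off the colour $(\lambda_X,\gamma_X)$ carried by $\structs$.

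Next I would supply the automaton with its initial states, keeping all states of the run in the fixed finite set $Q$ provided by Lemma~\ref{lem:transition-function-on-types}: the transition at an internal node $X$ sends the $Q$-states of the active children to $\lambda_X(\gamma_X(\cdots))\in Q$, while at the root one applies $\gamma_{\mathrm{root}}$ alone to obtain the actual type of the tuple in $\structa$, from which the atom $R(\bar x)$ is read off. The only missing data are the initial $Q$-states at the leaves: at a leaf $X=\set{a}$ the substructure $\structa\,|\,X$ has a one-element universe and hence cut-rank $1$, so by the lemma relating the number of quantifier-free $m$-types in $\structa\,|\,X$ to the rank of $X$ there are only boundedly many possibilities for $\lambda_X$ applied to the type of any partial tuple supported on $a$. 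I would therefore let $\beta$ guess, as a colouring of the universe (legitimate because the leaves are exactly the elements of $\structs$), for each leaf $a$ and each coordinate set $I\subseteq\set{1,\ldots,m}$ the intended initial state in $Q$. For the guess that agrees with $\structa$, the transition property of Lemma~\ref{lem:transition-function-on-types} guarantees, by bottom-up induction on the tree, that the run computes the correct type of every tuple at every node, and in particular the correct type of each $m$-tuple at the root; hence $\beta$ outputs exactly $\structa$ on this guess, which gives $(\structs,\structa)\in\beta$. We do not care that other guesses may produce spurious outputs, since sub-definability only requires containment.

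It then remains to express the automaton run in \cmso over the ternary representation of the coloured tree $\structs$. Here I would use the tree machinery of Section~\ref{sec:trees}: the node colours $(\lambda_X,\gamma_X)$ are available through the representation of coloured trees, and Lemma~\ref{lem:node-labelled-trees} lets us freely quantify over colourings of nodes even though nodes are encoded through pairs of leaves rather than as elements. For a fixed tuple $\bar x$, only the ancestors of its coordinates are active, and at any active node at most $m$ children meet $\bar x$; since the coordinates of $\bar x$ lie in pairwise distinct children, these active children carry disjoint coordinate sets and can be ordered canonically (say by least coordinate index) before being fed to $\gamma_X$, with the remaining slots filled by the fixed $\lambda$-value of the empty type. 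With this ordering fixed, the run becomes a genuine deterministic tree-automaton run, and such runs---including the evaluation of long single-child ``spines'' by composing the induced unary state transformations along a path---are definable in \cmso over trees.

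I expect the last step to be the main obstacle: faithfully simulating a bottom-up automaton over unbounded-branching, unordered trees in \cmso when the tree is only available through its ternary representation, and in particular organising the unboundedly many children of a node so that the fixed-arity function $\gamma_X$ can be applied, is where the real work lies. The earlier steps---reducing reconstruction to type computation, bounding and guessing the leaf states, and invoking the transition property for correctness---are routine given Lemmas~\ref{lem:transition-function-on-types} and~\ref{lem:node-labelled-trees}. Note finally that modulo counting enters only through Lemma~\ref{lem:node-labelled-trees}, which is consistent with the use of \cmso in the main theorem.
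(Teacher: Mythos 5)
Your proposal follows essentially the same route as the paper: define the run of a tuple as the $Q$-labelling $X \mapsto \lambda_X(\text{type of }\bar a\text{ in }\structa|X)$, observe via Lemma~\ref{lem:transition-function-on-types} that it has an inductive bottom-up structure depending only on the boundedly many children meeting the tuple, and implement it in \mso by guessing the run and checking local consistency; you additionally spell out the leaf initialisation and child-ordering details that the paper leaves implicit. The only small divergence is that you route the run-guessing through Lemma~\ref{lem:node-labelled-trees} (hence modulo counting), whereas the paper notes this particular step needs no modulo counting, since for a fixed tuple the run lives on boundedly many leaf-to-root chains and can be encoded by ordinary set quantification; modulo counting is only needed for the separate pre-composition step that guesses the colouring $\structs$ itself.
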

\begin{proof}
 The functions $\lambda_X$ and $\gamma_X$ that are stored in the tree $\structs$ give an inductive procedure to compute the type of an $n$-tuple of elements in the structure $\structa$, which only uses the nodes and their colors in $\structs$. 
 For an $m$-tuple $\bar a$ of elements in $\structa$ (or equivalently, in $\structs$, because these have the same universe), define its \emph{run} to be the labelling 
 \begin{align*}
 \text{nodes of $\structs$} & \to Q\\
 X & \mapsto \lambda_X(\text{type of $\bar a$ in $\structa|X$}).
 \end{align*}
 By Lemma~\ref{lem:transition-function-on-types}, this run has an inductive structure: for every node $X$, its label in the run depends only on the labels of its children, more specifically the labels of the constant number of children that contain elements of the tuple $\bar a$. This inductive procedure can be implemented in \mso, even without modulo counting, by guessing the run in the usual way (guess the run, and then check that it is correctly computed in every node), since \mso is allowed to quantify over colourings of the nodes in a tree. 
\end{proof}

Theorem~\ref{thm:from-trees} follows by pre-composing the transduction from the above lemma with the transduction that maps a tree $\structt$ to all possible colourings $\structs$ that use colors $Q$. The latter transduction is sub-definable thanks to Lemma~\ref{lem:node-labelled-trees}. 

\subsubsection{A criterion for being rank-decreasing.} We finish this section with a criterion for checking if a transduction that inputs trees is rank-decreasing. In principle, one should check if for every $k$ there is some $\ell$ such that sets with rank $k$ in the input have rank $\le \ell$ in the output. It turns out that if the transduction inputs trees, then it is enough to check this for some single $k$. Our criterion says that a transduction that inputs trees is rank-decreasing if and only if the ranks are bounded for images of sub-forests.
\begin{theorem}\label{thm:rank-decreasing-criterion} 
 Let $\classc$ be a class of structures. A tree-to-$\classc$ transduction $\alpha$ is rank-decreasing if and only if the rank is bounded for 
 \begin{align*}
 \setbuild{(\structa,X)}{$X$ is a sub-forest of some $\structt$ with $(\structt,\structa) \in \alpha$ }
 \end{align*}
\end{theorem}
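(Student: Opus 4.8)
The plan is to prove both directions using the characterisation of cut-rank in trees from Lemma~\ref{lem:ranks_in_trees} together with the subadditivity of cut-rank under boolean combinations recorded after Lemma~\ref{lem:rank_of_union}.

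The left-to-right implication is immediate. Suppose $\alpha$ is rank-decreasing, witnessed by some $f : \Nat \to \Nat$. The key observation is that every sub-forest has cut-rank bounded by a universal constant $k_0$ in its tree: a sub-forest is, trivially, a boolean combination of a single sub-forest, so by Lemma~\ref{lem:ranks_in_trees} its cut-rank in $\structt$ is at most some fixed $k_0$ that does not depend on the tree. Hence for every $(\structt,\structa) \in \alpha$ and every sub-forest $X$ of $\structt$, the rank of $X$ in $\structa$ is at most $f(k_0)$, so the displayed family has bounded rank.

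For the converse, assume the displayed family has rank bounded by some constant $\ell_0$; I want to produce, for each $k$, a bound $\ell(k)$ on the output rank of subsets whose input rank is at most $k$. First I would extract from the asymptotic equivalence in Lemma~\ref{lem:ranks_in_trees} a function $g$ such that any subset $X$ of a tree with cut-rank at most $k$ is a boolean combination of at most $g(k)$ sub-forests; this is precisely what asymptotic equivalence delivers when applied to the family of pairs $(\structt,X)$ on which the input rank is bounded by $k$. Now take any $(\structt, \structa) \in \alpha$ and any $X$ of cut-rank $\le k$ in $\structt$, and write $X$ as a boolean combination of sub-forests $F_1,\dots,F_r$ of $\structt$ with $r \le g(k)$. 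By hypothesis each $F_i$ has cut-rank at most $\ell_0$ in $\structa$, and by the consequence of Lemma~\ref{lem:rank_of_union} the cut-rank in $\structa$ of a boolean combination of $r$ sets of rank $\le \ell_0$ is bounded by a function of $r$ and $\ell_0$. Setting $\ell(k)$ to be the value of this bound at $r = g(k)$ yields the rank-decreasing witness.

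The argument is essentially an assembly of the two cited lemmas, so there is no deep obstacle. The only point requiring care is extracting an honest function $g$ from the asymptotic-equivalence statement of Lemma~\ref{lem:ranks_in_trees}, and likewise turning the asymptotic bound in the consequence of Lemma~\ref{lem:rank_of_union} into an explicit $\ell(\cdot)$; both are routine once one unwinds the definition of asymptotic equivalence on the relevant families of pairs.
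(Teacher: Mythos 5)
Your proof is correct and takes essentially the same route as the paper's: both directions are obtained by combining Lemma~\ref{lem:ranks_in_trees} (cut-rank in a tree is asymptotically the least number of sub-forests in a boolean combination) with the consequence of Lemma~\ref{lem:rank_of_union} that boolean combinations of boundedly many bounded-rank sets have bounded rank. Your explicit unwinding of the asymptotic-equivalence definition into honest functions $g$ and $\ell$ is exactly the routine step the paper leaves implicit.
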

\begin{proof}
    Recall from Lemma~\ref{lem:ranks_in_trees} that the rank in trees of subsets of trees is asymptotically equivalent to the least $m$ such that the subset can be written by a boolean combination of $m$ subforests. Furthermore, a Boolean combination of $n$ sets that have bounded rank will have rank bounded by a function of $n$, thanks to Lemmas~\ref{lem:rank_of_union} and~\ref{lem:rank-row-column}.
\end{proof}

\subsection{Transductions that input graphs of bounded treewidth}
\label{sec:endgame-bounded-treewidth}
In this section, we complete the proof of Theorem~\ref{thm:main-graphs}, by generalizing the results from the previous sections  to transductions that input graphs of bounded treewidth. 

For completeness, we recall the definition of treewidth. This is not really necessary for the present paper, since we use treewidth as a black box, by citing a result about definable tree decompositions from~\cite{bojanczykDefinabilityEqualsRecognizability2016a}. Let $G$ be a graph. A \emph{tree decomposition} of $G$ is a tree, where the leaves are vertices of $G$. The \emph{adhesion} of a subtree $X$ in the tree decomposition is defined to be the vertices that are not $X$, but share an edge with some vertex in $X$. The width of a tree decomposition is the maximal size of an adhesion.  The \emph{treewidth} of a graph is the minimal width of tree decompositions.

    The key technical tool here, apart from the results developed so far, is the existence of definable tree decompositions for graphs of bounded treewidth~\cite{bojanczykDefinabilityEqualsRecognizability2016a}. We state this result in a form that is more appropriate for this paper. In the following theorem, we say that a transduction is \emph{sub-definable in both directions} if both the transduction and its converse are sub-definable. 

\begin{theorem}\cite{bojanczykDefinabilityEqualsRecognizability2016a} \label{thm:encode-decode-treewidth} Let $\classc$ be a class of graphs that has bounded treewidth. Then there is a $\classc$-to-tree transduction that is sub-definable in both directions. 
\end{theorem}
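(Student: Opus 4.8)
The plan is to derive this statement from the definable tree decomposition theorem of~\cite{bojanczykDefinabilityEqualsRecognizability2016a}, packaging its output as a coloured tree in the sense of Section~\ref{sec:trees} whose colours record enough of $G$ to reconstruct it. Fix a bound $t$ on the treewidth of the graphs in $\classc$. I would represent each $G \in \classc$ by a parse term built from the finitely many operations on boundaried graphs whose boundary has size at most $t+1$: introduce a vertex at a boundary position, forget a position, add an edge between two positions, and take a disjoint union. The leaves of such a term are the individual vertices of $G$, and its internal nodes are the subsets of vertices introduced below them; these subsets form a laminar family, so the term is exactly a tree in the sense of this paper, with the same universe as $G$, carrying only a bounded amount of information per node --- the name of an operation.

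For the forward transduction (graphs to trees) I would invoke~\cite{bojanczykDefinabilityEqualsRecognizability2016a}, whose content is precisely that for graphs of bounded treewidth such a term (equivalently, a bounded-width tree decomposition) can be produced by a definable transduction. This yields the bare tree with the correct universe, and I would then attach to each node the colour naming its operation; since the boundary has size at most $t+1$ there are only finitely many operations, hence finitely many colours suffice. The ability of a definable transduction to equip a tree with such a node colouring is exactly Lemma~\ref{lem:node-labelled-trees}. Hence the forward transduction is sub-definable.

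For the backward transduction (trees to graphs) I would recover the edge relation of $G$ from the coloured term by evaluating it: two vertices $u$ and $v$ are adjacent if and only if there is an \emph{add-edge} node at which $u$ and $v$ occupy the two boundary positions being linked. The boundary position occupied by a given vertex at a given node is determined by following the operations between its introduction and that node, which is expressible in \mso; since the boundary has bounded size, the whole condition is a single \cmso formula over the coloured tree. Hence the backward transduction is sub-definable as well.

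I expect the only genuine difficulty to lie in the cited definable tree decomposition theorem, which I would treat as a black box; the rest is bookkeeping. The one point requiring care is that the colouring must be simultaneously definable from $G$ and sufficient to recover $G$ using only finitely many colours. This is made possible by the boundedness of the treewidth, which caps every boundary at $t+1$ vertices: the set of operations is then finite and each edge is created locally at a single node, so both the encoding and the decoding stay within the logic.
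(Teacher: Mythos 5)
Your forward direction coincides with the paper's: both treat the definable tree decomposition theorem of~\cite{bojanczykDefinabilityEqualsRecognizability2016a} as a black box producing the graph-to-tree transduction. For the converse (tree-to-graph) direction, however, you take a genuinely different and more classical route. You explicitly encode $G$ as a parse term over the finitely many operations on boundaried graphs of boundary size $\le t+1$ and then evaluate that term in \cmso, recovering each edge at its add-edge node --- essentially the standard Courcelle-style argument that graphs of bounded treewidth are \mso-interpretable in labelled trees. The paper instead avoids any explicit decoding: it observes that the converse transduction inputs trees, so by Theorem~\ref{thm:from-trees} it suffices to show it is rank-decreasing, and by the criterion of Theorem~\ref{thm:rank-decreasing-criterion} this reduces to the single observation that a sub-forest of a bounded-width tree decomposition has bounded cut-rank in the graph. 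Your approach is self-contained modulo classical facts and makes the decoding concrete; the paper's approach is shorter and showcases its own machinery, at the price of invoking the (nontrivial) Theorem~\ref{thm:from-trees}. Two pieces of bookkeeping you should tighten if you pursue your route: the trees of this paper have no unary nodes, so the unary forget/add-edge operations must be folded into node colours between branching points (possible since boundaries are bounded, but it needs saying); and the theorem asks for a transduction into bare trees in the ternary representation, so the operation colouring should not be part of the output --- rather, the backward transduction should nondeterministically guess a colouring via Lemma~\ref{lem:node-labelled-trees} and then evaluate it, which keeps everything within sub-definability.
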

\begin{proof}
 We do not reprove this theorem, but only explain how the original result can be rephrased in the way stated here. 
 The main technical result of~\cite{bojanczykDefinabilityEqualsRecognizability2016a} can be stated as follows: 
 \begin{itemize}
 \item[(*)] for every $k$ there is some $\ell$ and a transduction which inputs a graph of treewidth at most $k$, and outputs the same graph together a tree decomposition of width at most $\ell$. 
 \end{itemize}
 The statement (*) is implicit in~\cite[Theorem 2.4]{bojanczykDefinabilityEqualsRecognizability2016a}, and it is explicit in the survey~\cite[Theorem 6.22]{bojanczyk_recobook}. The transduction (*) is  the transduction from the statement of the theorem, in the $\classc$-to-tree direction.
 
 Let us now argue that the converse of the transduction described in (*) is sub-definable. This is a transduction that inputs trees, and therefore it is enough by Theorem~\ref{thm:from-trees} to show that it is rank-decreasing. Here, we can apply the criterion from Theorem~\ref{thm:rank-decreasing-criterion}: for every sub-forest in a tree decomposition, the corresponding subset in the graph has bounded rank. This is easily seen to be true for tree decompositions of bounded width.
\end{proof}

Using the above theorem, we complete the proof of Theorem~\ref{thm:main-graphs}. Consider some rank-decreasing graph-to-graph transduction $\alpha$ that inputs graphs of treewidth at most $k$. Apply Theorem~\ref{thm:encode-decode-treewidth}, yielding a graph-to-tree transduction $\beta$ whose inputs contain all inputs of $\alpha$, and such that both $\beta$ and its converse $\bar \beta$ are sub-definable. Consider the following diagram:
\[
 \begin{tikzcd}
 \text{graphs}
 \ar[r,"\beta"]
 \ar[d,"\alpha"]
 & 
 \text{trees}
 \ar[dl,"\bar \beta; \alpha"]\\
 \text{graphs}
 \end{tikzcd} 
 \]
Since $\beta;\bar \beta$ contains the identity relation on inputs of $\alpha$, it follows that $\alpha$ is contained in $\beta; \bar \beta; \alpha$. Both $\beta$ and $\bar \beta$ are sub-definable. Therefore, $\bar \beta$ is rank-decreasing by Corollary~\ref{cor:definable-is-rank-decreasing}, and thus the composition $\bar \beta; \alpha$ is rank-decreasing. Since $\bar \beta; \alpha$ inputs trees, it is sub-definable by Theorem~\ref{thm:from-trees}. Thus, $\beta; \bar \beta; \alpha$ is sub-definable, and so is $\alpha$ as its subset. This completes the proof of Theorem~\ref{thm:main-graphs}.

Observe that our proof did not need to assume that the transduction outputs graphs, and it would also work for transductions that output structures from any class.

\section{Some conjectures}
\label{sec:further}
In this section, we state some conjectures about how our main theorem could be extended beyond graphs of bounded treewidth. The conjectures in this section are closely related to conjectures that were posed in~\cite{DBLP:journals/corr/abs-2305-18039}.

In the proof of our main theorem, the main external ingredient was Theorem~\ref{thm:encode-decode-treewidth}, about the existence of a graph-to-tree transduction which is sub-definable in both directions, and whose inputs contain all graphs with a given treewidth bound. We conjecture that a similar result holds not just for treewidth, but also for rankwidth (and also for structures beyond graphs, i.e.~for classes of structures that can use relations of arity more than two). We begin by recalling the notion of rankwidth. This is defined in the same way as treewidth, except that we use the rank function to measure the complexity of subtrees in a tree decomposition.

\begin{definition}[Rankwidth] A \emph{tree decomposition} of a structure $\structa$ is a tree where the leaves are the elements of the structure. The \emph{width} of this tree decomposition is the maximal rank of its subtrees. The \emph{rankwidth} of a structure is the minimal width among its tree decompositions.
\end{definition}

The following conjecture is  essentially the same as Conjecture 5.4 in~\cite{DBLP:journals/corr/abs-2305-18039}.

\begin{conjecture}[Main conjecture]\label{conj:main}
 Let $\classc$ be a class of structures of bounded rankwidth. Then there is a surjective tree-to-$\classc$ transduction that is sub-definable in both directions.
\end{conjecture}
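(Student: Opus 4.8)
The plan is to prove the rankwidth analogue of Theorem~\ref{thm:encode-decode-treewidth} by the same two-directional strategy, replacing tree decompositions by \emph{rank decompositions}. Fix the bound $\ell$ on the rankwidth of $\classc$, and work with \emph{binary} rank decompositions (every internal node has exactly two children), as in the standard presentation of rankwidth via subcubic trees; by the assumption of bounded rankwidth, every $\structa\in\classc$ admits such a decomposition of width at most $\ell$. The candidate transduction $\gamma$ consists of all pairs $(\structt,\structa)$ with $\structa\in\classc$ and $\structt$ a binary rank decomposition of $\structa$ of width at most $\ell$. I must show that $\gamma$ is surjective and sub-definable in both directions. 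Surjectivity is immediate from the existence of the decompositions just mentioned.

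The tree-to-$\classc$ direction — recovering $\structa$ from $\structt$ — is handled entirely by the machinery already developed for tree inputs. By Theorem~\ref{thm:from-trees} it suffices to show that $\gamma$ is rank-decreasing, and by the criterion of Theorem~\ref{thm:rank-decreasing-criterion} this reduces to bounding, in the output structure, the rank of the sub-forests of the decomposition tree. Here the choice of binary trees is essential: a sub-forest is a union of sibling subtrees, hence a union of at most two subtrees, each of rank at most $\ell$ by the width bound, so Lemma~\ref{lem:rank_of_union} bounds the rank of their union by a function of $\ell$. Thus $\gamma$ is rank-decreasing and therefore sub-definable. (Concretely, the reconstruction inside Theorem~\ref{thm:from-trees} applies Lemma~\ref{lem:transition-function-on-types}, whose hypothesis — that every union of a subfamily of the children of a node has bounded rank — is again the width bound transported along Lemma~\ref{lem:rank_in_substructure}.)

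The real content, and the single obstacle, is the $\classc$-to-tree direction: the converse $\bar\gamma$ must \emph{definably} produce a bounded-width rank decomposition of any input from $\classc$. This is precisely a \emph{definable rank decomposition}, the exact rankwidth counterpart of the definable tree decompositions of~\cite{bojanczykDefinabilityEqualsRecognizability2016a} that power Theorem~\ref{thm:encode-decode-treewidth}. Crucially, we cannot route this through the rank-decreasing criterion as we did above: Theorem~\ref{thm:from-trees} applies only to tree inputs, and the inputs here have bounded rankwidth but possibly unbounded treewidth, so Theorem~\ref{thm:main-graphs} does not apply either. The decomposition must therefore be constructed by hand in \cmso. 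I expect this to be genuinely hard: rank decompositions lack the local, bounded-adhesion bag structure that makes tree decompositions amenable to the guess-and-verify constructions of~\cite{bojanczykDefinabilityEqualsRecognizability2016a}, and a definable construction appears to be of the same strength as the still-open definability-equals-recognizability statement for graphs of bounded cliquewidth.

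I would therefore attack this last direction incrementally. A first milestone is the bounded-branching case, where the additional combinatorial structure gives enough purchase for a direct \cmso construction; this is the limited progress recorded in Appendix~\ref{sec:bounded-branching}. Beyond that, the natural route is to imitate the treewidth proof by isolating a canonical family of separations of bounded rank that can be defined from the structure and that assemble coherently into a binary tree. The difficulty is that bounded cut-rank, unlike bounded adhesion, does not obviously yield the finitely many, definably selectable candidate separators on which the treewidth argument depends, so the bulk of the work is in finding a definable notion of canonical low-rank separation that is robust enough to be glued together.
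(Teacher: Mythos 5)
First, a point of framing: the statement you are proving is stated in the paper as a conjecture, and the paper offers no proof of it --- it only records the bounded-treewidth and bounded-linear-rankwidth special cases, reformulates the conjecture as two sub-goals (Conjecture~\ref{conj:more-refined}, via Fact~\ref{fact:conjectures-equivalent}), and proves a fragment of one sub-goal in Appendix~\ref{sec:bounded-branching}. Your proposal is honest about not closing the problem, and its overall shape matches the paper's own decomposition: the tree-to-$\classc$ direction via Theorem~\ref{thm:from-trees} and the criterion of Theorem~\ref{thm:rank-decreasing-criterion}, the $\classc$-to-tree direction as the open core, and Appendix~\ref{sec:bounded-branching} as the partial progress. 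That much is consistent with the paper.

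However, your concrete starting point is wrong, and not merely incomplete. You take $\gamma$ to be the set of \emph{all} pairs $(\structt,\structa)$ with $\structt$ a binary rank decomposition of $\structa$ of width at most $\ell$, and locate the only obstacle in making $\bar\gamma$ definable. But this $\bar\gamma$ is provably not sub-definable, for the same reasons as in Example~\ref{ex:grid-to-tree-transduction}. Take $\classc$ to contain the edgeless graphs (rankwidth $0$): every binary tree on the $n$ vertices is a width-$0$ rank decomposition, so $\bar\gamma$ has $2^{\Theta(n\log n)}$ outputs per input, exceeding the $k^{n}$ bound for any definable transduction; equivalently, every subset has cut-rank $0$ in the input structure but a generic subset has unbounded cut-rank in a generic output tree, so $\bar\gamma$ is not even rank-decreasing, which by Corollary~\ref{cor:definable-is-rank-decreasing} already rules out sub-definability. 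So the first difficulty is not definability but \emph{existence}: one must select, for each structure, a restricted family of decompositions such that the resulting surjective transduction is rank-invariant --- subsets of small cut-rank in $\structa$ must become boolean combinations of few sub-forests of the chosen tree. This is exactly item~(1) of Conjecture~\ref{conj:more-refined}, it is itself open, and it has no analogue in the treewidth case (where Theorem~\ref{thm:encode-decode-treewidth} is imported wholesale from the literature). Your plan silently assumes it by fixing $\gamma$ as above, and then folds everything into the (also open) definability question, which is item~(2). Separating these two sub-goals, as Fact~\ref{fact:conjectures-equivalent} does, is the substantive content of the paper's treatment that your proposal misses.
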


As discussed in Section~\ref{sec:endgame-bounded-treewidth}, this conjecture is true for classes of graphs of bounded treewidth. 
Another case where the conjecture is true is graphs of linear rankwidth. This is the special case of rankwidth, where we require the trees to be linear orders (a linear order can be seen as a tree, where the subtrees are the prefixes). A corollary of~\cite[Theorem 3.3]{linearcliquewidth2021} is that the main conjecture holds for every class of graphs of bounded linear rankwidth. Therefore, the consequences of the main conjecture will hold for classes of graphs of bounded linear rankwidth. 

Even in the linear case, however, much remains open: we do not know if the main conjecture is true for classes of structures that have linear rankwidth, but use a vocabulary with ternary relations (see~\cite[Example 9]{DBLP:journals/corr/abs-2305-18039} for some evidence on why ternary relations could be harder than binary ones). We return to bounded linear rankwidth in Appendix~\ref{sec:bounded-branching}. 

Below we discuss some consequences of the conjecture. 

\subsubsection*{Rank-decreasing equivalent to sub-definable.} The first consequence is the one that was stated at the beginning of this section: generalizing our main result, Theorem~\ref{thm:main-graphs}, to bounded rankwidth. 
Using the same proof as in Section~\ref{sec:endgame-bounded-treewidth}, the main conjecture would entail that for transductions that input structures of bounded rankwidth, being rank-decreasing is the same as being sub-definable. 

\begin{conjecture}[Implied by main conjecture] \label{conj:def-rank-decreasing} Consider a transduction whose inputs have bounded rankwidth. Then this transduction is rank-decreasing if and only if it is sub-definable. 
\end{conjecture}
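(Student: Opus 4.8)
The plan is to mimic, essentially verbatim, the argument of Section~\ref{sec:endgame-bounded-treewidth}, replacing the one ingredient specific to treewidth, namely Theorem~\ref{thm:encode-decode-treewidth}, by the Main Conjecture (Conjecture~\ref{conj:main}). The right-to-left implication, that sub-definable transductions are rank-decreasing, is already Corollary~\ref{cor:definable-is-rank-decreasing} and requires no hypothesis on the inputs; so all the work lies in the left-to-right implication.

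First I would fix a rank-decreasing transduction $\alpha$ whose inputs form a class $\classc$ of bounded rankwidth (closed under isomorphism), and invoke Conjecture~\ref{conj:main} for $\classc$ to obtain a surjective tree-to-$\classc$ transduction $\gamma$ that is sub-definable in both directions. Writing $\beta = \bar\gamma$ for the converse $\classc$-to-tree transduction, both $\beta$ and $\bar\beta = \gamma$ are sub-definable. The role of surjectivity is to guarantee that $\beta$ is total on $\classc$: every structure of $\classc$ occurs as an output of $\gamma$, hence has at least one tree image under $\beta$. Consequently $\beta;\bar\beta$ contains the identity on $\classc$, and since the inputs of $\alpha$ lie in $\classc$ this yields the containment $\alpha \subseteq \beta;\bar\beta;\alpha$.

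It then remains to show that $\beta;\bar\beta;\alpha$ is sub-definable, which finishes the proof because sub-definability is downward closed. I would factor the composition as $\beta;(\bar\beta;\alpha)$ and treat the inner transduction first. Now $\bar\beta = \gamma$ is sub-definable, hence rank-decreasing by Corollary~\ref{cor:definable-is-rank-decreasing}, and since rank-decreasing transductions are closed under composition, $\bar\beta;\alpha$ is rank-decreasing. Crucially, $\bar\beta;\alpha$ inputs trees, so Theorem~\ref{thm:from-trees} applies and gives that $\bar\beta;\alpha$ is sub-definable. Finally, $\beta$ is sub-definable, and sub-definable transductions are closed under composition (because definable transductions compose); therefore $\beta;(\bar\beta;\alpha)$ is sub-definable, and so is $\alpha$ as a subset of it.

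This argument is conditional, and by design it introduces no new difficulty beyond what is hidden inside Conjecture~\ref{conj:main} itself: the entire weight of the statement rests on producing the tree encoding $\gamma$, which is exactly the unresolved part. The only bookkeeping point I would verify with care is that surjectivity of $\gamma$ is genuinely what supplies the identity on $\classc$ inside $\beta;\bar\beta$; without it the containment $\alpha \subseteq \beta;\bar\beta;\alpha$ could fail on inputs of $\alpha$ not reached by $\beta$. With that observation in place, the proof is a transcription of the bounded-treewidth endgame, and in particular it does not depend on the outputs being graphs.
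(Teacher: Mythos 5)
Your proposal is correct and follows exactly the route the paper intends: it transcribes the endgame of Section~\ref{sec:endgame-bounded-treewidth}, substituting Conjecture~\ref{conj:main} for Theorem~\ref{thm:encode-decode-treewidth}, and correctly notes that the surjectivity of the tree-to-$\classc$ transduction is what makes $\beta;\bar\beta$ contain the identity on $\classc$. Nothing further is needed.
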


As explained in Example~\ref{ex:grid-counterexample}, some assumptions on the input class are necessary for the conjecture. Although we have not investigated this enough to make conjectures, it could be the case that the assumption on bounded rankwidth is tight: maybe for every class $\classc$ of unbounded rankwidth there is a $\classc$-to-$\classc$ transduction that is rank-decreasing but not sub-definable. 

\subsubsection*{Recognisable languages.} One of the main themes of logic in computer science is the relation between recognisability (by automata, or algebras) and definability (in monadic second-order logic or its variants). A recent proposal on this topic was presented in~\cite[Section 5]{DBLP:journals/corr/abs-2305-18039}, which proposed a generic definition of recognisability of a language $L$ contained in a class of structures $\classc$. This proposed notion of recognisability  coincides with all known pre-existing notions of recognisability for specific classes of structures~\cite[Figure 1]{DBLP:journals/corr/abs-2305-18039}. We do not recall this notion here, but only the basic results about it:
\begin{enumerate}
 \item definability always implies recognisability (here and below, definability refers to definability in \cmso);
 \item if the class $\classc$ has unbounded rankwidth, then recognisability is not equivalent to definability;
 \item (conjectured) if the class $\classc$ has bounded rankwidth, then recognisability is equivalent to definability.
\end{enumerate}

The third item would be true, assuming a conjecture implied by the main conjecture of this paper. Putting the above observation together, we get the following 

\begin{conjecture}[Implied by main conjecture] 
 A class $\classc$ of structures has bounded rankwidth if and only if
 \begin{align*}
 \forall L \subseteq \classc \quad \text{$L$ is recognisable} \Leftrightarrow \text{$L$ is definable.}
 \end{align*}
\end{conjecture}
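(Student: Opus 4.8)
The plan is to prove the two implications of the biconditional separately, relying on the three listed facts about the recognisability notion of~\cite{DBLP:journals/corr/abs-2305-18039}. For the direction from right to left I argue by contraposition: if $\classc$ has unbounded rankwidth, then by the second listed fact there is already a language $L \subseteq \classc$ that is recognisable but not definable, so the equivalence ``recognisable $\Leftrightarrow$ definable'' fails on $\classc$. This direction is unconditional and needs no new ingredient beyond the cited fact. All the work is therefore in the left-to-right direction, which is precisely the third (conjectured) listed fact: for a class $\classc$ of bounded rankwidth, every recognisable $L \subseteq \classc$ is definable. Since definability always implies recognisability (the first listed fact), only this one inclusion must be shown.

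To establish it I would invoke the main conjecture, Conjecture~\ref{conj:main}, to obtain a surjective tree-to-$\classc$ transduction $\tau$ that is sub-definable in both directions; write $\bar\tau$ for its converse. The strategy mirrors the reduction used for bounded treewidth around Theorem~\ref{thm:encode-decode-treewidth}: transport $L$ to the tree side, use the classical fact that for trees recognisability coincides with definability~\cite{thatcher1968generalized}, and then transport the resulting definition back to $\classc$. Concretely, let $K$ be the pullback of $L$ along $\tau$, i.e.~the language of trees whose $\tau$-image lies in $L$. The expected closure property of the recognisability notion --- that recognisable languages are preserved under inverse images of sub-definable transductions --- yields that $K$ is recognisable, and since $K$ is a language of trees it is therefore definable.

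It then remains to recover a definition of $L$ from the definition of $K$. Since $\bar\tau$ is sub-definable and $\tau$ is surjective, membership $s \in L$ can be expressed by guessing a tree $t$ with $(t,s) \in \tau$ and asserting $t \in K$; the backwards-translation theorem for \mso transductions converts the defining formula for $K$ into a formula over $\classc$, giving a definition of $L$. The delicate point is that $\tau$ is a relation rather than a function, so one must ensure that this guess-and-check recovers \emph{exactly} $L$ rather than an over-approximation. This is clean when $\tau$ is functional in the tree-to-$\classc$ direction --- as it is for the decoding transductions underlying Theorem~\ref{thm:encode-decode-treewidth}, where a tree decomposition determines its graph --- and I would either arrange $\tau$ to be functional in that direction or select a definable set of canonical tree encodings, one per structure, so that the pullback $K$ matches $L$ along $\tau$.

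I expect the main obstacle to be twofold. First, the entire left-to-right direction rests on the main conjecture, which is known only for bounded treewidth and bounded linear rankwidth, so the statement is genuinely conditional. Second, even granting $\tau$, the argument depends on the closure of the recognisability notion of~\cite{DBLP:journals/corr/abs-2305-18039} under inverse images of sub-definable transductions, together with a careful treatment of the relational nature of $\tau$ in the recovery step. Verifying that this closure property holds in the general framework of~\cite{DBLP:journals/corr/abs-2305-18039}, and that it interacts correctly with the two-directional sub-definability of $\tau$, is where I expect the real technical content to lie.
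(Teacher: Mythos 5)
Your proposal matches the paper's own treatment: the paper derives this statement purely by combining the three listed facts about recognisability, with the right-to-left direction being the unconditional item (2) and the left-to-right direction being the conjectured item (3), which is simply asserted to follow from the main conjecture. Your additional sketch of how item (3) would follow from Conjecture~\ref{conj:main} --- transporting a recognisable language to trees along the two-way sub-definable transduction, using definability of recognisable tree languages, and translating back --- goes beyond what the paper writes down, and the delicate points you flag (closure of recognisability under inverse images of sub-definable transductions, and the relational rather than functional nature of the transduction) are indeed left unaddressed by the paper as well.
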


\subsubsection*{Two sub-goals.}
Given the importance of the main conjecture, it is worthwhile to split it into sub-conjectures. Here is one possible split. In the following, a transduction is called \emph{rank-invariant}  if it is rank-decreasing in both directions. In other words, the rank of a subset is asymptotically the same in the input and output structures.

\begin{conjecture}[Equivalent statement of main conjecture, see Fact~\ref{fact:conjectures-equivalent} below]\label{conj:more-refined}
 Let $\classc$ be a class of structures. 
 \begin{enumerate}
 \item\label{it:rank-preseving-trees} if $\classc$ has bounded rankwidth, then there is a surjective rank-invariant tree-to-$\classc$ transduction; and 
 \item\label{it:to-tree} if a $\classc$-to-tree transduction is rank-invariant, then it is sub-definable. 
 \end{enumerate}
\end{conjecture}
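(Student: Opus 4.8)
The content of Fact~\ref{fact:conjectures-equivalent} is that Conjecture~\ref{conj:more-refined} is equivalent to the main conjecture, Conjecture~\ref{conj:main} (neither can be proved outright, since both are open, but their equivalence is a genuine theorem). My plan is to prove the two implications separately. Both will closely follow the composition argument used to finish Theorem~\ref{thm:main-graphs} in Section~\ref{sec:endgame-bounded-treewidth}; the only genuinely new ingredient I would need is the observation that trees, as defined in this paper, have bounded rankwidth.

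First I would show that the main conjecture implies Conjecture~\ref{conj:more-refined}. Item~\ref{it:rank-preseving-trees} is immediate: the surjective tree-to-$\classc$ transduction supplied by the main conjecture is sub-definable in both directions, so by Corollary~\ref{cor:definable-is-rank-decreasing} it and its converse are both rank-decreasing, i.e.~it is rank-invariant. For item~\ref{it:to-tree}, let $\sigma$ be a rank-invariant $\classc$-to-tree transduction. The crucial first step is to note that the inputs occurring in $\sigma$ have bounded rankwidth: a tree viewed as its own tree decomposition has bounded width, because each subtree is a sub-forest and hence of bounded rank by Lemma~\ref{lem:ranks_in_trees}; since $\sigma$ is rank-invariant, the same subsets have bounded rank in the input structure, which exhibits a bounded-width tree decomposition of that input. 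Applying the main conjecture to the (isomorphism closure of the) domain of $\sigma$ yields a surjective tree-to-domain transduction $\tau$ that is sub-definable in both directions. By surjectivity $\bar\tau;\tau$ contains the identity on the domain, so $\sigma \subseteq \bar\tau;(\tau;\sigma)$. Now $\tau;\sigma$ is a composition of rank-decreasing transductions that inputs trees, hence it is rank-decreasing and therefore sub-definable by Theorem~\ref{thm:from-trees}; and $\bar\tau$ is sub-definable. Since sub-definability is preserved under composition and downward closed, $\sigma$ is sub-definable.

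Conversely I would show that Conjecture~\ref{conj:more-refined} implies the main conjecture. Let $\classc$ have bounded rankwidth. Item~\ref{it:rank-preseving-trees} produces a surjective rank-invariant tree-to-$\classc$ transduction $\tau$. As $\tau$ is rank-decreasing and inputs trees, Theorem~\ref{thm:from-trees} gives that $\tau$ is sub-definable. Its converse $\bar\tau$ is a rank-invariant $\classc$-to-tree transduction, so item~\ref{it:to-tree} gives that $\bar\tau$ is sub-definable as well. Thus $\tau$ is surjective and sub-definable in both directions, which is exactly the statement of the main conjecture.

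The only step that is not a direct transcription of the Section~\ref{sec:endgame-bounded-treewidth} argument is the reduction in item~\ref{it:to-tree} to a class of bounded rankwidth, and this rests entirely on trees having bounded rankwidth together with the rank-invariance hypothesis; everything else is bookkeeping with composition of sub-definable transductions and one application each of Corollary~\ref{cor:definable-is-rank-decreasing} and Theorem~\ref{thm:from-trees}. I expect the only fine print requiring care to be the passage to the isomorphism closure of the domain of $\sigma$ and the verification that surjectivity really delivers the identity relation on that domain.
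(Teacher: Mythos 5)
Your argument is correct and essentially identical to the paper's proof of Fact~\ref{fact:conjectures-equivalent}: both directions use the same composition diagrams, with Corollary~\ref{cor:definable-is-rank-decreasing} and Theorem~\ref{thm:from-trees} invoked at the same points. The only addition is that you spell out why the inputs of a rank-invariant transduction with tree outputs have bounded rankwidth, which the paper dispatches in a remark citing~\cite{courcelle1995logical}; your direct argument via Lemma~\ref{lem:ranks_in_trees} is a valid substitute.
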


The assumption that $\classc$ has bounded rankwidth is not stated in item~\ref{it:to-tree} above, but it is implicit: if a transduction with tree outputs is rank-invariant, then its inputs have bounded rankwidth~\cite{courcelle1995logical}. In Appendix~\ref{sec:bounded-branching} we prove a special case of item~\ref{it:to-tree}, which corresponds to bounded linear rankwidth.

\begin{fact} \label{fact:conjectures-equivalent}
 Conjectures~\ref{conj:main} and~\ref{conj:more-refined} are equivalent.
\end{fact}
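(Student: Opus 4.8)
The plan is to prove the equivalence of Conjectures~\ref{conj:main} and~\ref{conj:more-refined} by establishing implications in both directions, treating the two items of Conjecture~\ref{conj:more-refined} as a refinement that splits the single transduction of Conjecture~\ref{conj:main} into an existence part and a direction-specific definability part. First I would fix a class $\classc$ of bounded rankwidth and unpack what each conjecture asserts: Conjecture~\ref{conj:main} gives a surjective tree-to-$\classc$ transduction $\tau$ that is sub-definable in both directions, whereas Conjecture~\ref{conj:more-refined} splits this into (1) the existence of a surjective rank-invariant tree-to-$\classc$ transduction, and (2) the claim that \emph{any} rank-invariant $\classc$-to-tree transduction is automatically sub-definable. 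The key observation tying these together is Corollary~\ref{cor:definable-is-rank-decreasing}: sub-definability implies rank-decreasing. Hence a transduction that is sub-definable in both directions is automatically rank-invariant, which supplies the bridge from Conjecture~\ref{conj:main} to item~\ref{it:rank-preseving-trees} of Conjecture~\ref{conj:more-refined}.

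For the direction Conjecture~\ref{conj:main} $\Rightarrow$ Conjecture~\ref{conj:more-refined}, I would argue as follows. Assume Conjecture~\ref{conj:main}, so for each bounded-rankwidth $\classc$ there is a surjective $\tau$ with both $\tau$ and $\tau^{-1}$ sub-definable. By Corollary~\ref{cor:definable-is-rank-decreasing} both $\tau$ and $\tau^{-1}$ are rank-decreasing, i.e.\ $\tau$ is rank-invariant; this gives item~\ref{it:rank-preseving-trees} directly. For item~\ref{it:to-tree}, take any rank-invariant $\classc'$-to-tree transduction $\sigma$; its inputs have bounded rankwidth (by the cited fact that rank-invariant tree outputs force bounded rankwidth of the inputs), so $\classc'$ is a bounded-rankwidth class and we may apply Conjecture~\ref{conj:main} to obtain a surjective $\tau$ for $\classc'$ with $\tau,\tau^{-1}$ sub-definable. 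Now I would compose along the same diagram as in the proof of Theorem~\ref{thm:main-graphs} in Section~\ref{sec:endgame-bounded-treewidth}: since $\tau;\tau^{-1}$ contains the identity on the relevant inputs, $\sigma$ is contained in $\tau; \tau^{-1}; \sigma$, and $\tau^{-1};\sigma$ is a tree-to-tree transduction that is rank-decreasing (as a composition of rank-decreasing maps), hence sub-definable by Theorem~\ref{thm:from-trees}. Composing with the sub-definable $\tau$ yields sub-definability of $\sigma$, establishing item~\ref{it:to-tree}.

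For the converse Conjecture~\ref{conj:more-refined} $\Rightarrow$ Conjecture~\ref{conj:main}, I would assume both items and reconstruct $\tau$. By item~\ref{it:rank-preseving-trees}, for a bounded-rankwidth $\classc$ there is a surjective rank-invariant tree-to-$\classc$ transduction $\rho$; I take $\tau := \rho$. It remains to show $\rho$ and $\rho^{-1}$ are both sub-definable. The transduction $\rho$ inputs trees and is rank-decreasing (being rank-invariant), so by Theorem~\ref{thm:from-trees} it is sub-definable. For the converse $\rho^{-1}$, which is a $\classc$-to-tree transduction, rank-invariance of $\rho$ gives rank-invariance of $\rho^{-1}$, so item~\ref{it:to-tree} applies directly to yield that $\rho^{-1}$ is sub-definable. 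Thus $\rho$ witnesses Conjecture~\ref{conj:main}.

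The main obstacle I anticipate is bookkeeping around the domains and surjectivity in the composition argument, rather than any deep mathematical difficulty: one must ensure that the inputs of $\tau$ in item~\ref{it:to-tree} genuinely cover the inputs of $\sigma$ so that $\tau;\tau^{-1}$ contains the identity on the right set, and that surjectivity is preserved when transferring between the two formulations. A secondary subtlety is the implicit bounded-rankwidth hypothesis in item~\ref{it:to-tree}: I would need to invoke cleanly the result that a rank-invariant transduction into trees forces bounded rankwidth of the inputs (the reference to~\cite{courcelle1995logical}), so that Conjecture~\ref{conj:main} is applicable. Once these domain-matching points are handled, the equivalence follows purely formally from Corollary~\ref{cor:definable-is-rank-decreasing}, Theorem~\ref{thm:from-trees}, and the composition pattern already used for Theorem~\ref{thm:main-graphs}.
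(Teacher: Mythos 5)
Your proof is correct and follows essentially the same route as the paper: both directions reduce to Corollary~\ref{cor:definable-is-rank-decreasing}, Theorem~\ref{thm:from-trees}, and the composition diagram already used in Section~\ref{sec:endgame-bounded-treewidth}, with the bounded-rankwidth hypothesis for item~\ref{it:to-tree} supplied by the cited fact from~\cite{courcelle1995logical}. The only blemish is notational: with $\tau$ a tree-to-$\classc'$ transduction and $;$ read left-to-right, the containment you want is $\sigma \subseteq \tau^{-1};\tau;\sigma$ (not $\tau;\tau^{-1};\sigma$), so that $\tau;\sigma$ is the tree-to-tree piece handled by Theorem~\ref{thm:from-trees} --- this is exactly the paper's $\bar\beta;\beta;\alpha$.
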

\begin{proof}
 Let us first show how the two items in Conjecture~\ref{conj:more-refined} imply Conjecture~\ref{conj:main}. Let $\beta$ be the rank-invariant transduction from item~\ref{it:rank-preseving-trees}. By Theorem~\ref{thm:from-trees} this transduction is sub-definable in the tree-to-$\classc$ direction, and by item~\ref{it:to-tree} it is sub-definable in the $\classc$-to-tree direction. 

 We now show the converse implication.
 Assume Conjecture~\ref{conj:main}. If a transduction is sub-definable in both directions, then it is rank-invariant by Corollary~\ref{cor:definable-is-rank-decreasing}. Therefore, the main conjecture gives item~\ref{it:rank-preseving-trees} in Conjecture~\ref{conj:more-refined}. Let us now prove item~\ref{it:to-tree}. Let $\beta$ be the tree-to-$\classc$ transduction from Conjecture~\ref{conj:main} that is sub-definable in both directions, and let $\alpha$ be a rank-invariant $\classc$-to-tree transduction. We want to show that $\alpha$ is sub-definable. Consider the following diagram:
 \[
 \begin{tikzcd}
 \classc
 \ar[d,"\alpha"]
 & 
 \text{trees}
 \ar[l,"\beta"]
 \ar[dl,"\beta; \alpha"]\\
 \text{trees}
 \end{tikzcd} 
 \]
 Since $\beta$ is sub-definable, it is rank-invariant, and therefore $\beta;\alpha$ is rank-invariant. Since  $\beta;\alpha$ inputs trees, it must be sub-definable by Theorem~\ref{thm:from-trees}. Finally, $\bar \beta; \beta; \alpha$ is sub-definable as a composition of the sub-definable transductions $\bar \beta$ and $\beta;\alpha$.  The transduction  $\alpha$ is a subset of this transduction, since  $\bar \beta; \beta$ contains the identity on $\classc$, and therefore $\alpha$ is sub-definable.
\end{proof}

\bibliographystyle{plain}
\bibliography{bib}

\newpage
\appendix
\section{Ranks of bounded quantifier depth}\label{app:ef-games}

We now prove Theorem~\ref{thm:rank-invariant-under-quantifier-rank}.
In the proof, it will be more convenient to use a variant of logic and the type matrix uses only set variables and does not use  first-order variables.  We begin by describing this variant.

    Define a \emph{monadic  second-order structure} to be a structure where all relations take only set arguments.  This means that a relation of arity $m$ describes an $m$-ary relation on  subsets of the universe.  (An example is a hypergraph, in which the edge relation is a family of subsets of the universe, or a matroid, which is a hypergraph subject to certain axioms. These two examples have only unary relations on sets, but it turns out that relations of higher arity do not make a big difference, see~\cite[Example 9]{DBLP:journals/corr/abs-2305-18039}.) For the purposes of the proof of Theorem~\ref{thm:rank-invariant-under-quantifier-rank}, we use the name \emph{first-order structure} for the usual kind of structures, where relations take only element arguments, to distinguish them from monadic second-order structures.  Every first-order structure $\structa$ can be converted into a monadic second-order structure, called its \emph{singleton lifting}, as follows:  replace every relation $R$ with a relation of  same arity (but taking set arguments) that is interpreted  as
    \begin{align*}
\setbuild{ (\set{a_1},\ldots,\set{a_n})}{$\structa \models R(a_1,\ldots,a_n)$}.
    \end{align*}

The notion of types is extended to monadic second-order structures in the usual way, by associating to a tuple of $m$ subsets in a given structure the set of all \mso formulas with $m$ free set-variables, of given quantifier depth, that hold for this tuple. The following observation shows that a structure is essentially the same as its singleton lifting.

    \begin{fact}\label{fact:monadic-lifting}
For every quantifier depth $d$ and number of arguments $m$, the following types are in one-to-one correspondence:  
    \begin{enumerate}
        \item  $d$-type of a first-order structure  $\structa$ with distinguished elements $a_1,\ldots,a_m$;
        \item $d$-type of the singleton lifting of $\structa$  with distinguished subsets $\set{a_1},\ldots,\set{a_m}$.
    \end{enumerate}
    \end{fact}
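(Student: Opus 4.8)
The plan is to prove Fact~\ref{fact:monadic-lifting} by exhibiting a bijection between the two kinds of types, and the natural way to do this is through an Ehrenfeucht-Fra\"iss\'e argument. Since the claim is about \mso-types of fixed quantifier depth $d$, the right tool is the $d$-round \mso game (the pebble-and-coloring game for monadic second-order logic), where in each round Spoiler selects a subset of one structure and Duplicator must respond with a subset of the other, and Duplicator wins if the induced map on the marked singletons/sets is a partial isomorphism with respect to the relational vocabulary. The statement that two $d$-types coincide is precisely the statement that Duplicator wins the $d$-round game, so it suffices to transfer strategies between the game on $\structa$ (with distinguished elements $a_1,\dots,a_m$) and the game on its singleton lifting (with distinguished singletons $\set{a_1},\dots,\set{a_m}$).

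First I would make precise what a ``move'' means in each game. In the \mso game on the first-order structure $\structa$, a position consists of the distinguished elements together with the subsets chosen so far; in the \mso game on the singleton lifting, a position consists of the distinguished singletons together with subsets chosen so far. The key observation is that the two games are played on structures with the \emph{same universe}, and the moves available (choosing an arbitrary subset) are literally identical in both games. What differs is only the winning condition, i.e.\ what counts as a local isomorphism at the end. So the heart of the proof is to check that the atomic/quantifier-free types agree: a tuple of subsets $U_1,\dots,U_k$ in $\structa$ and the \emph{same} tuple in the singleton lifting satisfy exactly corresponding quantifier-free formulas. This follows directly from the definition of the singleton lifting, because the lifted relation $R$ on sets holds of $(U_1,\dots,U_n)$ exactly when each $U_i$ is a singleton $\set{a_i}$ and $\structa\models R(a_1,\dots,a_n)$; the first-order relations on the distinguished elements are recovered by testing whether the corresponding singletons satisfy the lifted relation. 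Conversely, equality and inclusion of subsets, and the singleton-ness of a set, are all expressible and preserved.

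Because the two games share a universe and share the space of moves, Duplicator can use the identity strategy: respond to each of Spoiler's subset choices with the identical subset in the other structure. I would then verify that this identity strategy is winning, which reduces to the atomic-level correspondence established above, together with the bookkeeping that the distinguished markers match (elements $a_i$ on one side versus singletons $\set{a_i}$ on the other). This gives a strategy-preserving correspondence in both directions, hence an equivalence of $d$-games, hence equality of $d$-types, and the induced map on types is clearly a bijection since it is witnessed by the identity on positions.

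The main obstacle I anticipate is not conceptual but rather the careful matching of the two vocabularies at the atomic level, specifically handling the fact that in the lifting one must be able to detect, using only \mso over the set vocabulary, whether a chosen set is a singleton and whether it equals one of the distinguished singletons; this is needed so that Spoiler cannot exploit non-singleton moves to distinguish the structures. I would address this by noting that singleton-ness and equality of sets are quantifier-free-expressible in the monadic vocabulary (via the built-in equality and inclusion on sets), so these distinctions are already visible at the atomic level and are preserved by the identity map. Once this is in place, the bijection of types follows, and by Theorem~\ref{thm:rank-invariant-under-quantifier-rank}'s reduction this is exactly what is needed to transfer the rank computation from first-order structures to their singleton liftings.
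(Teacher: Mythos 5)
The paper states Fact~\ref{fact:monadic-lifting} as a bare observation and supplies no proof, so there is nothing to compare your argument against line by line; an Ehrenfeucht--Fra\"iss\'e (equivalently, a syntactic back-and-forth translation) argument of the kind you sketch is surely what is intended, and your overall plan is the right one. One presentational point first: an Ehrenfeucht--Fra\"iss\'e game compares \emph{two} structures, so what you need is that Duplicator wins the $d$-round game on a pair $(\structa,\structa')$ with pebbled tuples if and only if she wins it on the pair of their singleton liftings with the pebbled singletons; as written, ``the game on $\structa$'' versus ``the game on its singleton lifting,'' with Duplicator copying Spoiler's set ``in the other structure,'' conflates the structure/lifting pair with the two sides of a single game.

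The substantive gap is your central claim that the two games have ``literally identical'' move spaces and that ``what differs is only the winning condition.'' The $d$-type on the first-order side is taken with respect to \cmso, which has first-order (element) quantifiers, set quantifiers, and modulo counting; the type on the lifted side is taken with respect to a logic whose variables range over sets only. So on the first-order side Spoiler may place element pebbles, which have no literal counterpart in the lifted game and must be simulated by singleton moves. Transferring a winning strategy from the lifted game back to the first-order game is then the delicate direction: when Spoiler plays an element $x$, Duplicator consults the lifted strategy on the move $\set{x}$, but nothing in the lifted game's winning condition forces her response $Y$ to be a singleton unless singleton-ness is visible at the atomic level. This is exactly the point at which your parenthetical claim that singleton-ness is ``quantifier-free-expressible in the monadic vocabulary'' carries all the weight, and under the natural reading of the lifted vocabulary (the lifted relations together with set equality and perhaps inclusion) it is false: asserting that a nonempty set has no proper nonempty subset requires a quantifier, and the lifted relations being false on non-singleton tuples does not by itself distinguish a non-singleton from a singleton on which the relations also happen to fail. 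The standard repairs are either to add a singleton (or cardinality) predicate to the lifted vocabulary as an atomic relation, or to settle for the $d$-type on one side determining the $(d+c)$-type on the other for a small constant $c$; the latter costs nothing for the intended application, since Theorem~\ref{thm:rank-invariant-under-quantifier-rank} only requires asymptotic equivalence and Lemma~\ref{lem:rank-with-quantifier-depth-d} already absorbs shifts in quantifier depth, but either way the step needs to be said explicitly rather than dismissed as an identity of games.
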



The matrices defining the rank are adapted to monadic second-order  structures in the obvious way, defined as follows. Consider a monadic second-order structure $\structa$, a  subset $X$ of its universe, a  quantifier depth $d$ and a number of arguments $m$.  Define 
    the following matrix, which we call the $d$-type matrix of arity $m$, for the subset $X$ in the structure $\structa$:
    \begin{enumerate}
    \item rows are $m$-tuples $(X_1,\ldots,X_m)$ of subsets of $X$;
    \item columns are $m$-tuples $(Y_1,\ldots,Y_m)$ of subsets of $\structa \setminus X$;
    \item the value of the matrix in a row $(X_1,\ldots,X_m)$ and  column $(Y_1,\ldots,Y_m)$ is the monadic $d$-type of the tuple
    $X_1 \cup Y_1,\ldots,X_m \cup Y_m$ in $\structa$.
\end{enumerate}

We denote this matrix by $M_{d,n}$, assuming that the structure $\structa$ and subset $X$ are implicit from the context. The following lemma  rephrases Theorem~\ref{thm:rank-invariant-under-quantifier-rank} in terms of monadic second-order structures.

\begin{lemma}\label{lem:rank-with-quantifier-depth-d} Fix  a class of monadic second-order structures, where $m$ is the maximal arity of relations. For every quantifier depth $d \in \set{0,1,\ldots}$, the following  rank functions, each of which inputs a structure $\structa$ with a distinguished subset of elements $X$, are asymptotically equivalent: 
    \begin{enumerate}
        \item \label{it:0m} number of distinct rows in the matrix $M_{0,m}$;
        \item \label{it:0d+m} number of distinct rows in the matrix $M_{0,d+m}$;
        \item \label{it:0dm} number of distinct rows in the matrix $M_{d,m}$.
    \end{enumerate}
\end{lemma}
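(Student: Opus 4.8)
The plan is to prove the three-way asymptotic equivalence by showing two separate equivalences: first that $M_{0,m}$ and $M_{0,d+m}$ count essentially the same rank (pure arity bump), and second that $M_{0,d+m}$ and $M_{d,m}$ count essentially the same rank (trading quantifier depth for arity). The overall strategy mirrors the philosophy already used in Lemma~\ref{lem:rank-row-column}: each row of one matrix can be reconstructed from a bounded amount of data derived from the rows of another, so the number of distinct rows in one is bounded by an exponential in the number of distinct rows of the other, and exponential bounds are invisible to asymptotic equivalence.

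For the equivalence \ref{it:0m} $\Leftrightarrow$ \ref{it:0d+m}, the key observation is that a quantifier-free $0$-type of a $(d+m)$-tuple is determined by the collection of quantifier-free $0$-types of all its $m$-element sub-tuples (since the maximal arity is $m$, any atomic formula involves at most $m$ of the set-arguments). Hence if $M_{0,m}$ has $r$ distinct rows, then a row of $M_{0,d+m}$ — indexed by a $(d+m)$-tuple $(X_1,\dots,X_{d+m})$ of subsets of $X$ — is determined by how its various $m$-subtuples behave against the columns, which can be encoded using finitely many applications of the $M_{0,m}$ row-structure. This gives $M_{0,d+m}$ a number of distinct rows bounded by a function of $r$, and the reverse inequality $r \le$ (rows of $M_{0,d+m}$) is trivial since $M_{0,m}$ is essentially a sub-matrix. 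I would make the counting precise exactly as in Lemma~\ref{lem:rank-row-column}: with $t$ the number of $0$-types and $c$ the number of columns, bound the number of distinct rows of each matrix by a tower of the form $t^{c}$ and note these are mutually polynomially (hence asymptotically) related.

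For the equivalence \ref{it:0d+m} $\Leftrightarrow$ \ref{it:0dm}, I would invoke an Ehrenfeucht--Fra\"iss\'e argument, which is the heart of the matter and the step I expect to be the main obstacle. The idea is that a monadic $d$-type of an $m$-tuple over $\structa$ is controlled by a $d$-round EF game, and each of the $d$ rounds lets the Duplicator choose a subset; such a choice can be split into its part inside $X$ and its part outside $X$. One should show that the monadic $d$-type of an $m$-tuple $(X_1\cup Y_1,\dots,X_m\cup Y_m)$ is determined by the quantifier-free $0$-type of an enlarged tuple of arity $m+d$, where the extra $d$ coordinates record the subsets played during the game — this is why the arity increases by exactly $d$. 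Formally I would argue that winning strategies in the $d$-round game decompose along the cut $(X, \structa\setminus X)$, so the value in the $M_{d,m}$ matrix is a function of the corresponding values in $M_{0,d+m}$, and conversely. The delicate part is handling the interaction between the $X$-side and the complement side across quantifier rounds; this is where one must be careful that the game on the whole structure factors through independent games played relative to the fixed cut, which is precisely the content that makes the rank (a property of the cut) insensitive to quantifier depth.

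Combining these two equivalences gives \ref{it:0m} $\Leftrightarrow$ \ref{it:0dm}, which is Lemma~\ref{lem:rank-with-quantifier-depth-d}; Theorem~\ref{thm:rank-invariant-under-quantifier-rank} then follows for first-order structures by transporting along the singleton lifting of Fact~\ref{fact:monadic-lifting}, since the $d$-types of a first-order structure with distinguished elements correspond bijectively to the $d$-types of its singleton lifting with the corresponding singleton subsets. The main obstacle, as noted, is the EF-game decomposition in the second equivalence; the arity-bump equivalence is routine once one records that atomic formulas see at most $m$ arguments at a time.
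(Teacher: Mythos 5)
Your treatment of the equivalence \ref{it:0m} $\Leftrightarrow$ \ref{it:0d+m} matches the paper's: since the maximal arity is $m$, a quantifier-free type of a $(d+m)$-tuple is determined by its projections onto $m$-subtuples, and the crude $t^c$-style counting from Lemma~\ref{lem:rank-row-column} absorbs the exponential losses. That part is fine.

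The gap is in the equivalence with \ref{it:0dm}, which is the entire content of the lemma, and the specific claim you lean on there is not correct as stated. You assert that the monadic $d$-type of $(X_1\cup Y_1,\dots,X_m\cup Y_m)$ "is determined by the quantifier-free $0$-type of an enlarged tuple of arity $m+d$, where the extra $d$ coordinates record the subsets played during the game." There is no single such tuple: the Ehrenfeucht--Fra\"iss\'e game is adaptive, so Spoiler's move in round $i+1$ depends on Duplicator's response in round $i$, and no fixed transcript of $d$ extra sets can witness the $d$-type. What the paper actually proves (Lemma~\ref{lem:exponential-bound}) is a one-round step: the \emph{row content} $M_{d+1,m}[(X_1,\dots,X_m),\_]$ --- the whole function from columns to values, not a single entry --- is determined by the \emph{set} of row contents $\{M_{d,m+1}[(X_1,\dots,X_m,X_{m+1}),\_] : X_{m+1}\subseteq X\}$. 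The proof splits Spoiler's move $X_{m+1}\cup Y_{m+1}$ along the cut: the part $X_{m+1}\subseteq X$ is matched using the equality of these sets of row contents, while the part $Y_{m+1}$ in the complement is kept verbatim because it only shifts which column is read. Iterating this $d$ times gives the bound, at the cost of a tower of $d$ exponentials --- and that tower is itself a symptom that the determination is nested (sets of sets of rows), not a single application of $M_{0,d+m}$ as your sketch suggests. You correctly identify this as the main obstacle, but the proposal does not contain the inductive statement or the argument that resolves it, so as written the proof of the hard direction is missing.
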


By Fact~\ref{fact:monadic-lifting}, the equivalence between items~\ref{it:0m} and~\ref{it:0dm} in Lemma~\ref{lem:rank-with-quantifier-depth-d} implies Theorem~\ref{thm:rank-invariant-under-quantifier-rank}. Therefore, it remains  to prove the lemma.

We begin with the equivalence of the first two items. 
\begin{lemma}
    The rank functions in items~\ref{it:0m} and~\ref{it:0d+m} of Lemma~\ref{lem:rank-with-quantifier-depth-d} are asymptotically equivalent.
\end{lemma}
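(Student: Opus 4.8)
The plan is to show that the two rank functions --- the number of distinct rows in the matrix $M_{0,m}$ (types with $m$ set arguments) and in the matrix $M_{0,d+m}$ (types with $d+m$ set arguments) --- bound each other asymptotically, for a fixed quantifier depth $0$. One direction is immediate: since a row of $M_{0,d+m}$ records quantifier-free types of tuples of length $d+m$, and any quantifier-free type of an $m$-tuple can be recovered by padding the tuple with, say, copies of an existing set (or with the empty set, which is definable at quantifier-free level in the monadic setting), two rows of $M_{0,m}$ that differ must already differ as rows of $M_{0,d+m}$. Hence the number of distinct rows in $M_{0,m}$ is at most the number in $M_{0,d+m}$, giving one bound without any loss.

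For the converse --- bounding the number of rows of $M_{0,d+m}$ by a function of the number of rows of $M_{0,m}$ --- the idea is a compositionality/product argument. A row of $M_{0,d+m}$ is indexed by a tuple $(X_1,\ldots,X_{d+m})$ of subsets of $X$, and its entries are quantifier-free types of $(d+m)$-tuples of sets. Since the vocabulary is fixed with maximal arity $m$, any quantifier-free formula on $d+m$ set variables is a boolean combination of atomic formulas, each of which mentions at most $m$ of the variables. First I would argue that the quantifier-free type of a $(d+m)$-tuple is therefore determined by the collection of quantifier-free types of all its $m$-sub-tuples (sub-tuples obtained by selecting $m$ of the $d+m$ coordinates), together with the bookkeeping of which coordinate plays which role. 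Consequently a row of $M_{0,d+m}$ is a function of the corresponding rows of $M_{0,m}$ obtained by restricting to the various $m$-subsets of coordinates.

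The counting then goes as follows: if $M_{0,m}$ has at most $r$ distinct rows, then for any choice of $m$ of the $d+m$ coordinates, the restriction yields one of at most $r$ possible rows; since there are $\binom{d+m}{m}$ such choices (a constant depending only on $d$ and $m$), the number of distinct rows of $M_{0,d+m}$ is bounded by $r^{\binom{d+m}{m}}$, which is asymptotically equivalent to $r$. I would phrase this cleanly using the observation already recorded in the proof of Lemma~\ref{lem:rank-row-column}, namely that a sequence of a bounded number of components, each drawn from a set of size $r$, takes at most polynomially-in-$r$ (hence asymptotically equivalent) many values.

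The main obstacle I expect is making precise the claim that the quantifier-free type of a long tuple of \emph{sets} is genuinely determined by the types of its $m$-subtuples. In the monadic second-order setting the atomic formulas are relations applied to set variables, so an atom involving coordinates $i_1,\ldots,i_j$ with $j\le m$ is visible in the subtuple picking out those coordinates; the only care needed is to track coordinate identities (e.g.\ equalities $X_i = X_{i'}$ between set variables) as part of the type, and to handle atoms with repeated variables, but these are all captured once repetitions among the selected coordinates are allowed. I would state this factorisation as a short preliminary observation and then the asymptotic equivalence follows by the counting above.
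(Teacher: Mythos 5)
Your proof is correct and follows essentially the same route as the paper, whose entire argument is the observation that (since the maximal arity is $m$) a quantifier-free type of a $(d+m)$-tuple is determined by its projections onto $m$-subtuples; you simply spell out the resulting counting bound $r^{O(1)}$ and the easy padding direction, both of which the paper leaves implicit.
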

\begin{proof}
    Since $m$ is the maximal arity of relations in the vocabulary,  a quantifier-free type is uniquely determined by its projections onto $m$-tuples of variables. 
\end{proof}

We are left with showing the equivalence of the rank function in item~\ref{it:0dm} of Lemma~\ref{lem:rank-with-quantifier-depth-d}. Clearly the matrix $M_{d,m}$  corresponding to item~\ref{it:0dm} stores more information than the matrix $M_{0,m}$ corresponding to item~\ref{it:0m}, and therefore the corresponding rank function can only be bigger. The following lemma shows that it cannot be uncontrollably bigger, because it is bounded by a function of the rank of the  matrix $M_{0,d+m}$ corresponding to item~\ref{it:0d+m}.
\begin{lemma}\label{lem:exponential-bound}
    For every monadic second-order  structure  $\structa$, every  subset $X$ of its universe, and every $d,m \in \set{0,1,\ldots}$, the number of distinct rows in the matrix $M_{d+1,m}$ is at most 
    \begin{align*}
        2^{\text{number of distinct rows in the matrix $M_{d,m+1}$}}.
    \end{align*}
\end{lemma}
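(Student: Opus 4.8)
The plan is to use the standard Ehrenfeucht--Fra\"iss\'e characterisation of monadic second-order types in the set-variable setting: the monadic $(d+1)$-type of a tuple $\bar Z$ of subsets in $\structa$ is determined by, and determines, the collection of monadic $d$-types
\[
\set{d\text{-type of } \bar Z W \text{ in } \structa : W \subseteq \structa},
\]
since one extra level of quantifier depth corresponds to one extra round in which Spoiler picks a single set and Duplicator must answer. This is the only nontrivial input; I would either cite the EF lemma or spell out the one-round argument.

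Next I would exploit the splitting of an arbitrary set $W \subseteq \structa$ as $W = X_{m+1} \cup Y_{m+1}$ with $X_{m+1} \subseteq X$ and $Y_{m+1} \subseteq \structa \setminus X$. A row of $M_{d+1,m}$ is indexed by a tuple $\bar X = (X_1,\ldots,X_m)$ of subsets of $X$, and for a column $\bar Y = (Y_1,\ldots,Y_m)$ its entry, the $(d+1)$-type of $(X_1 \cup Y_1,\ldots,X_m \cup Y_m)$, is by the EF step determined by the set of $d$-types of the one-set extensions $(X_1 \cup Y_1,\ldots,X_m\cup Y_m, X_{m+1}\cup Y_{m+1})$. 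But each such $d$-type is exactly the entry of $M_{d,m+1}$ in row $(X_1,\ldots,X_m,X_{m+1})$ and column $(Y_1,\ldots,Y_m,Y_{m+1})$. This is the bridge between the two matrices.

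Writing $R$ for the number of distinct rows of $M_{d,m+1}$, I would assign to each row index $\bar X$ of $M_{d+1,m}$ the set
\[
S(\bar X) = \set{\text{row of } (X_1,\ldots,X_m,X_{m+1}) \text{ in } M_{d,m+1} : X_{m+1}\subseteq X}
\]
of rows of $M_{d,m+1}$ reachable by appending one further subset of $X$; this is a subset of the $R$-element set of distinct rows of $M_{d,m+1}$. The heart of the argument is that the whole row of $M_{d+1,m}$ at $\bar X$ depends only on $S(\bar X)$. Indeed, fix a column $\bar Y$; two appended sets $X_{m+1}$ with equal rows in $M_{d,m+1}$ produce identical $d$-type entries on every column, in particular on all columns $(Y_1,\ldots,Y_m,Y_{m+1})$, so the family of $d$-types contributed to the extension set depends only on the $M_{d,m+1}$-row of $X_{m+1}$ and on $\bar Y$. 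Hence the set of $d$-types of all one-set extensions, equivalently (by the EF step) the $(d+1)$-type entry at $\bar Y$, is a function of $S(\bar X)$ and $\bar Y$ alone, so $S(\bar X) = S(\bar X')$ forces $\bar X$ and $\bar X'$ to give the same row of $M_{d+1,m}$.

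Consequently the number of distinct rows of $M_{d+1,m}$ is at most the number of distinct values $S(\bar X)$, which is at most $2^R$, as claimed. I expect the only real obstacle to be the careful bookkeeping of the one-round EF step in the monadic setting and making precise that appending a single set $W$ splits cleanly into its $X$-part and its complement-part; once that correspondence is fixed, the counting is immediate.
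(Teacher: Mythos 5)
Your proposal is correct and follows essentially the same route as the paper: both arguments show that the row content of $M_{d+1,m}$ at $(X_1,\ldots,X_m)$ is determined by the set of row contents $\{M_{d,m+1}[(X_1,\ldots,X_m,X_{m+1}),\_] : X_{m+1}\subseteq X\}$ via the one-round Ehrenfeucht--Fra\"iss\'e step (splitting Spoiler's set $W$ as $X_{m+1}\cup Y_{m+1}$), and then count subsets. The only difference is presentational: the paper phrases the key step as an explicit Duplicator strategy, whereas you phrase it declaratively via the ``set of $d$-types of one-set extensions determines the $(d+1)$-type'' formulation.
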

By applying this lemma $d$ times, we get the equivalence of items~\ref{it:0d+m} and~\ref{it:0dm}, although the bound uses a tower of $d$ exponentials. (This is not particularly surprising, since a tower of exponentials is the usual dependency of the number of types on the  quantifier depth.) Therefore, the lemma  concludes the proof of Theorem~\ref{thm:rank-invariant-under-quantifier-rank}, and it only  remains to prove it. 
\begin{proof}[Proof of Lemma~\ref{lem:exponential-bound}]  This is a symbol-pushing proof. We begin by introducing some notation and assigning types to the objects that will be used. 
    
    In the lemma, we are counting \emph{distinct} rows in a matrix, so we use the following notation: a \emph{row index} is the name of a row, while a \emph{row content} is the sequence of values in the row, which is indexed by columns in the matrix.
    In the   matrix  $M_{d+1,m}$, a row index is   a choice of subsets $X_1,\ldots,X_m \subseteq X$, and the corresponding row content, which we denote by  
    \begin{align*}
    M_{d,m}[(X_1,\ldots,X_m), \_]
    \end{align*}
    is a function of type
    \begin{align*}
    \myunderbrace{(\powerset(\structa \setminus X))^m}{columns of the matrix}
    \to 
    \myunderbrace{\text{$(d+1)$-types with $m$ variables}}{type of values in the matrix}.
    \end{align*}

We want to estimate the number of distinct rows, i.e.~the number of possible row contents, in the matrix $M_{d+1,m}$, based on the corresponding number for  the matrix $M_{d,m+1}$. This is done in the following claim, which immediately implies the lemma.
\begin{claim}
    In the matrix $M_{d+1,m}$, the row contents 
    \begin{align*}
    M_{d+1,m}[(X_1,\ldots,X_m), \_] 
    \end{align*}
are uniquely determined by the following set of row contents
    \begin{align*}
        \setbuild
        {M_{d,m+1}[(X_1,\ldots,X_m,X_{m+1}), \_] }
        {$X_{m+1} \subseteq X$}.
        \end{align*}
\end{claim}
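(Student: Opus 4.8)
The plan is to reduce the claim to the standard Ehrenfeucht--Fra\"iss\'e characterisation of monadic quantifier depth. The single fact I would invoke is the following: for any tuple $\bar Z = (Z_1,\ldots,Z_m)$ of subsets of the universe of $\structa$, the $(d{+}1)$-type of $\bar Z$ is uniquely determined by the set $\set{\text{$d$-type of } (\bar Z, W) : W \subseteq \structa}$ of $d$-types obtained by adjoining one further set $W$. This holds because every \mso formula of quantifier depth at most $d{+}1$ is, up to logical equivalence, a Boolean combination of formulas of the form $\exists W\,\psi(\bar X, W)$ with $\psi$ of quantifier depth at most $d$; whether $\bar Z$ satisfies such a formula depends only on which $d$-types are realised by some extension $(\bar Z, W)$, and this is exactly the information in the displayed set. (Since the quantifier $\exists W$ is a monadic set quantifier, $W$ ranges over arbitrary subsets of the universe, which is what makes the next step exhaustive.)

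With this fact in hand, fix a row index $(X_1,\ldots,X_m)$ and a column $(Y_1,\ldots,Y_m)$ of $M_{d+1,m}$, and set $Z_i = X_i \cup Y_i$. The corresponding cell of $M_{d+1,m}$ is, by definition, the $(d{+}1)$-type of $\bar Z$. Now every subset $W$ of the universe splits uniquely as $W = X_{m+1} \cup Y_{m+1}$ with $X_{m+1} = W \cap X \subseteq X$ and $Y_{m+1} = W \setminus X \subseteq \structa \setminus X$, and for this decomposition the $d$-type of $(\bar Z, W)$ is precisely the entry of $M_{d,m+1}$ at row $(X_1,\ldots,X_m,X_{m+1})$ and column $(Y_1,\ldots,Y_m,Y_{m+1})$. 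Thus the set of $d$-types demanded by the key fact is obtained from the given data as the union, over all $X_{m+1} \subseteq X$, of the sets of entries $M_{d,m+1}[(X_1,\ldots,X_m,X_{m+1}),(Y_1,\ldots,Y_m,Y_{m+1})]$ obtained by holding the first $m$ column coordinates equal to $(Y_1,\ldots,Y_m)$ and letting $Y_{m+1}$ range over subsets of $\structa \setminus X$.

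Each such inner set is read off from a single row content $M_{d,m+1}[(X_1,\ldots,X_m,X_{m+1}),\_]$ in the given collection, so the whole set of $d$-types, and hence the $(d{+}1)$-type, i.e.~the desired cell of $M_{d+1,m}$, is determined by $\set{M_{d,m+1}[(X_1,\ldots,X_m,X_{m+1}),\_] : X_{m+1} \subseteq X}$. As this holds for every column $(Y_1,\ldots,Y_m)$, the entire row content $M_{d+1,m}[(X_1,\ldots,X_m),\_]$ is determined, proving the claim. It is harmless that the data is presented as an \emph{unordered} set of row contents indexed by $X_{m+1}$: we only ever form a union of $d$-types over $X_{m+1}$, so discarding the index $X_{m+1}$ loses nothing. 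The only genuine content is the depth-$(d{+}1)$ characterisation in the first paragraph, which is the routine composition/EF argument; the remainder is the bookkeeping bijection $W \leftrightarrow (W \cap X,\, W \setminus X)$ that matches quantification over the whole universe with the extra row coordinate $X_{m+1}$ on the $X$-side and the extra column coordinate $Y_{m+1}$ on the complement side. I anticipate no real obstacle beyond checking that this splitting is exhaustive, which is immediate.
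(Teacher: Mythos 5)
Your proof is correct and rests on the same underlying fact as the paper's: the paper runs the explicit $(d{+}1)$-round Ehrenfeucht--Fra\"iss\'e game, with Spoiler's move $X_{m+1}\cup Y_{m+1}$ answered via a matching row content, while you invoke the equivalent dual formulation that the $(d{+}1)$-type is determined by the set of realised $d$-types of one-set extensions and then perform the identical bookkeeping split $W=(W\cap X)\cup(W\setminus X)$. Your explicit remark that passing to an \emph{unordered} set of row contents loses nothing (since only a union over $X_{m+1}$ is formed) is a point the paper leaves implicit, but the argument is essentially the same.
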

\begin{proof}
    This is a standard  Ehrenfeucht-Fraïssé argument.
    Consider two rows in the matrix $M_{d+1,m}$ 
    \begin{align*}
    X_1,\ldots,X_m \subseteq X \\
    X'_1,\ldots,X'_m \subseteq X 
    \end{align*}
    that agree on the information in the statement of the claim, which means that the following two sets of row contents are the same 
    \begin{align*}
        \setbuild
        {M_{d,m+1}[(X_1,\ldots,X_m,X_{m+1}), \_] }
        {$X_{m+1} \subseteq X$}
        \\
        \setbuild
        {M_{d,m+1}[(X'_1,\ldots,X'_m,X'_{m+1}), \_] }
        {$X'_{m+1} \subseteq X$}.
        \end{align*}
        We need to show that the row contents 
        \begin{align*}
            M_{d+1,m}[(X_1,\ldots,X_m),\_] \\
            M_{d+1,m}[(X'_1,\ldots,X'_m),\_]
            \end{align*}
        are equal. By unravelling the definition of the matrix $M_{d+1,m}$, this means that for every column 
        \begin{align*}
        Y_1,\ldots,Y_m \subseteq \structa \setminus X,
        \end{align*}
        player Duplicator can win the $d+1$ round Ehrenfeucht-Fraisse game on the structure $\structa$ with the two sides being 
        \begin{align*}
            X_1 \cup Y_1,\ldots,X_m  \cup Y_m  \\
            X'_1 \cup Y_1 ,\ldots,X'_m \cup Y_m . 
            \end{align*}
        Consider a move by player Spoiler in this game. By symmetry, we assume that Spoiler plays in the first side, and he chooses a subset $X_{m+1} \cup Y_{m+1}$.  By the assumption on the sets of row contents in $M_{d,m+1}$ being equal,  there is some $X'_m$ such that the following row contents are equal
        \begin{align*}
            M_{d,m+1}[(X_1,\ldots,X_m,X_{m+1}),\_] \\
            M_{d,m+1}[(X'_1,\ldots,X'_m,X'_{m+1}),\_].
        \end{align*}
        This implies that the matrix $M_{d,m+1}$ has the same values in the entries corresponding to 
        \begin{align*}
            X_1 \cup Y_1,\ldots,X_{m+1}  \cup Y_{m+1}  \\
            X'_1 \cup Y_1 ,\ldots,X'_{m+1}\cup Y_{m+1} 
            \end{align*}
        and therefore  player  Duplicator can use $X'_{m+1} \cup Y_{m+1}$ as her response to win the game. 
\end{proof}
This completes the proof of Theorem~\ref{thm:rank-invariant-under-quantifier-rank}.
\end{proof}

\section{Bounded branching}
\label{sec:bounded-branching}
Recall item~\ref{it:to-tree} of Conjecture~\ref{conj:more-refined}, which said that if a $\classc$-to-tree transduction is rank-invariant, then it is sub-definable. In this section, we prove a special case of the conjecture, which is the most technically advanced contribution of the paper. This result will work with structures that are not necessarily graphs, i.e.~we allow relations in the vocabulary that have arity bigger than two, and therefore we cannot use the results from~\cite{linearcliquewidth2021}. 

We begin by observing that the assumption on being rank-invariant cannot be weakened to rank-decreasing. 

\begin{example}\label{ex:grid-to-tree-transduction}
    Here is a transduction that outputs trees, is rank-invariant, but is not sub-definable. Consider the transduction which inputs a grid, and outputs all possible trees over the universe of the input grid. Like every transduction that inputs grids, this transduction is rank-decreasing. However, it is not sub-definable, since the number of trees on a universe of size $n$ is bigger than the number of possible outputs for a sub-definable transduction. This is the because the former is $2^{\Theta(n \cdot \log n)}$ while the latter is $2^{\Theta(n)}$.
\end{example}

Define a \emph{minor} of a tree to be a tree that is obtained by removing some leaves, and restricting the family of subtrees to the smaller set of leaves. 
The \emph{branching} of a tree is defined to be the largest $k$ such that the tree contains, as a minor, the complete binary tree of height $k$.  We will be interested in classes of trees of bounded branching. An equivalent description of bounded branching is that some tree is avoided as a minor (this is because every tree is a minor of some complete binary tree). 

The goal of this section is to prove the following special case of item~\ref{it:to-tree} in Conjecture~\ref{conj:more-refined}, in which the outputs have bounded branching. 
\begin{theorem}\label{thm:to-tree-linear}
    Let $\classc$ be a class of structures. If a $\classc$-to-tree  transduction is rank-invariant, and has outputs of bounded branching, then it is sub-definable.
\end{theorem}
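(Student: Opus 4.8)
The plan is to show that the output tree can be produced from the input structure by a \cmso transduction: after guessing a bounded colouring of $\structa$, whose universe is exactly the leaf set of the output tree $\structt$, we must define the ternary relation ``$z$ lies in the least subtree containing $x$ and $y$''. The whole difficulty is to recover the laminar family of subtrees of $\structt$ from $\structa$ alone; once a depth-first linear order on the leaves compatible with $\structt$ is available, every subtree becomes an interval and a bounded bottom-up automaton reconstructs the tree. That automaton is legitimate: since $\alpha$ is rank-invariant the rank of a set is asymptotically the same in $\structa$ and $\structt$, so subtrees and sub-forests, which have bounded rank in $\structt$ by Lemma~\ref{lem:ranks_in_trees}, also have bounded rank in $\structa$. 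Hence the induced substructures $\structa|X$ indexed by subtrees carry boundedly many types, and Lemma~\ref{lem:transition-function-on-types} supplies a finite transition structure whose run can be guessed and verified in \cmso, exactly as in the proof of Theorem~\ref{thm:from-trees}.

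Everything thus reduces to producing a compatible linear order, and here I would exploit bounded branching through a Strahler-style heavy-path decomposition, inducting on the branching bound $b$. Choosing at each internal node a principal child leading to a subtree of maximal branching, and following principal children from the root, yields a principal spine: a descending chain of subtrees $v_1 \supseteq v_2 \supseteq \cdots$ off which hang forests of branching at most $b-1$. Each $v_i$ is a subtree and each increment $v_i \setminus v_{i+1}$ a sub-forest, so both have bounded rank in $\structa$; by the induction hypothesis the attached forests can themselves be linearised and their ternary structure defined, so the top-level task is only to order the spine and place the attached forests along it. To control the local complexity I would apply the machinery of Section~\ref{sec:independent-sets}: bounded rank of the increments forces boundedly many induced types, so by Lemma~\ref{lem:informative-coloring} each level admits a bounded informative colouring in which elements of the same colour are twins.

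The crux, and the main obstacle, is precisely this linearisation of the spine using only a bounded colouring. A naive guess of the order is impossible, since a linear order on $n$ elements carries $\Theta(n \log n)$ bits while a bounded colouring carries only $O(n)$; indeed Example~\ref{ex:grid-to-tree-transduction} shows that outputting arbitrary trees is not sub-definable. The way out is that the order need only be pinned down up to permuting indistinguishable elements: when two leaves are twins they occupy symmetric positions of $\structt$, so reordering them yields the very same unordered tree and no information is needed to separate them. I would therefore guess a colouring recording only the type and role of each leaf along the spine, a word over a bounded alphabet costing $O(n)$ bits, and then show that the compatible order is \cmso-definable from $\structa$ together with this colouring, using the genuine relations of $\structa$ to separate non-twin leaves and invoking twin-interchangeability (via Claim~\ref{claim:twin-informative}) to argue that the residual ambiguity never changes the output tree. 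Establishing this definable linearisation for arbitrary structures, rather than graphs, is exactly where the results of~\cite{linearcliquewidth2021} are unavailable and the argument becomes technically heavy.

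Finally, I would assemble the pieces: a single guessed colouring simultaneously encodes the spine order at every level of the bounded-depth heavy-path hierarchy together with the informative colourings of the attached forests, the depth-first order is defined level by level, and the ternary relation of $\structt$ is defined from this order by the guess-and-check bottom-up automaton described above. Since the whole construction is a bounded colouring followed by \cmso-definable relations, the resulting transduction is definable and contains $\alpha$, which is exactly sub-definability.
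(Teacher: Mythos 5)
Your high-level architecture agrees with the paper's: the reduction of bounded branching to a bounded-depth alternation of ``spine'' (ordered) and ``branching'' (unordered) levels is essentially the paper's Lemma~\ref{lem:branching-po-trees} (partially ordered trees of bounded height), and using rank-invariance plus Lemma~\ref{lem:ranks_in_trees} to bound the ranks of subtrees and sub-forests in $\structa$, then reconstructing the ternary relation by a guessed-and-verified run as in Theorem~\ref{thm:from-trees}, is exactly how the paper assembles the pieces (Lemma~\ref{lem:sufficient-condition}). So the skeleton is sound.

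However, there is a genuine gap at precisely the point you flag as the crux and then do not resolve: you never give an argument that the order along a spine (or, for that matter, the partition into siblings at a branching node) is \cmso-definable from $\structa$ plus a bounded colouring. Saying you will ``use the genuine relations of $\structa$ to separate non-twin leaves and invoke twin-interchangeability'' is not a proof, and the twin-interchangeability step is itself suspect: the transduction is an arbitrary relation, not assumed closed under automorphisms of the input, so two leaves that are twins in $\structa$ may be placed at completely asymmetric positions of $\structt$; permuting them need not yield the same output tree, and one must instead argue (via rank-invariance) that the output cannot distinguish too many twins. The paper's entire Sections~\ref{sec:root-is-unordered} and~\ref{sec:root-is-ordered} are devoted to filling exactly this hole: an approximation property extracted from rank-invariance (a formula $\varphi$ that is complete for intervals and approximately sound, with a semigroup recognisability condition obtained by pulling $\varphi$ back through the converse transduction via Theorem~\ref{thm:from-trees}), the Factorisation Forest Theorem to achieve homogeneity, and a seed/island argument that recovers the $d$-fold successor by counting indices modulo $2d$ --- the place where modulo counting is genuinely used. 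None of this, nor any substitute for it, appears in your proposal. A secondary concern is that your heavy-path decomposition is a decomposition of the output tree, which is the object being constructed; it can only enter a transduction as a guessed colouring of $\structa$ whose correctness must be \emph{verified}, and that verification again requires definable access to the very order you are trying to build.
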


Note that the transduction in the above theorem is also sub-definable in the converse direction, by Theorem~\ref{thm:from-trees}. The rest of this section is devoted to proving Theorem~\ref{thm:to-tree-linear}.



\subsubsection{Partially ordered trees}
In the proof of the above theorem, we will be working with a notion that is equivalent to trees of bounded branching, namely partially ordered trees of bounded height. These are trees where some, but not all, nodes  are equipped with a linear order on their children.

\begin{definition}[Partially ordered trees]
    A \emph{partially ordered tree}  is a tree together with the following extra structure:
    \begin{enumerate}
        \item a partition of the nodes into two kinds:  ordered and unordered nodes;
        \item for each ordered node, a linear order on its children. 
    \end{enumerate} 
\end{definition}

To represent partially ordered trees as logical structures, we rely on the following notion.
Define the \emph{document order} of a partially ordered tree to be the following preorder on leaves: $x \le y$ if either $x=y$, or otherwise  the closest common ancestor of $x$ and $y$ is an ordered node, and the child containing $x$ is to the left of the child containing $y$. If the closest common ancestor of the two leaves is an unordered node, then they are incomparable in the document order. In particular, if a tree has only unordered nodes, then  every two nodes are incomparable.  It is not hard to see that a partially ordered tree is uniquely specified by the family of nodes and its document order.

The following lemma shows that partially ordered trees of bounded height are the same as trees of bounded branching. 
\begin{lemma}\label{lem:branching-po-trees}
    For every $k$ there is a transduction between:
    \begin{enumerate}
        \item trees of branching $\le k$; and 
        \item partially ordered trees of height $\le 2k$,
    \end{enumerate}
    which is bijective and  definable in both directions. 
\end{lemma}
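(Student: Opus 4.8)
The plan is to identify the branching of a tree with a Strahler-type number and to use the canonical decomposition it induces to build both directions of the transduction.

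First I would define, for each node $v$ of a tree $\structt$, a number $b(v)$ by the bottom-up recursion: $b(v)=0$ if $v$ is a leaf, and otherwise $b(v)$ is the maximum of the values $b(w)$ over the children $w$ of $v$, incremented by one precisely when this maximum is attained by at least two children. A standard argument (the complete binary tree of height $j$ is a minor of the subtree rooted at $v$ if and only if $b(v)\ge j$) shows that the branching of $\structt$ equals $b$ of the root, so for trees of branching $\le k$ all values $b(v)$ lie in $\set{0,\ldots,k}$. I would then record the induced decomposition: call $v$ a \emph{branch node} if the maximum above is attained at least twice (so $b$ strictly increases at $v$), and a \emph{path node} otherwise, in which case $v$ has a unique \emph{heavy} child attaining the maximum. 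Following heavy children keeps $b$ constant, so the nodes split into maximal heavy paths, the \emph{segments} of the decomposition, each terminating at a branch node or a leaf; crucially, $b$ drops by at least one whenever one leaves a segment, either through a light child of a path node or below a branch node.

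Next I would define the forward transduction into partially ordered trees recursively along this decomposition. A maximal segment of path nodes produces a single \emph{ordered} node, whose children are, in spine order, the light subtrees hanging at each path node (grouped under an \emph{unordered} node when a node carries at least two light children, and passed through directly otherwise so as not to create a unary node), followed by the recursive images produced at the branch node terminating the segment; a branch node itself produces an \emph{unordered} node over the recursive images of its (necessarily $b$-smaller) children. Since each descent past a branch node or a light edge strictly decreases $b$ and contributes one ordered and one unordered layer, every root-to-leaf path meets at most $2k$ internal nodes, giving height $\le 2k$. Equivalently, rather than building the intermediate nodes, I can define the document order on the leaves directly: $x$ and $y$ are comparable exactly when their meet in $\structt$ is a path node, in which case the light side precedes the heavy side. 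This is the preorder of the partially ordered tree just described, and it is \mso-definable from $\structt$ once the labelling $b$ is available; that labelling has bounded range, so it can be guessed and verified by a local \mso condition using Lemma~\ref{lem:node-labelled-trees}.

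For the inverse, and for bijectivity, I would read the decomposition back off a partially ordered tree $\structs$ of height $\le 2k$: ordered nodes mark segments, unordered nodes mark branch points or light-subtree groupings, and leaves remain leaves, so $\structt$ is reconstructed by reversing the gadget construction, again by an \mso-definable map from the document order. The two maps are shown to be mutually inverse by induction on $k$, which simultaneously yields surjectivity. I expect the main obstacle to be precisely the bookkeeping that keeps this a genuine \emph{bijection} under the no-unary-node convention: one must fix a normal form in which the type of each gadget (ordered, unordered, or leaf) is unambiguously recoverable, so that a grouping node over several light subtrees is never confused with the root of a single recursively inserted subtree. Pinning down this disambiguation, together with the exact factor-$2k$ height bound, should be the only delicate points; the \mso-definability of both directions then follows routinely from the definability of $b$ and of the path/branch classification.
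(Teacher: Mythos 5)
The paper does not actually include a proof of this lemma, so there is nothing to compare against; I can only assess your argument on its own terms. Your overall route --- the Strahler-type labelling $b$, the heavy-path decomposition into segments, and the alternating ordered/unordered gadgets giving height $\le 2k$ --- is the natural argument and is sound in its main thrust. Two remarks on the details. First, note that a path node necessarily has $b(v)\ge 1$ and its heavy child is never a leaf (a non-leaf node all of whose children are leaves attains the maximum twice and is a branch node), so segments always terminate at branch nodes; this slightly simplifies your case analysis. Second, your direct characterisation of the document order is off in one case: if two leaves have their meet at a path node $v$ but both lie in \emph{light} children of $v$, they sit under the unordered grouping gadget for $v$ and are incomparable, whereas your criterion (``comparable exactly when the meet is a path node'') declares them comparable. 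The correct statement is that $x$ and $y$ are comparable iff their meet is a path node and exactly one of them lies in its heavy child. This is easily fixed but, as stated, the defined preorder is not the document order of the tree you construct.

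The more substantive issue is the claim that induction ``simultaneously yields surjectivity.'' Your map cannot be surjective onto all partially ordered trees of height $\le 2k$, and in fact no bijection between the two classes in the literal sense can exist: on a two-element universe there is exactly one tree of branching $\le 1$, but three partially ordered trees of height $\le 2$ (the root may be unordered or ordered in either of two ways). So the lemma must be read as asserting an injective transduction, definable in both directions, between trees of branching $\le k$ and a (definable) subclass of partially ordered trees of height $\le 2k$ --- which is all that the application to Theorem~\ref{thm:to-tree-linear} requires, since one composes with the forward map and then with its converse. Your construction does deliver this, provided you carry out the disambiguation of gadget types that you yourself flag (e.g., the branch gadget is always the last child of its segment's ordered node, and grouping nodes occur only in non-final positions), but you should drop the surjectivity claim and instead verify that the image class is itself recognisable enough for the inverse transduction to be well defined on it.
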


Thanks to Lemma~\ref{lem:branching-po-trees}, an equivalent statement of Theorem~\ref{thm:to-tree-linear} is: if a transduction that outputs partially ordered trees of bounded height is  rank-invariant, then it is sub-definable. We will  prove this induction on the height of the output trees, and in the induction step, we will show that the local tree structure is sub-definable.

\subsubsection{Local tree structure}
We begin by formalising the concept of local tree structure in a partially ordered tree. 

\begin{definition}
    Let $X$ be a node  in a partially ordered tree. 
\begin{enumerate}
    \item  Define the  \emph{child partition} to  be the equivalence relation on $X$ which identifies two elements if they are in the same child.
    \item Assume that $X$ is an ordered node. Define the \emph{child preorder} to be the linear preorder on $X$ which is obtained by ordering the equivalence classes of the child partition according to the order on children of $X$.
\end{enumerate} 
\end{definition}

The child preorder is represented as a linear preorder, i.e.~a relation that is reflexive, transitive, and total (i.e.~every two elements are related in at least one direction). The following lemma will be used several times in this section to show sub-definability for transductions that output partially ordered trees.

\begin{lemma}\label{lem:sufficient-condition}
    Let $\alpha$ be a transduction that outputs partially ordered trees of bounded height.  A sufficient condition for $\alpha$ being sub-definable is that the following two transductions are sub-definable: 
         \begin{align*} 
            \myunderbrace{
                \setbuild{(\structa|X, \sim)}{$X$ is an unordered node for some  $\structt \in \alpha(\structa)$ \\ and $\sim$ is the  corresponding child partition }
            }{ this transduction is called the local unordered transduction of $\alpha$} 
            \\
            \myunderbrace{
            \setbuild{(\structa|X, \le)}{$X$ is an ordered node for some  $\structt \in \alpha(\structa)$ \\ and $\le$ is the  corresponding child preorder } }{
                this transduction is called the local unordered transduction of $\alpha$
            }.   \end{align*}
\end{lemma}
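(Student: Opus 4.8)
The plan is to reconstruct the entire output partially ordered tree from its local pieces by induction on the height bound, using the two sub-definable local transductions as the inductive engine. Let me set up the induction. Suppose $\alpha$ outputs partially ordered trees of height at most $h$, and assume the two local transductions (the local unordered transduction and the local ordered transduction) are sub-definable. I would proceed by induction on $h$. In the base case $h=0$ the tree is a single leaf and there is nothing to do. For the induction step, I want to peel off the root: given an output tree $\structt$ over a structure $\structa$, the root has some child partition (and, if ordered, a child preorder), and each child subtree is itself a partially ordered tree of height at most $h-1$.

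\begin{paragraph}{}
The key steps, in order, would be the following. First, I would use the two assumed local transductions at the root to produce, in a sub-definable way, the partition of the universe into the root's children (and, in the ordered case, their left-to-right order). Concretely, the local transductions take the induced structure $\structa|X$ at a node $X$ and produce the child partition $\sim$ or child preorder $\le$; applied at the root $X$ equal to the whole universe, this is just $\structa$ itself, so this data is directly sub-definable. Second, I would observe that each child $Y$ of the root determines an induced substructure $\structa|Y$, and that the local transductions of $\alpha$ restricted below $Y$ are again sub-definable (they are obtained by restricting the same transductions to the relevant nodes). Third, I would invoke the induction hypothesis on each child subtree to reconstruct its internal partially ordered tree structure in a sub-definable manner. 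Fourth, I would assemble the pieces: the document order on the whole tree is determined by (i) the child preorder at the root together with (ii) the recursively reconstructed document orders inside each child, since two leaves in distinct children are ordered exactly according to the root's child preorder (when the root is ordered) or incomparable (when it is unordered), and two leaves in the same child are ordered by the recursively obtained document order of that child. Because a partially ordered tree is uniquely specified by its family of nodes and its document order, reconstructing the document order suffices.
\end{paragraph}

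The main obstacle I anticipate is handling the induction uniformly across all nodes at once, rather than literally recursing level by level on a fixed tree, since an \cmso transduction must produce the whole output in a single pass without knowing the tree in advance. The clean way around this is to guess, via a colouring, the assignment of each leaf to the root-to-leaf path data — i.e.\ at each depth which child block the leaf falls into — and then use the local transductions to certify, at every node simultaneously, that the guessed child partition and child preorder are correct on the induced substructure at that node. The subtlety is that the local transductions are stated in terms of $\structa|X$ for a node $X$, so I must verify that the induced substructure $\structa|X$ is itself recoverable from $\structa$ together with the guessed partition data using quantifier-free (or bounded-quantifier-depth) formulas relativised to $X$; this is exactly the kind of compositional reconstruction licensed by the induced-substructure machinery and the Compositionality Lemma, and by Lemma~\ref{lem:rank_in_substructure} the ranks behave correctly under restriction so that the rank-invariance hypothesis propagates to each child level. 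Once this simultaneous certification is in place, the document order is defined by a single \cmso formula combining the guessed level-wise block orders, and the whole transduction factors as a guessing step (colour the leaves by their path data, which is sub-definable) followed by a definable reconstruction, yielding sub-definability of $\alpha$.
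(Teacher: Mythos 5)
Your proposal matches the paper's proof: both proceed by induction on the height bound, peeling off the root by applying the assumed local transduction there and invoking the induction hypothesis on the transduction that maps each $\structa|X$ to the child subtree $\structt|X$. The paper's version is far terser (three sentences) and does not address the implementation issues you raise in your final paragraph; your passing appeal to a ``rank-invariance hypothesis'' is spurious, since the lemma assumes only sub-definability of the two local transductions, but this does not affect the argument.
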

\begin{proof}
    Induction on the maximal height of the output trees. Define $\beta$ to be the transduction 
    \begin{align*}
    \setbuild{(\structa|X, \structt|X)}{$(\structa,\structt) \in \alpha$ and $X$ is a child subtree in $\structt$}.
    \end{align*}
    To this transduction, we can apply the induction assumption, and therefore it is sub-definable. The original transduction is then obtained from by combining $\beta$ with the local transduction, in the special case where  $X$ is the root of the tree.
\end{proof}

In  Sections~\ref{sec:root-is-unordered} and~\ref{sec:root-is-ordered} below,  we will show that for every rank-invariant transduction that outputs partially ordered trees, the two local transductions from the above lemma are sub-definable. This will imply that if, furthermore, the output trees have bounded height, then the transduction is sub-definable. This will, in turn, imply Theorem~\ref{thm:to-tree-linear}, due to the isomorphism between partially ordered trees of bounded height and trees of bounded branching that was stated in Lemma~\ref{lem:branching-po-trees}.

    \subsection{Unordered local transduction}
    \label{sec:root-is-unordered}

    The goal of this section is to prove the following lemma.

    \begin{lemma}\label{lem:unordered-case}
        For every rank-invariant transduction that outputs partially ordered trees, its local unordered transduction is sub-definable.
    \end{lemma}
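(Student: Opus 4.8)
Fix a pair $(\structa,\structt)$ in the transduction and an unordered node $X$ of $\structt$, with children $Y_1,\ldots,Y_r$. The first step is to record what rank-invariance buys us: since the children of $X$ are sibling subtrees, every union $\bigcup_{i\in S}Y_i$ is a sub-forest of $\structt$ and hence has cut-rank bounded by an absolute constant in $\structt$ (Lemma~\ref{lem:ranks_in_trees}). As the converse of the transduction is rank-decreasing, the same union has bounded rank in $\structa$, and by Lemma~\ref{lem:rank_in_substructure} bounded rank in $\structa|X$. Thus there is a single constant $k$, depending only on the transduction, such that \emph{every} union of children of $X$ has cut-rank at most $k$ in $\structa|X$. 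This uniform bound over all unions at once — not merely over boolean combinations of boundedly many blocks, as in Lemma~\ref{lem:rank_of_union} — is the structural hypothesis I would exploit.

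\textbf{Abstracting the inter-child structure.} Because each child has rank at most $k$, the induced structure on $Y_i$ realises only boundedly many quantifier-free $m$-types, so each child interacts with the rest of $X$ through a bounded interface. I would package this into a quotient structure whose universe is the set $\{Y_1,\ldots,Y_r\}$ of children and whose relations record, for each short tuple of children, the quantifier-free interaction types available across their interfaces. A set $S$ of children corresponds to the union $\bigcup_{i\in S}Y_i$, whose rank in $\structa|X$ is at most $k$; transporting this through the interface encoding shows that the quotient has \emph{universally} bounded rank. I can then invoke the implication \ref{it:universally-bounded-rank}$\Rightarrow$\ref{it:bounded-informative-colouring} of Lemma~\ref{lem:informative-coloring} to obtain an informative colouring of the children using a bounded number of colours, and pull it back to $X$ by colouring every element with the colour of its child together with its role in that child's interface. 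This colouring is bounded, hence guessable by a definable transduction.

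\textbf{Recovering the partition.} It remains to define $\sim$ from $\structa|X$ and the guessed colouring. The subtlety, and the reason a colouring alone cannot suffice, is that there may be unboundedly many mutually isomorphic (twin) children carrying the same colour, which no formula can tell apart by naming. Here I would use the equivalence-relation character of the unordered case (Example~\ref{ex:equivalence-relations}) through connectivity: guided by the informative colouring, I define an auxiliary binary relation $E'$ on $X$ by a \cmso formula so that $E'$ never links elements of distinct children yet restricts to a connected relation on each child, and then set $a\sim a'$ if and only if $a$ and $a'$ are connected in $E'$, which is \mso-definable. The uniform union-rank bound is exactly what should guarantee that ``being in the same child'' can be certified locally relative to the colouring, so that $E'$ stays inside children, while connectivity groups each child into a single class without having to name it.

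\textbf{Main obstacle.} I expect the crux to be this last step: producing the \cmso-definable relation $E'$ (or an equivalent device) that simultaneously respects child boundaries and connects each child, thereby separating twin children using only a bounded guessed colouring and monadic connectivity. This is precisely where the uniform bound ``every union of children has rank $\le k$'' is genuinely needed, as opposed to the weaker per-block bound: the bounded-interface encoding and the informative colouring of the quotient must be arranged so that the local certificate for ``same child'' is invariant under the interface type, and hence expressible once the colour is fixed. The remaining steps — the rank computation of the first paragraph, the verification that the quotient has universally bounded rank, and the definability of guessing a bounded colouring of the universe — I expect to be routine given the tools already developed.
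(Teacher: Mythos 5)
Your first paragraph is sound and matches the paper's starting point: rank-invariance plus Lemma~\ref{lem:ranks_in_trees} and Lemma~\ref{lem:rank_in_substructure} give that subsets of $X$ not cutting any child have bounded rank in $\structa|X$, and conversely that subsets of bounded rank in $\structa|X$ cut boundedly many children. But from there the proposal has a genuine gap, and you have located it yourself: the construction of the relation $E'$ (or any device that separates twin children) is not an implementation detail to be deferred — it is the entire content of the lemma, and nothing in your second paragraph supplies it. An informative colouring of the quotient structure on children (via Lemma~\ref{lem:informative-coloring}) only tells you that quantifier-free types of tuples of children depend on their colours; it gives no handle whatsoever on which \emph{elements} lie in which child, and a bounded colouring of the universe cannot by itself encode a partition into unboundedly many classes. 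There is no reason for a connectivity-based $E'$ to exist: $\structa|X$ need not contain any relation that is connected inside children and absent between them, and the uniform union-rank bound does not produce one.

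The paper's resolution is quite different and worth contrasting with your plan. It defines a \cmso formula $\varphi(Y)$ meaning ``$Y$ has rank at most $\ell$ in $\structa|X$'', which is \emph{complete} (unions of classes satisfy it) and \emph{approximately sound} (satisfying sets cut at most $k$ classes). The crucial extra ingredient, which your proposal has no analogue of, is a \emph{recognisability} condition: because the converse transduction is rank-decreasing and inputs trees, it is sub-definable by Theorem~\ref{thm:from-trees}, so $\varphi$ can be pulled back to the tree side and, by compositionality, whether $Y$ satisfies $\varphi$ is governed by a product in a finite commutative semigroup of values $\lambda(Y\cap X_i)$. After a refinement step forcing homogeneity of the images $\lambda(\powerset X_i)$, one defines a \emph{seed} (a set satisfying $\varphi$, full on some class, empty on another), fixes a maximal seed $Y_0$ and a bounded family of special classes, and proves that two non-special elements lie in different classes iff some seed agreeing with $Y_0$ on the special classes separates them. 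The semigroup idempotency of the full and empty values is exactly what lets one build the separating seed and certify it still satisfies $\varphi$. This quantification over bounded-rank sets, not connectivity, is how twin children are told apart; without the recognisability/homogeneity machinery your separating device cannot be shown to exist.
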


    The rest of Section~\ref{sec:root-is-unordered} is devoted to proving the above lemma. The assumption on rank-invariance cannot be relaxed to being rank-decreasing, for the same reasons as in Example~\ref{ex:grid-to-tree-transduction}. 
    The proof of the lemma proceeds in two steps. First, we show  that the local unordered transduction satisfies a certain technical condition, and then we show that every transduction with this condition is sub-definable.

\subsubsection{Unordered approximation}
\label{sec:unordered-approximation}
In this section, we define and prove the technical condition that will later be used to prove sub-definability.

The general idea is that we can approximate the output equivalence relation using logic in the input structure. What do we mean by approximating an equivalence relation?
Recall Example~\ref{ex:equivalence-relations}, in which we showed that in a structure which is an equivalence relation, the rank of a subset is approximately the same as the number of equivalence classes that are cut by the subset. In the following lemma, we will show that  we can define in logic for every subset an approximation of the  number of equivalence classes that are cut by this subset. The lemma also contains an additional recognisability condition; as far as we know the remainder of the proof could be improved so that the recognisability condition is not needed. 

    \begin{lemma}\label{lem:technical-condition-unordered} Let $\alpha$ be a rank-invariant transduction that outputs partially ordered trees, and let  $\beta$ be its local unordered transduction. Then $\beta$  satisfies the following condition, which we call \emph{unordered approximation}.
        There is a \cmso formula $\varphi$ over the input vocabulary, a finite commutative semigroup $S$ and  some $k \in \set{1,2,\ldots}$ such that every pair  $(\structa, \sim) \in \beta$  satisfies all the following conditions:
        \begin{enumerate}
            \item \emph{Completeness.} If  a subset of $\structa$  cuts no equivalence classes, then it satisfies $\varphi$;
            \item \emph{Approximate soundness.}   If a subset of $\structa$ cuts at least $k$ equivalence classes, then it does not satisfy $\varphi$;

            \item \emph{Recognisability.}  Let $X_1,\ldots,X_n$ be the equivalence class of $\sim$. There is a function\footnote{In the type of the function $\lambda$ we use disjoint union, since the empty subset appears in each of the powersets $\powerset X_1, \ldots, \powerset X_n$, and we want $\lambda$ to be able to assign different values to the different instances of the empty subset.}
            \begin{align*}
                \lambda: \powerset X_1 \uplus \cdots \uplus \powerset X_n  \to S
            \end{align*}
            with the following property. Let $Y \subseteq \structa$, and let $s_1,\ldots,s_n$ be the semigroup elements where $s_i$ is obtained by applying $\lambda$ to the set $Y \cap X_i$. Then  the product $
            s_1 \cdots s_n$ in the semigroup uniquely determines if the set $Y$ satisfies $\varphi$. Observe that since the semigroup is commutative, the order of arguments in the product is not important.
            \item The value of $\lambda$ determines if a subset is full or empty.
        \end{enumerate}
    \end{lemma}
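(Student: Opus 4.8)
The plan is to build the formula $\varphi$ out of \emph{cut-rank itself}, transferred from the output tree to the input structure using rank-invariance, and to obtain the recognisability condition separately, directly inside the input, from the compositionality of types over the children of an unordered node. The key structural point is that the transfer of rank between $\structa$ and the tree is only asymptotic, so it can supply completeness and approximate soundness, but recognisability must be argued exactly, without leaving $\structa|X$.

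First I would fix a pair $(\structa|X,\sim)\in\beta$, where $X$ is an unordered node of some output tree $\structt$, and write $X_1,\dots,X_n$ for its children, which are exactly the classes of $\sim$. Since $X$ is a subtree it has bounded cut-rank in $\structt$, hence in $\structa$ by rank-invariance, so $\structa|X$ ranges over a fixed finite vocabulary and its cut-rank is the number of distinct rows of a type matrix with boundedly many possible values (Lemma~\ref{lem:rank_in_substructure} lets me compute ranks equivalently in $\structa$, $\structa|X$, $\structt$, or $\structt|X$). The children $X_1,\dots,X_n$ are sibling subtrees, so any union of some of them is a sub-forest and thus has cut-rank at most a uniform constant $c_0$ in $\structt|X$, and hence in $\structa|X$, by Lemma~\ref{lem:ranks_in_trees} and rank-invariance. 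Conversely I would establish the geometric fact that cutting classes forces rank in the tree: if $Y$ cuts classes $X_{i_1},\dots,X_{i_p}$, then choosing $x_i\in Y\cap X_i$ and $z_i\in X_i\setminus Y$ for each cut class, the ternary tree relation distinguishes the rows indexed by $x_i$ and $x_j$, since the column $(z_i,z_j)$ witnesses that $z_j$ lies in the least subtree of $\set{x_j,z_i}$ but not of $\set{x_i,z_i}$; hence $\structa|X$ has cut-rank at least $p$.

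With these two facts in hand I would simply set
\[
\varphi(Y)\ :=\ \text{``the cut-rank of $Y$ in $\structa|X$ is at most $c_0$''}.
\]
This is a first-order (hence \cmso) formula over the input vocabulary: cut-rank equals the number of distinct rows of the type matrix, and ``there do not exist $c_0+1$ pairwise distinguishable rows'' is expressed by existentially guessing $c_0+1$ candidate row tuples together with, for each pair, a distinguishing column, then negating. Completeness is immediate, as a set cutting no class is a union of classes and so has rank at most $c_0$; and approximate soundness holds for $k$ large enough that the rank-invariance bound turns ``$Y$ cuts at least $k$ classes'' into ``the rank of $Y$ in $\structa|X$ exceeds $c_0$'', using the geometric fact above.

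The remaining and most delicate task is recognisability, where I stay inside $\structa|X$ and use that cut-rank is computed \emph{exactly}, not asymptotically, from per-child data. Unions of subfamilies of $\set{X_1,\dots,X_n}$ are sub-forests of bounded rank, so Lemma~\ref{lem:transition-function-on-types} applies to the partition of $X$ into its children, yielding a finite colour set and a reconstruction function $\gamma$ for the type of any tuple spread over its children. Because $X$ is \emph{unordered}, $\gamma$ depends only on the multiset of child-colours, and this is exactly what makes a \emph{commutative} semigroup available. I would let $\lambda(Y_i)$ record the bounded summary of how tuples inside $Y_i$ (as row-parts) and inside $X_i\setminus Y_i$ (as column-parts) colour under the functions of Lemma~\ref{lem:transition-function-on-types}, tagged with a flag marking whether $Y_i$ is empty or full so as to secure condition~(4); the semigroup operation merges these summaries and accumulates the number of realisable row-behaviours, capped at $c_0+1$, with commutativity inherited from the unorderedness of $X$. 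The product then determines the number of distinct rows up to the cap, hence whether $\varphi(Y)$ holds. I expect this final bookkeeping to be the main obstacle: verifying that ``number of distinct rows, capped at $c_0+1$'' genuinely is computed by an associative and commutative accumulation over the children, given that a single $m$-tuple may straddle up to $m$ of them, so that the naive ``one colour per child'' picture must be replaced by accumulation of achievable colour-multisets.
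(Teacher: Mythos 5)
Your construction of $\varphi$, and the proofs of completeness and approximate soundness, coincide with the paper's: both define $\varphi(Y)$ as ``the cut-rank of $Y$ in $\structa|X$ is below the threshold obtained from rank-invariance applied to sub-forests,'' both use the observation that a union of children is a sub-forest (hence of bounded rank in the tree, hence in $\structa|X$), and both use the converse transfer plus the fact that a set cutting many children has large rank in the tree. That half is fine.

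The recognisability condition is where you diverge, and where the gap lies. You propose to stay inside $\structa|X$ and assemble the capped row-count of the type matrix by a commutative accumulation of per-child summaries built from Lemma~\ref{lem:transition-function-on-types}. You correctly identify the obstacle yourself and then do not resolve it: a row is an $m$-tuple that may straddle several children, its behaviour against a column $\bar b$ depends on the \emph{joint} type of $\bar a|_{X_i}$ and $\bar b|_{X_i}$ inside each touched child (not on the colour of $\bar a|_{X_i}$ alone), and it is not established that the set of achievable row behaviours, abstracted to a bounded object, is computed by an \emph{associative and commutative} product in which the factor for child $i$ depends only on $Y\cap X_i$. As written, ``accumulate the number of realisable row-behaviours, capped at $c_0+1$'' is a wish, not a semigroup; making it one is precisely the content of the condition being proved, so the argument is circular at its crux. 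The paper sidesteps this entirely: since $\alpha$ is rank-invariant, its converse is rank-decreasing, inputs trees, and is therefore sub-definable by Theorem~\ref{thm:from-trees}; pulling $\varphi$ back along this converse yields a \cmso formula $\psi(Y,Z_1,\ldots,Z_n)$ evaluated \emph{on the tree}, and then standard compositionality of \cmso over the disjoint union of the children of an unordered node hands you the finite commutative semigroup for free (the semigroup of $d$-types under disjoint union), with $\lambda$ mapping $Y\cap X_i$ to its $d$-type. If you want to salvage your route, you would need to prove a genuine Feferman--Vaught-style commutative composition theorem for the capped rank of the type matrix over the partition into children; the paper's detour through the tree side is exactly what replaces that missing lemma, and your proposal never invokes the sub-definability of the converse transduction, which is the one ingredient that makes recognisability come out cleanly.
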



    \begin{proof}
        Consider a pair  $(\structa, \structt) \in \alpha$, and an unordered node $X$ in the tree $\structt$. If a set  $Y \subseteq X$ does not cut any children of $X$,  then its rank  in $\structt$ is bounded. This, together with the assumption that   $\alpha$ is rank-invariant,  yields some constant $\ell \in \set{1,2,\ldots}$, such that 
        \begin{enumerate}
            \item[(a)] for every $(\structa,\structt) \in \alpha$, every unordered node $X$ of $\structt$ and every  $Y \subseteq X$, if $Y$ does not cut any children of $X$, then the rank of $Y$ in $\structa |X$ is  at most $\ell$ (Lemma~\ref{lem:rank_in_substructure}).
        \end{enumerate}
        Using again the assumption that the transduction $\alpha$  is rank-invariant, we know that  sets $Y \subseteq X$ which have rank  at most $\ell$ in $\structa|X$ will have bounded rank in $\structt|X$. 
        A set of bounded rank in a partially ordered tree can cut a bounded number of children of any given node, and therefore there is some constant $k$ such that 
        \begin{enumerate}
            \item[(b)] for every $(\structa,\structt) \in \alpha$, every unordered node $X$ of $\structt$ and every  $Y \subseteq X$,  if $Y$ has rank at most $\ell$ in $\structa|X$ then it cuts at most  $ k$  children of $X$.
        \end{enumerate}

        Define $\varphi(Y)$ to be the formula which says that the set $Y$ has rank at most $\ell$ in the structure $\structa|X$.  This is indeed a formula of \mso, even without counting, since an \mso formula can count distinct rows in a type matrix up to a given threshold. 
        Conditions (a) and (b) give the completeness and approximate soundness  conditions from the lemma. 

        It remains to prove the recognisability condition in the  lemma. Since the transduction $\alpha$ is rank-invariant, it follows that its converse is rank-decreasing. Since the converse inputs trees, it is sub-definable by Theorem~\ref{thm:from-trees}. By pulling back the formula $\varphi$ along the converse, there is a \cmso formula 
        \begin{align*}
            \psi(Y,Z_1,\ldots,Z_n)
        \end{align*}
        over the vocabulary of partially ordered trees, such that for every pair $(\structa,\structt) \in \alpha$ there is some choice of parameters $Z_1,\ldots,Z_n \subseteq \structt$ such that 
        \begin{align*}
        \structa \models \varphi(Y)
        \quad \text{iff} \quad 
        \structt \models \psi(Y,Z_1,\ldots,Z_n)
        \end{align*}
        holds for all subsets $Y \subseteq \structa$.
        Let $d$ be the quantifier depth of the formula $\psi$. 
        By compositionality of \mso on trees, we know that if $X$ is a node of the tree $\structt$ and $Y \subseteq X$, then whether the right-hand side of the above equivalence holds depends only on the values
        \begin{align}\label{eq:colour-homogeneity}
            \text{$d$-type of $Y$ in  $(\structt, Z_1,\ldots,Z_n) | X_i$},         \end{align}
            where $X_i$ ranges over children of $X$. 
        Furthermore, the dependence is commutative, i.e.~it is mediated  by a homomorphism into  a finite commutative semigroup, namely the semigroup of $d$-types with respect to disjoint unions of structures. This yields the commutative semigroup $S$ from the homogeneity condition in the lemma, with the colouring $\lambda$ mapping a subset $Y \subseteq X_i$ to the $d$-type from~\eqref{eq:colour-homogeneity}.
    \end{proof}

\subsubsection{Sub-definability using unordered approximation}
\label{sec:homogeneous-unordered-approximation} We now show that the unordered approximation condition, as defined in Lemma~\ref{lem:technical-condition-unordered}, implies sub-definability. In the proof, we use a stronger version of this condition, defined as follows. We say that a transduction has the \emph{homogeneous unordered approximation property} if it satisfies the unordered approximation property (see Lemma~\ref{lem:technical-condition-unordered}), together with the following extra condition: 
\begin{enumerate}
    \item[(4)] Homogeneity. All the images are equal:
    \begin{align*}
    \lambda(\powerset X_1) = \cdots = \lambda(\powerset X_n).
    \end{align*}
\end{enumerate}
The following lemma shows that this stronger property is sufficient for sub-definability.

    \begin{lemma}\label{lem:use-unordered-approximation} If a transduction has the homogeneous unordered approximation property, then it  is sub-definable.
\end{lemma}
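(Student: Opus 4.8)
The plan is to exhibit a definable transduction $\gamma$ containing the given one; since sub-definability only asks for containment, for each pair $(\structa,\sim)$ I may tailor the guessed colouring freely, and I only have to design one fixed output formula $\theta(x,y)$ that recovers $\sim$ on the correctly coloured input. Write $X_1,\dots,X_n$ for the classes of $\sim$, and let $S$, $F\subseteq S$, $k$, the value set $T$ and the maps $\lambda$ be the data supplied by the homogeneous unordered approximation property, so that $\structa\models\varphi(Y)$ holds iff the commutative product $\prod_i\lambda(Y\cap X_i)$ lies in $F$, this test being exact whenever $Y$ cuts no class (completeness) or at least $k$ classes (approximate soundness).

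The first move is to reduce everything to the exact \cmso-definability of the family of \emph{$\sim$-closed} sets, that is, unions of classes, which are precisely the subsets cutting no class. If $\mathrm{Closed}(Y)$ is a \cmso formula holding exactly for these sets, then
\[ \theta(x,y)\ :=\ \neg\,\exists Y\ \bigl(\mathrm{Closed}(Y)\wedge x\in Y\wedge y\notin Y\bigr) \]
defines $\sim$ exactly, because a union of classes separates $x$ and $y$ iff they lie in different classes. Thus the whole content is concentrated in producing $\mathrm{Closed}$ as a \cmso formula over the input vocabulary enriched with the guessed colours.

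For the colouring I would mark each element $a$ with the triple $(\lambda(\set{a}),\lambda(\emptyset),\lambda(X_{i(a)}))\in S^3$, where $i(a)$ is the class of $a$; by condition~(4) these colours already record, locally, whether a guessed set meets a class in the empty or the full set, and by homogeneity the realisable value set $T$ is the same for every class. The formula $\varphi$ itself is already an approximate closedness test: it accepts all $\sim$-closed sets and rejects all sets cutting at least $k$ classes, the only ambiguity being sets cutting between $1$ and $k-1$ classes. To sharpen it I would additionally guess a bounded family of reference $\sim$-closed sets $W_1,\dots,W_p$ (encoded by unary colours) and probe the semigroup class of $\prod_i\lambda(Y\cap X_i)$ by evaluating $\varphi$ on the combinations $Y\cup W_r$: since $\pi(W_r)$ is a product of full-values, these probes compute the image of $\pi(Y)$ under right multiplication by a controllable set of semigroup elements, hence its class under the syntactic congruence of $F$, with homogeneity guaranteeing that the same probing elements are available uniformly across all classes.

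The hard part — and the reason this is the technically heaviest step — is to show that this bounded, finite-semigroup information actually pins down, for each element $a$ and guessed $Y$, whether $Y$ cuts the class of $a$, so that $\mathrm{Closed}(Y)$ can be written as ``no element's class is cut''. The difficulty is the $<k$ slack at an isolated cut: one cannot manufacture new classes to push the number of cuts past $k$, so detection of a single cut must be carried out inside the finite commutative semigroup $S$ by an idempotent/Ramsey analysis of unboundedly long products, using condition~(4) to freeze the untouched classes at full or empty and homogeneity to neutralise the within-class variation of $\lambda$ (distinct elements of one class may carry distinct singleton values). Once $\mathrm{Closed}$ is secured, the formula $\theta$ above finishes the construction, and precomposing with the transduction that guesses the colouring and the reference sets realises the desired $\gamma$ containing $\beta$.
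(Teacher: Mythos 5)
Your overall strategy differs from the paper's, and it has a genuine gap at exactly the point you flag as ``the hard part''. You reduce everything to defining, in \cmso{} over the coloured input, the \emph{exact} predicate $\mathrm{Closed}(Y)$ (``$Y$ cuts no class''). But the only access the hypothesis gives you to $\lambda$ is indirect, through satisfaction of $\varphi$, and $\varphi$ is by design blind in the regime of $1$ to $k-1$ cut classes: recognisability only says that the commutative product $\prod_i \lambda(Y\cap X_i)$ determines whether $\varphi$ holds, not that it determines how many classes are cut. Your probing with closed reference sets $W_r$ only ever multiplies the product by (idempotent) full values, so all you learn about a set $Y$ cutting one class is the finite list of bits $[\,c\cdot u\in F\,]$ for $u$ ranging over products of full/empty values; nothing in the homogeneous unordered approximation property prevents all of these bits from agreeing with those of a genuinely closed set (e.g.\ a cut value $c$ with $c\cdot fz\in F$ is perfectly consistent with approximate soundness for $k=2$, since only $c^2$ need fall outside $F$). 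Note also that $\lambda(Y\cap X_i)$ is not determined by the singleton values $\lambda(\set{a})$ you put in the colouring, since $\lambda$ is not assumed to be a union-homomorphism. So the ``idempotent/Ramsey analysis'' you defer is not a technical chore but the crux, and as stated the target ($\mathrm{Closed}$ exactly definable) appears unattainable from the hypotheses alone.

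The paper sidesteps this entirely: it never tries to detect whether an arbitrary set cuts a class. Instead it guesses a \emph{seed} $Y_0$ ($\varphi$-satisfying, full on some class, empty on some class) that cuts a \emph{maximal} number of classes, declares the (boundedly many) classes cut by $Y_0$, plus one full and one empty class, to be \emph{special}, and observes that by maximality any seed agreeing with $Y_0$ on the special classes cuts \emph{no} non-special class. Two elements outside the special classes are then in different classes iff some such seed separates them; the separating seed is manufactured using commutativity, homogeneity and idempotency of the full/empty values so that its product equals $\pi(Y_0)\in F$. This replaces your global exact-closedness predicate by exactness \emph{relative to a guessed maximal witness}, which is where the slack below $k$ is absorbed. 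If you want to salvage your route, you would need to import that maximality trick; the separation formula $\theta$ itself is fine once the family of separating sets is restricted in this way.
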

    \begin{proof}
    Let $\beta$ be a transduction with the homogeneous unordered approximation property. Let the semigroup $S$, the threshold $k$ and the formula $\varphi$ be as in this property. 
    
    Consider a pair $(\structa,\sim) \in \beta$, let $X_1,\ldots,X_n$ be the equivalence classes of $\sim$, and let 
    \begin{align*}
    \lambda : \powerset X_1 \uplus \cdots \uplus \powerset X_n \to S
    \end{align*}
    be the function from the unordered approximation property. 
 For each class $X_i$, we will be especially interested in two values produced by the function $\lambda$:  the \emph{empty value}, which is produced for the empty set $\emptyset \subseteq X_i$ and the \emph{full value}, which is produced for the full subset $X_i \subseteq X_i$.

    \begin{claim}\label{claim:same-full-empty-values}
         Without loss of generality, we can assume that all classes have the same empty value, all classes have the same full value, and these two values are idempotent.
    \end{claim}
    \begin{proof}
        Without loss of generality, we can assume that the colouring $\lambda$ in the recognisability condition also keeps track of whether a set is full or empty, which is an idempotent property. (In fact, the proof of Lemma~\ref{lem:technical-condition-unordered} already ensures this, since $\lambda$ returns the \mso type of sufficiently high quantifier depth, and such a type will tell us if a set is empty or full, as long as the quantifier depth is at least one.)
        From the homogeneity condition, we know that $\lambda$ has the same image on the powerset of every class, and hence all full values (respectively, all empty values) must be the same. 
    \end{proof}
    
    We say that a set $Y \subseteq \structa$ is a \emph{seed} if it satisfies all the following conditions:
    \begin{enumerate}
        \item \label{it:maximal-commutative-phi} $Y$ satisfies the formula $\varphi$;
        \item \label{it:maximal-commutative-full} $Y$ is full in at  least one equivalence classes;
        \item \label{it:maximal-commutative-empty}  $Y$ is empty in  at least one equivalence classes.
    \end{enumerate}

    Note that as long as there are at least two classes, there is a seed: simply take any set satisfying (2) and (3) that does not cut any class, then it satisfies (1) by the completeness property.

    Choose a seed $Y_0$ that cuts a maximal number of equivalence classes. Note that $Y_0$ cuts at most $k$ classes by approximate soundness.
    Fix an arbitrary class for which $Y_0$ is full, and an arbitrary class for which $Y_0$ is empty, then define a \emph{special equivalence} class to be either these two fixed classes, or any class that is cut by $Y_0$, as in the following figure.
    \mypic{24}

 The main observation is in the following claim, which shows that the equivalence classes can be determined by looking at sets that are similar to the seed $Y_0$. 
    
    \begin{claim}\label{claim:seed-separator} Assume that there are at least $2$ equivalence classes.
        Consider two elements of the universe that are not in any special class.
        Then they are in different classes if and only if they can be separated by some seed that agrees  with $Y_0$ on the special classes. 
    \end{claim}
    \begin{proof}
        By maximality, if a seed agrees with $Y_0$ on the special classes, then it cannot cut any class that is not special. Therefore, such a seed cannot separate two elements in the same  class. 
        
        We now prove the converse implication. As explained above,  if a seed agrees with $Y_0$ on the special classes, then it is full or empty on every non-special class. Since the values of the full and empty set are idempotent, and the semigroup is commutative, the set which agrees with $Y_0$ on the special classes, is full on the class containing one of the elements, and empty elsewhere, is a seed separating the two elements.
    \end{proof}

    We are now ready to prove that the transduction is sub-definable. The case where there is a single class can be explicitly guessed by the transduction.
    Otherwise, the transduction can guess the seed and the special equivalence classes, and then use the condition in the above claim to check which elements are in the same equivalence class.
\end{proof}
   
\subsubsection{Ensuring homogeneity}
\label{sec:unordered-ensuring-homo}
In Lemma~\ref{lem:technical-condition-unordered}, we have shown that the unordered local transduction has the unordered approximation property, and in Lemma~\ref{lem:use-unordered-approximation} we have shown that the stronger homogeneous version of this property is sufficient for sub-definability. We finish the proof by bridging the gap between these two properties. The idea is that every transduction with the unordered approximation property can be refined to ensure the homogeneous version. 

This refinement will be relatively straightforward in the present unordered case, but it will be more involved (using factorisation forests) in the ordered case. To cover both cases, we present below a general definition of what we mean by refining a transduction.

    \begin{definition}[Refinement on trees] A partial tree $\structs$ is called \emph{finer} than a partial tree $\structt$ if 
    \begin{enumerate}
        \item the two trees have the same universe, i.e.~leaves;
        \item the two trees have the same document preorder;
        \item every node of $\structs$ is also a node of  $\structt$.
    \end{enumerate}
  A transduction $\beta$ is \emph{refined by} a transduction $\gamma$ if both transductions output partially ordered trees, and for every input structure $\structa$, every output tree in $\beta(\structa)$ is refined by some output tree in $\gamma(\structa)$.
    \end{definition}

    We have now all the definitions in place that are needed to complete the proof of Lemma~\ref{lem:unordered-case}.  

    \begin{proof}[Proof of Lemma~\ref{lem:unordered-case}] 
        Let $\beta$ be a transduction as in the assumptions of the lemma, i.e.~this is the local unordered transduction of some rank-invariant transduction that outputs partially ordered trees. We want to show that $\beta$ is sub-definable.
        
        By Lemma~\ref{lem:technical-condition-unordered}, $\beta$ has the unordered approximation property. 
        We can view of $\beta$ as a transduction that outputs trees, since an  equivalence relation can be seen as a special case of a partially ordered tree, which has height two, as in the following picture:
        \mypic{22}
        It is this view that we use when talking about a refinement of an equivalence relation, or a transduction that outputs equivalence relations.

        \begin{claim}\label{claim:refine-unordered}
            Let $\beta$ be a transduction that outputs equivalence relations, and has the unordered approximation property.
            There is a transduction $\gamma$ that refines $\beta$, has outputs of bounded height, and such that the local unordered transduction of $\gamma$ has the homogeneous unordered approximation property. 
        \end{claim}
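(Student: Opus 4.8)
The plan is to group the equivalence classes according to the finite amount of information that $\lambda$ records about them, and to insert an intermediate layer of the tree collecting classes that carry the same information; inside each such group $\lambda$ behaves uniformly, and this uniformity is exactly homogeneity. Concretely, I would first invoke the unordered approximation property of $\beta$ (Lemma~\ref{lem:technical-condition-unordered}) to fix a formula $\varphi$, a finite commutative semigroup $S$ and a threshold $k$, together with, for each pair $(\structa,\sim)\in\beta$ with classes $X_1,\dots,X_n$, a function $\lambda$. As in Claim~\ref{claim:same-full-empty-values} I may assume that $\lambda$ also records whether a subset is full or empty, so I attach to each class $X_i$ its \emph{type} $\tau_i$, namely the triple consisting of the image $\lambda(\powerset X_i)\subseteq S$, the full value and the empty value. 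Since $S$ is fixed and finite, there are only boundedly many types, say at most $R$.

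I would then define $\gamma$ to output, for each pair $(\structa,\sim)\in\beta$, the tree of height three whose root has one child $G_\tau=\bigcup_{\tau_i=\tau}X_i$ for every type $\tau$ that occurs, whose node $G_\tau$ has as children exactly the classes of type $\tau$, and whose class nodes have the leaves as children. All nodes are unordered, so this tree has the same (trivial) document preorder as $\sim$, the same leaves, and its node set contains every node of $\sim$; hence $\gamma$ refines $\beta$ and has bounded height. Contracting the group layer recovers $\sim$ by a definable operation, so it suffices to prove $\gamma$ sub-definable, which by Lemma~\ref{lem:sufficient-condition} reduces to the two local transductions; the ordered one is empty, so only the local unordered transduction is at stake.

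The key verification is homogeneity at the group nodes. A class node carries the trivial partition, and at a group node $G_\tau$ all children are classes of the \emph{same} type $\tau$, hence share the same image, full value and empty value, which is precisely homogeneity. To see that the unordered approximation conditions survive restriction to $G_\tau$, I would reuse the same $\varphi$, $S$ and $k$: completeness and approximate soundness transfer because ranks of subsets of $G_\tau$ agree in $\structa|G_\tau$ and in $\structa$ (Lemma~\ref{lem:rank_in_substructure}), while recognisability transfers by keeping the value $\lambda(W)$ for $W\subseteq X_i\subseteq G_\tau$ — the classes outside $G_\tau$ contribute only a fixed product of empty values, which does not change whether $\varphi$ holds.

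The main obstacle is the root. Its children are the boundedly many groups $G_\tau$, which have different types and hence different full and empty values, so homogeneity cannot be obtained there through the same grouping, and I expect this bounded top-level node to be the crux. My plan is to treat it by boundedness rather than by the semigroup machinery: the root partition has at most $R$ classes, so every subset cuts at most $R$ of them, the approximate-soundness threshold at the root may be taken above $R$ (making it vacuous), and completeness still holds since a subset cutting no group is a union of $\beta$-classes and therefore satisfies $\varphi$ by completeness for $\beta$. To reconcile this with the seed argument of Lemma~\ref{lem:use-unordered-approximation}, I would declare all $\le R$ group-blocks to be special, so that the actual separation of elements only ever takes place inside the homogeneous group nodes, and the bounded top-level partition is simply guessed by the transduction. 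Arranging this uniformisation so that it coheres with a single global choice of $(\varphi,S,k)$ for the whole local unordered transduction is exactly where the care is required.
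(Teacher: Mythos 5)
Your proposal matches the paper's proof in its essential approach: the paper likewise groups the equivalence classes according to the image of $\lambda$ into a tree of height three whose root has boundedly many children, and observes that within each group the children satisfy homogeneity. Your additional care about the root node — where homogeneity genuinely fails but the bounded number of groups makes approximate soundness vacuous and lets the top-level partition simply be guessed — is a detail the paper's one-line proof leaves implicit, and your treatment of it is sound.
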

        \begin{proof}
            Group the equivalence classes  into a bounded number of groups, with two classes being together in a group when they have the same image of the function $\lambda$. This grouping can be seen as tree of height three, in which the root has a bounded number of children, since the semigroup is fixed. For every group, its children satisfy the homogeneity condition. 
        \end{proof}

        The proof of the above claim is very straightforward. The reason why we phrase the claim in terms of refinements is that we want  to underline the similarities with the ordered case in the next section, where the analogue of Claim~\ref{claim:refine-unordered} will be more advanced.

        Consider the transduction $\gamma$ from Claim~\ref{claim:refine-unordered}. This transduction has outputs of bounded height. Its local unordered transduction is sub-definable by the claim, and its local ordered transduction is sub-definable for vacuous reasons: all output trees of $\gamma$ use only unordered nodes, since the same is true for $\beta$. Therefore, $\gamma$ is sub-definable thanks to Lemma~\ref{lem:sufficient-condition}. Therefore, $\beta$ is also sub-definable, since it can be obtained from $\gamma$ by forgetting the extra levels with the groups of equivalence classes.
    \end{proof}


\subsection{The ordered local transduction}
\label{sec:root-is-ordered}

In this section, we prove the ordered analogue of Lemma~\ref{lem:unordered-case}, stated below.

\begin{lemma}\label{lem:ordered-case}
    For every rank-invariant transduction that outputs partially ordered trees, its local ordered transduction is sub-definable.
\end{lemma}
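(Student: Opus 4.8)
The plan is to mirror, step by step, the three-part argument used for the unordered case in Section~\ref{sec:root-is-unordered}, adapting each part to the presence of a linear order on children. Write $\alpha$ for the given rank-invariant transduction and $\beta$ for its local ordered transduction, so that a typical pair in $\beta$ has the form $(\structa|X,\le)$ with $\le$ the child preorder of an ordered node $X$. First I would establish an \emph{ordered approximation} property, the analogue of Lemma~\ref{lem:technical-condition-unordered}; then I would show that a \emph{homogeneous} version of this property implies sub-definability, the analogue of Lemma~\ref{lem:use-unordered-approximation}; and finally I would bridge the gap by a refinement, the analogue of Claim~\ref{claim:refine-unordered}. The essential new phenomenon, flagged already in the discussion of refinements, is that the semigroup produced by the recognisability condition is now \emph{non-commutative}, because the children of an ordered node are combined by ordered concatenation rather than by disjoint union; consequently the trivial grouping of Claim~\ref{claim:refine-unordered} must be replaced by a factorisation forest in the sense of Simon.

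For the first step I would keep the formula $\varphi(Y)$ asserting that $Y$ has rank at most $\ell$ in $\structa|X$, where $\ell$ comes from rank-invariance exactly as before, via Lemma~\ref{lem:rank_in_substructure}. The combinatorial meaning of $\varphi$ is now governed by Example~\ref{ex:linear-orders-rank}: in an ordered node the rank of a subset is asymptotically the number of \emph{intervals} it forms in the child preorder, plus bounded information inside the boundedly many cut children. Thus $\varphi$ is, up to the usual completeness/approximate-soundness slack, a test for ``$Y$ is a union of few intervals of children''. Pulling $\varphi$ back along the converse transduction (sub-definable by Theorem~\ref{thm:from-trees}) and applying compositionality of \cmso on trees yields a recognisability condition with a finite semigroup $S$ of bounded-depth types; the only change from Lemma~\ref{lem:technical-condition-unordered} is that the product $s_1\cdots s_n$ is read in the child order and $S$ need not be commutative.

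For the second step I would recover the linear preorder, rather than an equivalence relation, from the interval tests supplied by $\varphi$. Since $\varphi$ can approximately decide whether a set is a single interval, it can define betweenness of children, from which the child preorder is determined up to reversal; the single remaining bit of orientation can be guessed with bounded colours. The seed mechanism of Claim~\ref{claim:seed-separator} reappears here to tame the approximation slack: using idempotency of the full and empty values together with a seed $Y_0$ cutting a maximal number of children, I would make the betweenness test robust on the children not cut by $Y_0$, so that the recovered preorder is exact. Homogeneity is what makes the full and empty values uniform across siblings, exactly as in the passage from Lemma~\ref{lem:technical-condition-unordered} to Lemma~\ref{lem:use-unordered-approximation}.

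The hard part is the third step, and this is where I expect the main obstacle. In the unordered case homogeneity was achieved by grouping classes with equal $\lambda$-image (Claim~\ref{claim:refine-unordered}), harmless because the target was commutative; here the grouping must respect the order. I would instead apply Simon's factorisation forest theorem to the sequence $s_1,\dots,s_m$ of $\lambda$-values of the children, obtaining a factorisation of height bounded by a function of $|S|$ whose internal nodes are either binary or idempotent, the latter having all children equal to a fixed idempotent of $S$. Reading this factorisation as a partially ordered tree refining $X$ produces a refinement $\gamma$ of $\beta$ of bounded height whose internal nodes are homogeneous in the sense required by the second step, so that Lemma~\ref{lem:sufficient-condition} applies once the local transductions at binary and at idempotent nodes are handled (the former trivially, the latter by the second step). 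The genuinely delicate points, which separate the ordered case from the unordered one, are that this factorisation must be realised over the \emph{input} structure $\structa|X$ rather than over the tree $\structt$ where it is visible for free, and that it must be describable with the boundedly many colours of a definable transduction, despite a linear preorder carrying $\Theta(m\log m)$ bits of information as in Example~\ref{ex:grid-to-tree-transduction}; both should follow from the recognisability condition together with the fixed size of $S$, but verifying them is the technically heaviest part of the proof.
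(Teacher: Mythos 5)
Your three-part architecture (ordered approximation, homogeneous version implies sub-definability, factorisation forests to ensure homogeneity) is exactly the paper's, and your first and third steps match Lemma~\ref{lem:technical-condition-ordered} and Claim~\ref{claim:refine-ordered} closely --- including the correct identification of Simon's theorem as the replacement for the trivial grouping of the unordered case. The gap is in your second step, which is where the real content of the ordered case lives. You propose to recover the child preorder from ``betweenness'' defined by interval tests, fixing the remaining reversal by a guessed bit. But $\varphi$ is only approximately sound: any set with fewer than $k$ blocks may satisfy it, so no test built from $\varphi$ can distinguish the relative order of two children that are fewer than $k$ classes apart, and betweenness for nearby triples is simply not definable this way. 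Saying that a maximal seed makes the test ``robust'' and the recovered preorder ``exact'' does not address this; the slack is unavoidable at short range.

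The paper's resolution (Lemma~\ref{lem:use-ordered-approximation}) is to give up on ordering nearby elements directly. It builds a maximal seed with two long idempotent blocks and a movable ``island'' of cut blocks, which yields the order only for pairs separated by at least $k$ classes (Claim~\ref{claim:order-on-elements-with-good-colour-distance}). To certify separation without already knowing the order, it uses modulo counting: elements whose class indices differ by $d$ modulo $2d$ are guaranteed to be at distance at least $d$ (Claim~\ref{claim:colour-distance}), so a sub-definable transduction can produce the $d$-fold successor relation for $d \ge k$; from the $d$-fold and $(d{+}1)$-fold successors one recovers the ordinary successor and hence the order by transitivity. This is the step your sketch is missing, and it is also where \cmso rather than plain \mso is genuinely used --- your proposal never invokes modulo counting in the reconstruction, which is a sign that the proposed mechanism is not the right one.
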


The proof follows the same structure as in  Section~\ref{sec:root-is-unordered}.

\subsubsection{Ordered approximation}
\label{sec:ordered-approximation}
We begin with the first step, in which we show that the local ordered transduction has a certain approximation property. In the unordered case, where the outputs were equivalence relations, the approximation property counted the number of equivalence classes that were cut. In the ordered case, we will be counting blocks, as explained below. 

Consider a linear preorder $\le$. In the following, a \emph{class} of this preorder is defined to be an equivalence class of the accompanying equivalence relation.  
An \emph{interval} in the preorder $\le$ is defined  to be any subset which has the property that if $x$ and $y$ are in the interval, then the same is true for any element $z$ with $x \le z \le y$. In particular, an interval cannot  cut any class. For a subset $Y$ in a linear preorder, define a \emph{block} to be an interval that has one of the following three kinds: 
\begin{enumerate}
    \item a maximal inclusion-wise interval contained in $Y$;
    \item a maximal inclusion-wise interval disjoint with $Y$;
    \item a class that is cut by $Y$.
\end{enumerate}
We will use the name ``full block'', ``empty block'' and ``cut block'' for the three kinds, respectively. Here is a picture:
    \mypic{18}
Observe that the blocks partition the equivalence classes induced by the preorder.

The following lemma, whose straightforward proof is left to the reader, shows that the number of blocks is approximately equal to the rank of a subset.
\begin{lemma}\label{lem:cut-for-orders}
    In the class of linear preorders, the rank of a subset  is asymptotically equivalent to the number of blocks.
\end{lemma}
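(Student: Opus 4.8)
The plan is to show that the cut-rank of $Y$ is asymptotically equivalent to $f+c$, where $f$ is the number of full blocks and $c$ the number of cut blocks (equivalently, the number of cut classes, since each cut block is a single class), and then to observe that $f+c$ is itself asymptotically equivalent to the total number of blocks $b=f+e+c$, with $e$ the number of empty blocks. The second step is purely combinatorial: since two maximal empty intervals cannot be adjacent, any two consecutive empty blocks are separated by at least one non-empty block, so $f+c\ge e-1$; together with the trivial $f+c\le b$ this yields $(b-1)/2\le f+c\le b$, which is the desired equivalence.

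For the engine of the first step I would analyse single elements. A linear preorder has a single binary relation, so the maximal arity is $m=2$ and the type matrix is indexed by pairs; but the type of a $4$-tuple $(x_1,x_2,y_1,y_2)$ is determined by the type of $(x_1,x_2)$ together with the relation of each $x_i$ to each $y_j$, so the pair-indexed row content is determined by the type of $(x_1,x_2)$ and the two single-element rows $y\mapsto\mathrm{type}(x_i,y)$. Hence the number of distinct pair-rows lies between the number of distinct single-element rows (using the diagonal $x_1=x_2$) and $3$ times its square, and it suffices to count single-element rows. For $x\in Y$ in class $C$, this row records only the position of $C$ relative to the class of the external element $y$, and two such rows (for $x\in C$, $x'\in C'$ with $C<C'$) differ precisely when some class in the closed interval $[C,C']$ contains an element outside $Y$. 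Reading this off gives exactly $f+c$ distinct rows: one per full block, one per cut class (the cut class itself witnesses an external element at an endpoint), and any two of these are separated by an external class.

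Concretely, for the \textbf{upper bound} I would decompose $Y$ as a union of $f+c$ sets of bounded rank. Each full block is an interval, and an interval in a linear preorder has bounded rank: by convexity it contains whole classes only (it cuts none), and no external class lies strictly inside it, so all its elements share a single row. Likewise $Y\cap C$ for a cut class $C$ has bounded rank, since the elements of $C$ are interchangeable with respect to $\le$. The corollary to Lemma~\ref{lem:rank_of_union} then bounds the rank of $Y$ by a function of $f+c\le b$. For the matching \textbf{lower bound} I would pick one element of $Y$ in each full block and each cut class and verify, via the criterion above, that these $f+c$ elements give pairwise distinct rows, so the rank is at least $f+c\ge(b-1)/2$.

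The one point that must be isolated — and what makes the count $f+c$ rather than an overcount — is that a cut class contributes exactly one to the rank however $Y$ splits it, because all $Y$-elements of a single class produce the same row; this, together with Example~\ref{ex:linear-orders-rank} for the order-only behaviour, is the whole content, and the remaining bookkeeping matching $f+c$ to $b$ is why the lemma is safely left to the reader.
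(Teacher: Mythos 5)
Your proof is correct; note that the paper itself gives no argument here (the lemma is explicitly ``left to the reader''), so there is no official proof to compare against. Your reduction from the pair-indexed type matrix to single-element rows is sound for a vocabulary with one binary relation, your criterion for when two single rows differ (some class in the closed interval between the two classes meets the complement of $Y$) is exactly right, and it does yield precisely $f+c$ distinct rows, with the counting $f+c\ge e-1$ closing the gap to the total block count. The one point worth stating even more explicitly is the upper bound's reliance on the corollary to Lemma~\ref{lem:rank_of_union} being \emph{asymptotic} (a union of $n$ bounded-rank sets has rank bounded by a function of $n$, not by $n$ itself), but that is all the lemma claims, so the argument goes through.
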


We are now ready to state the ordered version of the approximation property. The proof of the following lemma is the same as in the unordered version.

\begin{lemma}\label{lem:technical-condition-ordered}  Let $\alpha$ be a rank-invariant transduction that outputs partially ordered trees, and let  $\beta$ be its local ordered transduction. Then $\beta$  satisfies the following condition, which we call \emph{ordered approximation}.
    There is a \cmso formula $\varphi$ over the input vocabulary,   a finite (not necessarily commutative) semigroup $S$ and  some $k \in \set{1,2,\ldots}$ such that every pair  $(\structa, \le) \in \beta$ satisfies all the following conditions:
    \begin{enumerate}
        \item \emph{Completeness.} If a subset of $\structa$ is an interval, then it satisfies $\varphi$;
        \item \emph{Approximate soundness.}   If a subset of  $\structa$ has at least $k$ blocks, then it does not satisfy $\varphi$;

        \item \emph{Recognisability.}  Let $X_1,\ldots,X_n$ be the equivalence classes of $\le$, listed according to the order. There is a function
        \begin{align*}
            \lambda: \powerset X_1 \uplus \cdots \uplus \powerset X_n  \to S
        \end{align*}
        with the following property. Let $Y \subseteq \structa$, and let $s_1,\ldots,s_n$ be the semigroup elements where $s_i$ is obtained by applying $\lambda$ to the set $Y \cap X_i$. Then  the product $
        s_1 \cdots s_n$ in the semigroup uniquely determines if the set $Y$ satisfies $\varphi$.\end{enumerate}
\end{lemma}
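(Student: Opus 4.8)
The plan is to transcribe the proof of Lemma~\ref{lem:technical-condition-unordered} almost verbatim, with two changes dictated by the switch from equivalence relations to linear preorders: the role of ``number of cut equivalence classes'' is now played by ``number of blocks'' (Lemma~\ref{lem:cut-for-orders}), and the commutative semigroup of $d$-types under disjoint union is replaced by a semigroup of $d$-types under \emph{ordered} concatenation, which is exactly why the lemma only promises a (possibly non-commutative) semigroup $S$.

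First I would fix the formula and the two constants. Consider a pair $(\structa,\structt)\in\alpha$ and an ordered node $X$ of $\structt$, with child preorder $\le$ on $X$. If $Y\subseteq X$ is an interval of $\le$, then it is a union of consecutive children, so by Lemma~\ref{lem:cut-for-orders} it has a bounded number of blocks and hence bounded rank in $\structt|X$; since $\alpha$ is rank-invariant and rank is preserved by passing to induced substructures (Lemma~\ref{lem:rank_in_substructure}), there is a constant $\ell$ so that every such $Y$ has rank at most $\ell$ in $\structa|X$. Conversely, applying rank-invariance in the other direction, any $Y\subseteq X$ of rank at most $\ell$ in $\structa|X$ has bounded rank in $\structt|X$, hence by Lemma~\ref{lem:cut-for-orders} at most $k$ blocks, for some constant $k$. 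I then set $\varphi(Y)$ to be the $\cmso$ formula (in fact $\mso$, since distinct rows of a type matrix can be counted up to a threshold) asserting that $Y$ has rank at most $\ell$ in the input structure $\structa|X$. Completeness and approximate soundness are then exactly the two implications just established.

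For recognisability I would argue as in the unordered case. Rank-invariance of $\alpha$ makes its converse rank-decreasing; the converse inputs (partially ordered) trees, so by Theorem~\ref{thm:from-trees} it is sub-definable, and pulling $\varphi$ back along it produces a $\cmso$ formula $\psi(Y,Z_1,\dots,Z_n)$ over the tree vocabulary together with a choice of parameters $\bar Z$ such that $\structa\models\varphi(Y)\iff\structt\models\psi(Y,\bar Z)$ holds for all $Y$. Let $d$ be the quantifier depth of $\psi$. By compositionality of $\mso$ on trees, whether $\psi$ holds for $Y$ at the node $X$ depends only on the $d$-types of $Y\cap X_i$ in $(\structt,\bar Z)|X_i$ as $X_i$ ranges over the children of $X$; since $X$ is ordered, these types must be composed in the left-to-right order of the children, which is precisely the order of the classes $X_1,\dots,X_n$ of $\le$. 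Taking $S$ to be the semigroup of $d$-types under this ordered composition, and $\lambda$ the map sending $Y\cap X_i$ to its $d$-type, the product $s_1\cdots s_n$ determines whether $\psi$, and hence $\varphi$, holds. This gives the recognisability condition.

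The only point where the ordered case genuinely differs from the unordered one is this last step: disjoint union is commutative, but ordered concatenation is not, so the semigroup $S$ can no longer be assumed commutative and one must keep track of the order $X_1,\dots,X_n$ induced by $\le$. There is accordingly no real obstacle here; everything else parallels Lemma~\ref{lem:technical-condition-unordered} line for line, and the substantive difficulties of the ordered case are deferred to the later refinement step, where factorisation forests are needed to repair the loss of commutativity.
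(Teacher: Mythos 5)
Your proof is correct and follows exactly the route the paper intends: the paper's own ``proof'' of Lemma~\ref{lem:technical-condition-ordered} is the single remark that it is the same as the unordered version, and your write-up is precisely that argument transcribed with blocks (via Lemma~\ref{lem:cut-for-orders}) replacing cut classes and ordered concatenation of $d$-types replacing disjoint union, which is why $S$ need not be commutative.
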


\subsubsection{Sub-definability using ordered approximation}
\label{sec:homogeneous-ordered-approximation} 
We now show how the ordered variant of the approximation condition can be used to prove sub-definability. As in the unordered variant, we define a homogeneous strengthening of the approximation property.

We say that a transduction which outputs linear preorders has the \emph{homogeneous ordered approximation property} if it satisfies the ordered approximation property (see Lemma~\ref{lem:technical-condition-ordered}), together with the following extra condition: 
\begin{enumerate}
    \item[(4)] Homogeneity. All the images are equal:
    \begin{align*}
    \lambda(\powerset X_1) = \cdots = \lambda(\powerset X_n),
    \end{align*}
    and furthermore the above image is an idempotent in the powerset semigroup $\powerset S$.
\end{enumerate}
Later, in Section~\ref{sec:ordered-ensuring-homo}, we will use the Factorisation Forest Theorem to ensure the above strengthening. The following lemma shows how it can be used to prove  sub-definability.

    \begin{lemma}\label{lem:use-ordered-approximation} If a transduction has the homogeneous ordered approximation property, then it  is sub-definable.
\end{lemma}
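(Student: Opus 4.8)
# Proof Proposal for Lemma~\ref{lem:use-ordered-approximation}

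The plan is to follow the same strategy as in the unordered case (Lemma~\ref{lem:use-ordered-approximation}'s analogue, Lemma~\ref{lem:use-unordered-approximation}), adapting the notion of ``seed'' to account for the linear order on the classes. Let $\beta$ be a transduction with the homogeneous ordered approximation property, with semigroup $S$, threshold $k$, and formula $\varphi$ as given. Fix a pair $(\structa,\le) \in \beta$ with classes $X_1 \le \cdots \le X_n$ and colouring $\lambda$. As in the unordered proof, I would first record that, without loss of generality, $\lambda$ tracks whether each set is full or empty in its class, and that by the homogeneity condition all classes share a common full value and a common empty value. The crucial difference from the unordered case is that the homogeneity condition now demands that the common image $\lambda(\powerset X_i)$ be an \emph{idempotent} in the powerset semigroup $\powerset S$; this idempotence is precisely what compensates for the non-commutativity of $S$, since it guarantees that inserting or deleting full/empty blocks between two positions does not change the relevant product.

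The key definition is again a \emph{seed}: a set $Y \subseteq \structa$ that satisfies $\varphi$, is full in at least one class, and empty in at least one class. I would choose a seed $Y_0$ cutting a maximal number of classes; by approximate soundness this is at most $k$, so $Y_0$ has a bounded number of blocks. The \emph{special classes} are the classes cut by $Y_0$, together with one fixed full class and one fixed empty class. The main technical claim is the ordered analogue of Claim~\ref{claim:seed-separator}: two elements lying in non-special classes are in different classes if and only if they can be separated by some seed that agrees with $Y_0$ on all special classes. The forward direction uses maximality of $Y_0$ exactly as before — a seed agreeing with $Y_0$ on special classes cannot cut any additional class. For the converse I would take the set that agrees with $Y_0$ on the special classes, is full on the class of one element and empty on the class of the other, and empty (or full, matching $Y_0$) elsewhere; I must verify this is a seed, i.e. that it still satisfies $\varphi$. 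Here is where idempotence of the common image in $\powerset S$ is used: the product $s_1 \cdots s_n$ controlling membership in $\varphi$ is unchanged when we replace a consecutive run of full (resp. empty) classes by fewer such classes, because the image being idempotent means the product over any nonempty run of classes with a full/empty assignment is a fixed idempotent element, independent of how many classes are in the run and of their position.

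Granting the claim, sub-definability follows exactly as in the unordered case: the transduction guesses the seed $Y_0$ (a single subset) together with the pattern of $Y_0$ on the bounded number of special classes, then recovers the class partition via the separation criterion, and finally recovers the linear \emph{order} on classes from the document preorder inherited from the input, which is definable since $\beta$ is rank-invariant and hence sub-definable in the converse direction by Theorem~\ref{thm:from-trees}. The degenerate cases where there are at most two classes, or where $n$ is bounded, can be guessed directly by the transduction.

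I expect the main obstacle to be the converse direction of the separation claim, specifically verifying that the candidate separating set is genuinely a seed. Unlike the unordered case — where commutativity of $S$ made the product insensitive to the arrangement of full and empty blocks — here the ordering matters, so the argument must isolate exactly how idempotence of the powerset image rescues the computation. Concretely, the subtle point is that the product $s_1 \cdots s_n$ determining $\varphi$ reads the classes left to right, and I must argue that collapsing or expanding monochromatic runs (all-full or all-empty consecutive classes) preserves this product; the idempotence clause in the homogeneity condition is the precise hypothesis that makes this go through, and getting this bookkeeping right is the delicate part of the proof.
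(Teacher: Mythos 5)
Your proposal correctly identifies the role of idempotence in the powerset semigroup (it is what lets one stretch or compress monochromatic runs of full/empty classes without changing the product $s_1\cdots s_n$), and your seed machinery for recovering the \emph{partition} into classes mirrors the unordered case faithfully. But there is a genuine gap: the output of the local ordered transduction is a linear \emph{preorder}, and your argument only reconstructs the underlying equivalence relation. Your separation criterion --- two elements are in different classes iff some good seed separates them --- is symmetric in $x$ and $y$, so it cannot tell $x<y$ from $y<x$. The step where you propose to ``recover the linear order on classes from the document preorder inherited from the input'' does not make sense: the input to $\beta$ is a general structure (a restriction $\structa|X$ of the original input), not a tree, so there is no document preorder to inherit; and the appeal to Theorem~\ref{thm:from-trees} gives sub-definability of the \emph{converse} (tree-to-$\classc$) direction, which is not what is needed here --- defining the order in the $\classc$-to-tree direction is precisely the statement being proved.

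The paper's proof spends almost all of its effort on exactly this missing piece. It chooses the maximal seed by a different criterion (maximal number of full-or-empty \emph{blocks}, not cut classes), normalises it so that it has exactly two long blocks --- one full on the left, one empty on the right --- with only cut blocks between them; good seeds then consist of a movable ``island'' of cut blocks sitting between a full prefix and an empty suffix. This asymmetric shape yields an asymmetric criterion: for elements $x,y$ in non-special classes that are at least $k$ classes apart, $x<y$ iff some good seed contains $x$ but not $y$. Since ``at least $k$ classes apart'' cannot be checked without already knowing the order, the paper replaces it by the modulo condition that the indices differ by $d$ modulo $2d$ (checkable with \cmso counting after guessing a colouring), defines the resulting partial comparison relation, extracts the $d$-fold and $(d{+}1)$-fold successor relations from it, combines them into the $1$-fold successor, and closes under transitivity to get the full order. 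None of this is present in, or easily supplied from, your sketch, so the proposal as written does not establish the lemma.
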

\begin{proof}
Let the formula $\varphi$, the semigroup $S$ and the constant $k$ be as in the ordered approximation property.

    Consider a pair $(\structa,\le)$ in the transduction, with the classes being $X_1,\ldots,X_n$, listed according to the order, and let $\lambda$ be the function from the ordered approximation property. Define the full and empty values for each class in the same way as in the unordered case. Using the same argument as in the unordered case, see Claim~\ref{claim:same-full-empty-values}, we can assume that all the full values are the same, all the empty values are the same, and these two values are idempotents. 

    Similarly to the unordered case, the crucial notion will be that of \emph{seed}. The ordered version of seed is defined just as before: it is a subset $Y \subseteq \structa$ that  satisfies $\varphi$, is full in at least one class, and is empty in at least one class. If a set is a seed, then it has at most $k$ blocks, like any set that satisfies $\varphi$, by the approximate soundness condition.

    Define a \emph{long block} to be a block that has at least two classes. Define a \emph{maximal seed} to be one that has a maximal number of blocks that are full or empty among all seeds.
    \begin{claim}\label{claim:good-seed}
        Assume that there are at least two classes.
        One can choose a maximal seed $Y_0$ so that it has exactly two long blocks, one of them full, one of them empty, and between these two blocks all other blocks are cut blocks, as in the following picture:
        \mypic{19}
    \end{claim}
    \begin{proof} Blocks that are full or empty can be stretched or compressed without affecting the image under $\lambda$, thanks to the homogeneity assumption and the fact that the full and empty values are idempotent.
    Pick a seed $Y_0$ and pick two blocks, one of them full, one of them empty, such that there are only cut-blocks between them. Then stretch them until all other full or empty blocks are reduced to size one. 
    \end{proof}

    Consider the seed $Y_0$ from the above claim. By symmetry, we assume from now on that the left block is full, and the right long block is empty. Define the \emph{special classes} to be all  the following classes: up to and including the first class in the left long block, and symmetrically, from the last class in the right long block until the end.  Here is a picture: 
    \mypic{20}
    Since the number of special classes is at most  the number of blocks, there is a bounded number of special classes.  

    Define a \emph{good seed} to be one that agrees with $Y_0$ on the special classes. By maximality, a good seed has at most two long blocks, which are separated by several cut blocks (we call these cut blocks the \emph{island}), as in the following picture: 
    \mypic{21}
    Furthermore, the island can be moved around at will, with the two long blocks stretching to accommodate it, by the assumption on homogeneity. Since the island has at most $k$ classes, we get the following claim. 

    \begin{claim}\label{claim:order-on-elements-with-good-colour-distance}
        Consider two elements $x$ and $y$ that are  in non-special classes, and which are separated by at least $k$ classes. Then  $x < y$ if and only if 
         there is some good seed $Y$ that  contains $x$ but not $y$. 
    \end{claim}

In the claim above, we use the assumption that the two elements are separated by at least $k$ classes. If we do not know the order, then it is difficult to check this assumption. To work around this problem, we will work with a stronger condition,  which can be checked by a transduction. This condition will use modulo counting, as described below.
    Define the \emph{index} of an element of the universe to be the number of its class in the order of equivalence classes. The index belongs to $\set{1,\ldots,n}$, where $n$ is the number of classes. A sufficient condition for having distance at least $d$ is that the indices differ by $d$ modulo $2d$. Therefore, a corollary of Claim~\ref{claim:order-on-elements-with-good-colour-distance} is the following claim.
    
    \begin{claim}\label{claim:colour-distance}  Let $d \ge k$.
        Consider two elements $x$ and $y$ that are in non-special classes, and whose indices differ by $d$ modulo $2d$.  Then  $x < y$ if and only if there is some good seed $Y$ that  contains $x$ but not $y$. 
    \end{claim}
    
    In Claims~\ref{claim:good-seed}--\ref{claim:colour-distance} we were working with a fixed pair $(\structa,\le)$, and the notions of seed, special class, good seed,  and colour were relative to that pair. 
    We now put together the observations so far, with the pair $(\structa,\le)$ no longer being fixed, to show that the relation from the previous claim can be produced as outputs  by a sub-definable transduction.
    \begin{claim}\label{claim:colour-distance-transduction} Let $d \ge k$.  There is  a sub-definable transduction $\gamma$, such that for every $(\structa,\le) \in \beta$, one of the outputs in $\gamma(\structa)$ is the binary relation 
\begin{align*}
    \setbuild{(x,y) \in \structa^2}{$x < y$ and the indices differ  by $d$  modulo $2d$}.
\end{align*}
    \end{claim}
    \begin{proof}
        The seed $Y_0$ and the special classes can be guessed by the transduction.  The remaining notions are definable in \mso.
    \end{proof}

    From the relation in the output of the above transduction, we can recover the $d$-fold successor relation defined by
    \begin{align*}
    \text{index of $x$} + d = \text{index of $y$}.
    \end{align*}
    This is because $y$ is the $d$-fold successor of $x$ if and only if the pair $(x,y)$ belongs to the relation produced in  Claim~\ref{claim:colour-distance-transduction}, and there is no $z$ such that both $(x,z)$ and $(z,y)$ are in this relation.   Once we have the $d$-fold successor and $(d+1)$-fold successor, we can get the $1$-fold successor, i.e.~the usual notion of successor. From the successor, we get the order by transitivity, thus finishing the proof of the lemma. 
\end{proof}

\subsubsection{Ensuring homogeneity}
\label{sec:ordered-ensuring-homo}
We now complete the proof of Lemma~\ref{lem:ordered-case}, by bridging the gap between the ordered approximation property in the conclusion of Lemma~\ref{lem:technical-condition-ordered} and the homogeneous variant in the assumption of Lemma~\ref{lem:use-ordered-approximation}. The proof is the same as in Section~\ref{sec:unordered-ensuring-homo}, except that the refinement step is more involved. (A linear preorder is seen as a partially ordered tree of height two, in the same way as for equivalence relations, except that the root is ordered.)
We only state the refinement step, which is given in Claim~\ref{claim:refine-ordered}, and which is the ordered version of Claim~\ref{claim:refine-unordered}. 

        \begin{claim}\label{claim:refine-ordered}
            Let $\beta$ be a transduction that outputs linear preorders, and has the unordered approximation property.
            There is a transduction $\gamma$ that refines $\beta$, has outputs of bounded height, and such that the local ordered transduction of $\gamma$ has the homogeneous ordered approximation property.  
        \end{claim}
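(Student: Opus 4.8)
The plan is to superimpose a bounded-height factorisation forest on the linear order of classes, using the semigroup from the ordered approximation property as the alphabet. Fix a pair $(\structa,\le)$ of $\beta$ with classes $X_1,\ldots,X_n$ listed according to the order, and let $\varphi$, the finite semigroup $S$, the threshold $k$ and the function $\lambda\colon\powerset X_1\uplus\cdots\uplus\powerset X_n\to S$ be the data provided by Lemma~\ref{lem:technical-condition-ordered}. I would work in the powerset semigroup $\powerset S$, where $A\cdot B=\setbuild{ab}{a\in A,\ b\in B}$, and associate to each class the element $T_i:=\lambda(\powerset X_i)\in\powerset S$. Applying the Factorisation Forest Theorem of Simon to the word $T_1\cdots T_n$ over the finite semigroup $\powerset S$ produces a factorisation forest of height bounded by a constant depending only on $|\powerset S|$, hence only on $S$, in which every internal node is either binary, or idempotent, meaning that it has at least two children all carrying a common value $E\in\powerset S$ with $E\cdot E=E$, and the node itself carries $E$.

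I would read this forest as the output tree of $\gamma$: the internal forest nodes become ordered nodes, ordered left to right following the order of the classes, while each forest leaf $X_i$ becomes an unordered node whose children are the elements of $X_i$. Because every internal node spans a consecutive run of classes, the document order on leaves coincides with $\le$, the leaves are unchanged, and the root, the classes and the elements of the preorder tree all survive as nodes; thus the new tree refines $(\structa,\le)$. Its height exceeds that of the forest by one, so it is bounded independently of the input, and letting the construction range over all factorisation forests of all pairs of $\beta$ defines the transduction $\gamma$. Note that the only unordered nodes are the class nodes, whose children are singletons, so the local unordered transduction of $\gamma$ is trivially sub-definable.

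It remains to equip the local ordered transduction of $\gamma$ with a homogeneous ordered approximation scheme at the idempotent nodes. Consider an ordered node $X$ spanning classes $X_a,\ldots,X_b$ with children $Z_1,\ldots,Z_p$, where $Z_j$ is the union of the classes in a sub-interval $I_j$. I keep the same formula $\varphi$ (``the rank of $Y$ is at most $\ell$''), the same semigroup $S'=S$, and set $\lambda'(W)=\prod_{i\in I_j}\lambda(W\cap X_i)$ for $W\subseteq Z_j$. Then $\prod_j\lambda'(Y\cap Z_j)=\prod_{i\in[a,b]}\lambda(Y\cap X_i)$, and after padding with the fixed empty values of the classes outside $X$ this product determines membership in $\varphi$ by the original recognisability, through a per-pair accepting set; this gives recognisability. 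Completeness holds because an interval of the child preorder is a union of consecutive classes, hence an interval of $\le$, which has rank at most $\ell$ in $\structa$ and therefore in $\structa|X$ by Lemma~\ref{lem:rank_in_substructure}. Approximate soundness holds because coarsening a linear preorder cannot increase the number of blocks, so a set with many blocks for the child preorder has many blocks for $\le$, hence large rank by Lemma~\ref{lem:cut-for-orders} and the original soundness. Finally, at an idempotent node all children carry the common idempotent value $E$, so $\lambda'(\powerset Z_1)=\cdots=\lambda'(\powerset Z_p)=E$ with $E$ idempotent in $\powerset S$, which is precisely the homogeneity condition.

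The one thing the forest does not give for free, and which I expect to be the main obstacle, is the binary nodes: their two children may carry distinct values, so homogeneity genuinely fails there. The remedy is that a binary node is an ordered node with only two children, so its contribution to the local ordered transduction outputs linear preorders with at most two classes, and any transduction of this shape is sub-definable outright, since the order on two classes is recovered by guessing a single colour. As the idempotent nodes are handled by the homogeneous ordered approximation property together with Lemma~\ref{lem:use-ordered-approximation}, and sub-definability is closed under union, the whole local ordered transduction of $\gamma$ is sub-definable. Combined with the trivially sub-definable local unordered transduction, Lemma~\ref{lem:sufficient-condition} yields that $\gamma$ is sub-definable, and hence so is $\beta$ after collapsing the added forest levels, mirroring the conclusion of the unordered case in Claim~\ref{claim:refine-unordered}.
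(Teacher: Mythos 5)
Your proposal is correct and is exactly the argument the paper intends: the paper's entire proof of this claim is the single sentence ``an application of the Factorisation Forest Theorem of Imre Simon,'' and you have reconstructed the missing details --- applying Simon's theorem to the word $T_1\cdots T_n$ over the powerset semigroup $\powerset S$ with $T_i=\lambda(\powerset X_i)$, reading the resulting bounded-height forest as the refined output tree, and checking completeness, approximate soundness, recognisability and homogeneity for the induced local data. You also rightly flag the one point the paper's one-liner glosses over, namely that the binary nodes of the factorisation forest are not homogeneous, and your fix (an ordered node with boundedly many classes yields a local transduction that is sub-definable outright, so only the idempotent nodes need to go through Lemma~\ref{lem:use-ordered-approximation}) is the natural and correct way to close that gap.
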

        \begin{proof}
            An application of the Factorisation Forest Theorem of Imre Simon~\cite[Theorem 6.1]{simonFactorizationForestsFinite1990}.
        \end{proof}

\end{document}